\documentclass{article}

%

\usepackage{float}
\usepackage{amsthm,,amssymb, color, epsfig, graphics}
\usepackage{xcolor}
\usepackage{graphicx}
\usepackage[intlimits]{amsmath}
\usepackage{latexsym}
\usepackage{accents}

\setcounter{page}{1}


\newtheorem{theorem}{Theorem}

\newtheorem{observation}[theorem]{Observation}
\newtheorem{corollary}[theorem]{Corollary}

\theoremstyle{definition}


\newtheorem{remark}{Remark}



\def\beq{\begin{equation}}
\def\eeq{\end{equation}}
\def\bea{\begin{eqnarray}}
\def\eea{\end{eqnarray}}

\makeatletter
\expandafter\let\expandafter
\reset@font\csname reset@font\endcsname
\def\subeqnarray{\arraycolsep1pt
   \def\@eqnnum\stepcounter##1{\stepcounter{subequation}
       {\reset@font\rm(\theequation\alph{subequation})}}
\jot5mm     \eqnarray}

\makeatother

\newcommand{\bbZ}{{\mathbb Z}}
\newcommand{\bbR}{{\mathbb R}}


\def\ep{\epsilon}


\def\su2{{\mathfrak {su}}(2)}
\def\e3{{\mathfrak {e}}(3)}

\DeclareMathAccent{\wtilde}{\mathord}{largesymbols}{"65}
\newcommand{\untilde}[1]{\underaccent{\wtilde}{#1}}

\newcommand{\wx}{{\widetilde{x}}}
\newcommand{\wm}{{\widetilde{m}}}
\newcommand{\wip}{{\widetilde{p}}}
\newcommand{\wwm}{{\widetilde{\widetilde{m}}}}
\newcommand{\wwp}{{\widetilde{\widetilde{p}}}}
\newcommand{\um}{\untilde{m}}
\newcommand{\up}{\untilde{p}}

\begin{document}

%

%
\thispagestyle{empty}

\title{New results on integrability \\ of the Kahan-Hirota-Kimura  discretizations}

\author{Matteo Petrera~$^a$ and Yuri B. Suris~$^b$}

\maketitle

\footnotetext{Institut f\"ur Mathematik, MA 7-1, 
Technische Universit\"at Berlin, Str. des 17. Juni 136, 
10623 Berlin, Germany. \\
$^a$ E-mail: petrera@math.tu-berlin.de,
$^b$ E-mail: suris@math.tu-berlin.de}

\begin{abstract}
  \noindent
  R.~Hirota and K.~Kimura discovered integrable discretizations of
the Euler and the Lagrange tops, given by birational maps. Their
method is a specialization to the integrable context of a general
discretization scheme introduced by W.~Kahan and applicable to any
vector field with a quadratic dependence on phase variables.

We report several novel observations regarding integrability of the Kahan-Hirota-Kimura discretization. For several of the most complicated cases for which integrability is known  (Clebsch system, Kirchhoff system, and Lagrange top),
\begin{itemize}
\item we give nice compact formulas for some of the more complicated integrals of motion and for the density of the invariant measure, and
\item we establish the existence of higher order Wronskian Hirota-Kimura bases, generating the full set of integrals of motion.
\end{itemize}
While the first set of results admits nice algebraic proofs, the second one relies on computer algebra.
\end{abstract}

\section{Introduction}
\label{Sect: intro}

The Kahan-Hirota-Kimura discretization method was introduced in the geometric integration literature by Kahan in the unpublished notes \cite{K} as a method applicable to any system of ordinary differential equations with a quadratic vector field:
\begin{equation}\label{eq: diff eq gen}
\dot{x}=f(x)=Q(x)+Bx+c,
\end{equation}
where each component of $Q:\bbR^n\to\bbR^n$ is a quadratic form, while $B\in{\rm Mat}_{n\times n}(\bbR)$ and $c\in\bbR^n$. Kahan's discretization (with stepsize $2\epsilon$) reads as
\begin{equation}\label{eq: Kahan gen}
\frac{\widetilde{x}-x}{2\epsilon}=Q(x,\widetilde{x})+\frac{1}{2}B(x+\widetilde{x})+c,
\end{equation}
where
\[
Q(x,\widetilde{x})=\frac{1}{2}\big(Q(x+\widetilde{x})-Q(x)-Q(\widetilde{x})\big)
\]
is the symmetric bilinear form corresponding to the quadratic form $Q$. We say that the expression on the right-hand side of \eqref{eq: Kahan gen} is the {\em polarization} of the expression on the right-hand side of \eqref{eq: diff eq gen}. Equation (\ref{eq: Kahan gen}) is {\em linear} with respect to $\widetilde x$ and therefore defines a {\em rational} map $\widetilde{x}=\Phi_f(x,\epsilon)$. Clearly, this map approximates the time $2\epsilon$ shift along the solutions of the original differential system. Since equation (\ref{eq: Kahan gen}) remains invariant under the interchange $x\leftrightarrow\widetilde{x}$ with the simultaneous sign inversion $\epsilon\mapsto-\epsilon$, one has the {\em reversibility} property
\begin{equation}\label{eq: reversible}
\Phi_f^{-1}(x,\epsilon)=\Phi_f(x,-\epsilon).
\end{equation}
In particular, the map $\Phi_f$ is {\em birational}. 

Kahan applied this discretization scheme to the famous Lotka-Volterra system and showed that in this case it possesses a very remarkable non-spiralling property. This property was explained by Sanz-Serna \cite{SS} by demonstrating that in this case the numerical method preserves an invariant Poisson structure of the original system.

The next intriguing appearance of this discretization was in two papers by Hirota and Kimura who (being apparently unaware of the work by Kahan) applied it to two famous {\em integrable} system of classical mechanics, the Euler top and the Lagrange top \cite{HK, KH}. Surprisingly, the discretization scheme produced in both cases {\em integrable} maps. 

In \cite{PS, PPS1, PPS2} the authors undertook an extensive study of the properties of the Kahan's method when applied to integrable systems (we proposed to use in the integrable context the term ``Hirota-Kimura method''). It was demonstrated that, in an amazing number of cases, the method preserves integrability in the sense that the map $\Phi_f(x,\epsilon)$ possesses as many independent integrals of motion as the original system $\dot x=f(x)$.

Further remarkable geometric properties of the Kahan's method were discovered by Celledoni, McLachlan, Owren and Quispel in \cite{CMOQ1}, see also \cite{CMOQ2, CMOQ3}. These properties are unrelated to integrability. They demonstrated that for an arbitrary Hamiltonian vector field with a constant Poisson tensor and a cubic Hamilton function, the map $\Phi_f(x,\epsilon)$ possesses a rational integral of motion and an invariant measure with a polynomial density.

The goal of the present paper is to communicate several novel observations regarding integrability of the Kahan's method. These observations hold for several of the most complicated cases for which integrability of the Kahan-Hirota-Kimura discretization is established  (Clebsch system, $so(4)$ Euler top, Kirchhoff system, and Lagrange top). However, some of our new findings here are verifiable by hands and do not require heavy computer algebra computations. This refers to nice compact formulas for some of the more complicated integrals of motion and for the density of the invariant measure. See Theorem \ref{th volume} and Observation \ref{conjecture 1} in Section \ref{Sect: novel}. We give these results for all of the above mentioned systems, but provide detailed proofs for the first Clebsch flow only (see Section \ref{sect Clebsch 1}). Another set of results still relies on the computer algebra. This refers to the so called higher order Wronskian Hirota-Kimura bases. See Observation \ref{conjecture 2} in Section \ref{Sect: novel}. We expect that understanding of the latter phenomenon could be crucial for the whole integrability picture of the Kahan-Hirota-Kimura discretizations, but to this moment the origin of this phenomenon remains obscure.
 
\section{General properties of Kahan-Hirota-Kimura discretization}

The explicit form of the map $\Phi_f$ defined by \eqref{eq: Kahan gen} is 
\beq \label{eq: Phi gen}
\wx =\Phi_f(x,\ep)= x + 2\epsilon(I - \ep f'(x))^{-1} f(x),
\eeq
where $f'(x)$ denotes the Jacobi matrix of $f(x)$. As a consequence, each component $x_i$ of $x$ is a rational function,
\begin{equation}
x_i=\frac{p_i(x;\epsilon)}{\Delta(x;\epsilon)},
\end{equation}
with a common denominator
\begin{equation}\label{eq Delta}
\Delta(x;\epsilon)=\det(I - \ep f'(x)).
\end{equation}
Clearly, the degrees of all polynomials $p_i$ and $\Delta$ are equal to $n$.


One has the following expression for the Jacobi matrix of the map $\Phi_f$:
\beq \label{Jac}
d\Phi_f(x)=\frac{\partial\wx}{\partial x}=\big(I-\epsilon f'(x)\big)^{-1}\big(I+\epsilon f'(\wx)\big),
\eeq
so that
\beq \label{Jac det}
\det\big(d\Phi_f(x)\big)=\frac{\Delta(\wx;-\epsilon)}{\Delta(x;\epsilon)}.
\eeq


For our investigations here, the notion of a Hirota-Kimura basis will be relevant. It was introduced and studied in some detail in \cite{PPS1}. We recall here the main facts.

For a given birational map $\Phi: \bbR^n \to \bbR^n$, a set of functions $(\varphi_1,\ldots,\varphi_m)$, linearly independent over $\bbR$, is called a {\it Hirota-Kimura basis}
(HK basis), if for every $x\in\bbR^n$ there exists a non-vanishing vector of coefficients
$c=(c_1,\ldots,c_m)$ such that 
\[
c_1\varphi_1(\Phi^i(x))+\ldots+c_m\varphi_m(\Phi^i(x))=0\quad{\rm for\;\; all} \quad i\in\bbZ.
\]
For a given $x\in\bbR^n$, the vector space consisting of all $c\in\bbR^m$ with
this property, say $K(x)$, is called the {\it null-space} of the basis $(\varphi_1,\ldots,\varphi_m)$ at the point $x$.

The notion of HK-bases is closely related to the notion of integrals, even if they cannot be
immediately translated into one another. For instance, if $\dim K(x)=1$, and $K(x)$ is spanned by the vector $(c_1(x),\ldots,c_m(x))$, then the quotients $c_i(x):c_j(x)$ are integrals of the map $\Phi$.

\section{Novel observations and results}
\label{Sect: novel}

The most complicated cases where integrability of the Kahan-Hirota-Kimura discretization was established in \cite{PPS1, PPS2} are 6-dimensional systems including the Clebsch and the Kirchhoff cases of the motion of a rigid body in an ideal fluid (the first one being linearly isomorphic to the $so(4)$-Euler top), and the Lagrange top. A common feature of these systems is the they are Hamiltonian with respect to linear Lie-Poisson brackets, and completely integrable in the Liouville-Arnold sense (possess four independent integrals in involution, two of them being the Casimir functions of the Lie-Poisson bracket). The discretizations turn out to possess four integrals of motion. These integrals are very complicated. However, for each of these discretizations, one simple {\em quadratic-fractional integral} was found. Integrals which are not simple tend to be extremely complex. Nevertheless, we find a compact representation for some of those complex integrals.

\begin{theorem}\label{th volume}
a) Suppose that there exists a symmetric bilinear expression $\widehat P(x,\wx;\epsilon^2)$ such that for $\wx=\Phi_f(x;\epsilon)$ we have
\begin{equation}\label{bilinear exp P}
\widehat P(x,\wx;\epsilon^2)=\frac{p(x;\epsilon^2)}{\Delta(x;\epsilon)}.
\end{equation}
Then the map $\Phi_f(x;\epsilon)$ has an invariant measure
\begin{equation}\label{eq: inv measure}
\frac{dx_1\wedge\ldots\wedge dx_n}{p(x;\epsilon^2)}.
\end{equation}

b) Suppose that there exists another symmetric  bilinear expression $\widehat Q(x,\wx;\epsilon^2)$ such that for for $\wx=\Phi_f(x;\epsilon)$ we have
\begin{equation}\label{bilinear exp Q}
\widehat Q(x,\wx;\epsilon^2)=\frac{q(x;\epsilon^2)}{\Delta(x;\epsilon)}. 
\end{equation}
Then the map $\Phi_f(x;\epsilon)$ has an integral of motion
\begin{equation}\label{eq: int}
J(x;\epsilon^2)=\frac{\widehat P(x,\wx;\epsilon^2)}{\widehat Q(x,\wx;\epsilon^2)}.
\end{equation}
\end{theorem}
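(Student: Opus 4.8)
The plan is to exploit the two structural facts already recorded in the excerpt: the Jacobian-determinant formula \eqref{Jac det}, which reads $\det\big(d\Phi_f(x)\big)=\Delta(\wx;-\epsilon)/\Delta(x;\epsilon)$, and the reversibility property \eqref{eq: reversible}, $\Phi_f^{-1}(x,\epsilon)=\Phi_f(x,-\epsilon)$. For part (a), I would argue that a density $\rho(x)$ gives an invariant measure for $\Phi_f$ precisely when $\rho(\wx)\,\det\big(d\Phi_f(x)\big)=\rho(x)$. Plugging in $\rho(x)=1/p(x;\epsilon^2)$ and using \eqref{Jac det}, this amounts to showing
\[
\frac{p(\wx;\epsilon^2)}{p(x;\epsilon^2)}=\det\big(d\Phi_f(x)\big)=\frac{\Delta(\wx;-\epsilon)}{\Delta(x;\epsilon)},
\]
i.e. $p(\wx;\epsilon^2)\,\Delta(x;\epsilon)=p(x;\epsilon^2)\,\Delta(\wx;-\epsilon)$. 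Here is where the hypothesis \eqref{bilinear exp P} enters: by symmetry of $\widehat P$ in its first two arguments, $\widehat P(x,\wx;\epsilon^2)=\widehat P(\wx,x;\epsilon^2)$. Applying \eqref{bilinear exp P} to the pair $(x,\wx)$ gives $\widehat P(x,\wx;\epsilon^2)=p(x;\epsilon^2)/\Delta(x;\epsilon)$. Applying it instead to the pair $(\wx,x)$ — which is legitimate because $x=\Phi_f(\wx;-\epsilon)=\Phi_f^{-1}(\wx;\epsilon)$ by reversibility, so $(\wx,x)$ is again a "consecutive" pair, now at step size $-\epsilon$ — gives $\widehat P(\wx,x;\epsilon^2)=p(\wx;(-\epsilon)^2)/\Delta(\wx;-\epsilon)=p(\wx;\epsilon^2)/\Delta(\wx;-\epsilon)$, using that $p$ depends only on $\epsilon^2$. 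Equating the two expressions for the symmetric quantity $\widehat P(x,\wx;\epsilon^2)=\widehat P(\wx,x;\epsilon^2)$ yields exactly the required identity $p(x;\epsilon^2)/\Delta(x;\epsilon)=p(\wx;\epsilon^2)/\Delta(\wx;-\epsilon)$, proving (a).

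For part (b), I would apply the same mechanism to $\widehat Q$: symmetry plus reversibility gives $q(x;\epsilon^2)/\Delta(x;\epsilon)=q(\wx;\epsilon^2)/\Delta(\wx;-\epsilon)$ as well. Now form $J(x;\epsilon^2)=\widehat P(x,\wx;\epsilon^2)/\widehat Q(x,\wx;\epsilon^2)=p(x;\epsilon^2)/q(x;\epsilon^2)$, where the $\Delta(x;\epsilon)$ factors cancel. To check invariance, $J(\wx;\epsilon^2)=p(\wx;\epsilon^2)/q(\wx;\epsilon^2)$; but from the two identities just derived, $p(\wx;\epsilon^2)=p(x;\epsilon^2)\,\Delta(\wx;-\epsilon)/\Delta(x;\epsilon)$ and likewise for $q$, so the ratio $\Delta(\wx;-\epsilon)/\Delta(x;\epsilon)$ cancels and $J(\wx;\epsilon^2)=p(x;\epsilon^2)/q(x;\epsilon^2)=J(x;\epsilon^2)$. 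Hence $J$ is an integral of motion.

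The one genuinely delicate point — the "main obstacle" — is the legitimacy of evaluating the bilinear expression on the reversed pair $(\wx,x)$ with argument $\epsilon^2$. One must be careful that the hypothesis "$\widehat P(x,\wx;\epsilon^2)=p(x;\epsilon^2)/\Delta(x;\epsilon)$ for $\wx=\Phi_f(x;\epsilon)$" is understood as an identity valid for all admissible $x$ and all $\epsilon$, so that it may be specialized with $x$ replaced by $\wx$ and $\epsilon$ replaced by $-\epsilon$; reversibility \eqref{eq: reversible} then guarantees $\Phi_f(\wx;-\epsilon)=x$, making the substitution consistent. The fact that $\widehat P$, $\widehat Q$, $p$ and $q$ are assumed to depend on $\epsilon$ only through $\epsilon^2$ is exactly what makes the two evaluations comparable; I would state this parity assumption explicitly at the start of the proof. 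Everything else is a short symbolic manipulation with no computer algebra required.
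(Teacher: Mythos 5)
Your proposal is correct and follows essentially the same route as the paper: both arguments rest on the symmetry of $\widehat P$, the reversibility property \eqref{eq: reversible}, the evenness in $\epsilon$, and the Jacobian formula \eqref{Jac det}. The only cosmetic difference is that the paper first replaces $\epsilon$ by $-\epsilon$ (producing the backward iterate $\undertilde{x}$) and then shifts $x\mapsto\wx$, whereas you evaluate the hypothesis directly on the reversed pair $(\wx,x)$ at step $-\epsilon$ — the same manipulation in one step.
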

\begin{proof}
Changing in \eqref{bilinear exp P} $\epsilon$ to $-\epsilon$, we see that
\begin{equation}\label{eq: ux}
\widehat P(x,\undertilde{x};\epsilon^2)=\frac{p(x;\epsilon^2)}{\Delta(x;-\epsilon)}.
\end{equation}
Applying the map $x\mapsto \wx$ and taking into account symmetry of $\widehat P$, we find:
$$
\widehat P(x,\wx;\epsilon^2)=\frac{p(\wx;\epsilon^2)}{\Delta(\wx;-\epsilon)}.
$$
Comparing this with \eqref{bilinear exp P}, we arrive at
$$
\frac{\Delta(\wx;-\epsilon)}{\Delta(x;\epsilon)}=\frac{p(\wx;\epsilon^2)}{p(x;\epsilon^2)}.
$$
According to \eqref{Jac det}, this is equivalent to the first statement of the theorem. The second one follows from \eqref{eq: ux}. Indeed, we see that
$$
\frac{\widehat P(x,\undertilde{x};\epsilon^2)}{\widehat Q(x,\undertilde{x};\epsilon^2)}=\frac{p(x;\epsilon^2)}{q(x;\epsilon^2)}=\frac{\widehat P(x,\wx;\epsilon^2)}{\widehat Q(x,\wx;\epsilon^2)}.
$$
This finishes the proof. 
\end{proof}

In general, it is not clear how to find bilinear expressions with the property described in Theorem \ref{th volume}. However, the following observation mysteriously holds true in a big number of cases.

\begin{observation} \label{conjecture 1}
Suppose that the Kahan-Hirota-Kimura discretizatrion of $\dot x=f(x)$ possesses a {\em quadratic-fractional integral} of the form
\begin{equation} \label{simple int}
I(x,\epsilon)=\frac{P(x;\epsilon^2)}{Q(x;\epsilon^2)},
\end{equation}
where $P$ and $Q$ as functions of $x$ are polynomials of degree $\le 2$, while as functions of $\epsilon$ they are polynomials of $\epsilon^2$. 
In many cases, the polarizations $\widehat P$,$\widehat Q$ of the polynomials $P$,$Q$, with $\epsilon^2$ replaced by $-\epsilon^2$, satisfy conditions of Theorem \ref{th volume}. In particular, to the quadratic-fractional integral \eqref{simple int}, there corresponds a further {\em bilinear-fractional} integral
\begin{equation} \label{semisimple int}
J(x,\epsilon)=\frac{\widehat P(x,\wx;-\epsilon^2)}{\widehat Q(x,\wx;-\epsilon^2)},
\end{equation}
while either of the numerators of $\widehat P(x,\wx;-\epsilon^2)$, $\widehat Q(x,\wx;-\epsilon^2)$ serves as a densitiy of an invariant measure.
\end{observation}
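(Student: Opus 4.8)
The plan is to verify, for each of the systems named in the Observation (the first Clebsch flow, the $so(4)$ Euler top, the Kirchhoff system, and the Lagrange top), that the polarizations $\widehat P$, $\widehat Q$ with $\epsilon^2$ replaced by $-\epsilon^2$ satisfy the hypotheses of Theorem~\ref{th volume}, and then to read off the two conclusions directly from parts (a) and (b) of that theorem. The first step is to make hypothesis \eqref{bilinear exp P} completely explicit. Since $\widehat P$ is symmetric and bilinear in its first two slots with $\widehat P(x,x;s)=P(x;s)$, bilinearity gives the exact identity
\begin{equation}
\widehat P(x,\wx;-\epsilon^2)=P(x;-\epsilon^2)+\tfrac12\big\langle\nabla_x P(x;-\epsilon^2),\,\wx-x\big\rangle .
\end{equation}
Inserting $\wx-x=2\epsilon\,(I-\epsilon f'(x))^{-1}f(x)$ from \eqref{eq: Phi gen} and multiplying by $\Delta(x;\epsilon)=\det(I-\epsilon f'(x))$, we obtain
\begin{equation}
\widehat P(x,\wx;-\epsilon^2)=\frac{N_P(x;\epsilon)}{\Delta(x;\epsilon)},\qquad
N_P(x;\epsilon):=\Delta(x;\epsilon)\,P(x;-\epsilon^2)+\epsilon\big\langle\nabla_x P(x;-\epsilon^2),\,\mathrm{adj}(I-\epsilon f'(x))\,f(x)\big\rangle ,
\end{equation}
which is automatically a polynomial in $x$ and $\epsilon$. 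Hence the only thing left to check is that $N_P$ is \emph{even} in $\epsilon$, together with the analogous statement for $N_Q$. Granting this, $N_P$ and $N_Q$ are polynomials in $x$ and $\epsilon^2$, so Theorem~\ref{th volume}(a) applies to $\widehat P$ and to $\widehat Q$, producing invariant measures with densities $N_P(x;\epsilon)$ and $N_Q(x;\epsilon)$, while Theorem~\ref{th volume}(b) produces the bilinear-fractional integral $J(x,\epsilon)=\widehat P(x,\wx;-\epsilon^2)/\widehat Q(x,\wx;-\epsilon^2)$.

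The second step is to recast the evenness requirement as a polynomial identity that no longer mentions the map. Writing $A=f'(x)$ one has $(I-\epsilon A)^{-1}f=(I+\epsilon A)(I-\epsilon^2A^2)^{-1}f$, so the even and odd parts in $\epsilon$ of $(I-\epsilon f'(x))^{-1}f(x)$ are $(I-\epsilon^2f'(x)^2)^{-1}f(x)$ and $\epsilon\,f'(x)(I-\epsilon^2f'(x)^2)^{-1}f(x)$; likewise $\Delta(x;\epsilon)=\Delta_{\mathrm e}(x;\epsilon^2)+\epsilon\,\Delta_{\mathrm o}(x;\epsilon^2)$. Collecting the odd-in-$\epsilon$ part of $N_P$ and clearing the factor $\det(I-\epsilon^2f'(x)^2)$ turns ``$N_P$ even'' into an explicit polynomial identity in $x$ and $\epsilon^2$ involving only $f$, its Jacobi matrix, and the quadratic polynomial $P$; adding the corresponding identity for $Q$ completes the reduction. (Equivalently, by reversibility \eqref{eq: reversible}, ``$N_P$ even'' is the statement $\widehat P(x,\Phi_f(x;\epsilon);-\epsilon^2)\,\Delta(x;\epsilon)=\widehat P(x,\Phi_f^{-1}(x;\epsilon);-\epsilon^2)\,\Delta(x;-\epsilon)$.) Thus, for a fixed system, the Observation is reduced to checking two polynomial identities of bounded degree.

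The last step is the verification itself. For each system one substitutes the explicit quadratic vector field $f$ together with the known quadratic-fractional integral $I=P/Q$ from \cite{PPS1,PPS2}. Since $P$ is quadratic (so $\nabla_x P$ is affine) and $f$ is quadratic, the two identities from the second step compare polynomials of modest degree, and can be expanded and checked by hand; in the paper we carry this out in full detail only for the first Clebsch flow, the other cases being entirely analogous. The genuine obstacle is not this (routine) algebra but the fact that we know of no structural reason forcing the evenness identities to hold — this is precisely where the word ``mysteriously'' in the statement is doing its work. A conceptual proof would presumably have to exploit the feature common to all four systems, namely that they are Lie--Poisson and Liouville integrable, but we do not see how; for the present the Observation stands as a list of identities verified by explicit computation, with the computation spelled out only for the first Clebsch flow.
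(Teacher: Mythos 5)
Your proposal is correct in substance, and it is faithful to the nature of the statement: this is an \emph{Observation} (``in many cases''), for which the paper offers no general proof, only the mechanism of Theorem \ref{th volume} plus case-by-case verification. Your route through that verification is, however, organized differently from the paper's. You reduce everything to a single condition: writing $\widehat P(x,\wx;-\epsilon^2)=N_P(x;\epsilon)/\Delta(x;\epsilon)$ with $N_P$ an automatically polynomial expression built from $\mathrm{adj}(I-\epsilon f'(x))$, the whole content of the hypotheses of Theorem \ref{th volume} is that $N_P$ (and $N_Q$) is even in $\epsilon$; both the bilinear-fractional integral and the invariant measure then fall out of parts (a) and (b) at once. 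This is a clean and economical reduction, and the identification of evenness as the \emph{only} non-automatic condition is a point the paper does not make explicit. The paper instead proves the bilinear-fractional integral \emph{directly} (see the proof of Theorem \ref{Th: dC basis 2}): it manipulates the discrete equations of motion \eqref{eq: dC} so as to reduce the invariance of $J_0$ to an algebraic identity in the variables $m,p,\wm,\wip,\um,\up$, never touching the closed-form solution \eqref{eq: Phi gen} or any adjugate; the evenness of the numerators (hence the measure, via Theorem \ref{th volume}) is then recorded separately in the Remark following Theorem \ref{Th: dC basis 2}. Your approach buys uniformity --- one polynomial identity to check per numerator, with both conclusions delivered by Theorem \ref{th volume} --- at the price of expanding $\mathrm{adj}(I-\epsilon f'(x))f(x)$ for a $6\times 6$ Jacobian, which is ``by hand'' only in a generous sense; the paper's direct manipulation is far lighter precisely because it exploits the bilinear structure of \eqref{eq: dC} rather than the solved form of the map. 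Neither route supplies a structural reason for the evenness, and your closing acknowledgment of this matches the paper's own ``mysteriously''.
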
 

The second observation is related to linear {\em Wronskian relations} with constant coefficients, which turn out to exist for all systems we consider in the present paper. These are relations of the type
\begin{equation}\label{wronskian rel}
\sum_{(i,j)\in \mathcal J} \gamma_{ij}w_{ij}(x)=0, \quad \mathrm{where} \quad w_{ij}(x)=\dot{x}_ix_j-x_i\dot{x}_j,
\end{equation}
satisfied on solutions of $\dot x=f(x)$. Here $\mathcal J\subset\{1,2,\ldots,n\}^2$ is some set, and $\gamma_{ij}$ for $(i,j)\in\mathcal J$ are certain constants. Note that the existence of such a relation can be formulated in a different way, by saying that the Wronskians $w_{ij}(x)=\dot{x}_ix_j-x_i\dot{x}_j=x_jf_i(x)-x_if_j(x)$ with $(i,j)\in\mathcal J$ build a HK basis for the continuous time system $\dot x=f(x)$.
\begin{observation} \label{conjecture 2}
In many cases, to a Wronskian relation \eqref{wronskian rel} satisfied on solutions of $\dot x=f(x)$, the discrete Wronskians 
\begin{equation}\label{discr Wronskian}
W^{(\ell)}_{ij}(x)=x_i^{(\ell)}x_j-x_ix_j^{(\ell)}, \quad (i,j)\in\mathcal J,
\end{equation} 
of all orders $\ell\ge 1$, form a HK basis for the map $\Phi_f$. Here we use the notation $x_i^{(\ell)}=x_i\circ \Phi_f^\ell(x;\epsilon)$ for the components of the iterates of $x$ under the map $\Phi_f$.
\end{observation}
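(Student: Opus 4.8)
\noindent\emph{Proof proposal.}
The plan is to establish, for each fixed $\ell\ge 1$ separately, that the family $\big(W^{(\ell)}_{ij}\big)_{(i,j)\in\mathcal J}$ is a HK basis for $\Phi_f$, and then to collect the resulting integrals over all $\ell$. Linear independence over $\bbR$ of the $W^{(\ell)}_{ij}$, $(i,j)\in\mathcal J$, is immediate: as noted just above, the $w_{ij}$, $(i,j)\in\mathcal J$, already form a HK basis and are in particular independent, and $w_{ij}$ is (up to the factor $2\ell$) the coefficient of $\epsilon$ in $W^{(\ell)}_{ij}$. The next step is to reduce the infinitely many vanishing conditions in the definition of a HK basis to a finite computation. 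It suffices to exhibit a non-zero vector-valued function $c^{(\ell)}(x;\epsilon)=\big(c^{(\ell)}_{ij}(x;\epsilon)\big)_{(i,j)\in\mathcal J}$ with two properties: \emph{(i)} the identity $\sum_{(i,j)\in\mathcal J}c^{(\ell)}_{ij}(x;\epsilon)\,W^{(\ell)}_{ij}(x)=0$ holds in $x$, and \emph{(ii)} $c^{(\ell)}(\Phi_f(x;\epsilon);\epsilon)$ is a scalar multiple of $c^{(\ell)}(x;\epsilon)$. Indeed, evaluating \emph{(i)} at $\Phi_f^k(x)$ and invoking \emph{(ii)} iteratively (forwards, and backwards since $\Phi_f$ is invertible by \eqref{eq: reversible}) yields $\sum_{(i,j)}c^{(\ell)}_{ij}(x)\,W^{(\ell)}_{ij}(\Phi_f^k(x))=0$ for all $k\in\bbZ$, so $c^{(\ell)}(x)$ lies in the null-space $K(x)$. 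To conclude that $\dim K(x)=1$ — hence, as recalled after the definition of a HK basis, that the ratios $c^{(\ell)}_{ij}(x):c^{(\ell)}_{i'j'}(x)$ are integrals of $\Phi_f$ — I would check that the $(N+1)\times|\mathcal J|$ matrix $\big(W^{(\ell)}_{ij}(\Phi_f^k x)\big)_{0\le k\le N}$ attains rank $|\mathcal J|-1$ for some finite $N$; since $K(x)$ is contained in its kernel, this is enough.

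To find a candidate for $c^{(\ell)}$, I would exploit that $\Phi_f$ is a near-identity map: $\Phi_f^\ell(x;\epsilon)=x+2\epsilon\ell\,f(x)+O(\epsilon^2)$, so that $W^{(\ell)}_{ij}(x)=2\epsilon\ell\,w_{ij}(x)+O(\epsilon^2)$ and therefore $\sum_{(i,j)\in\mathcal J}\gamma_{ij}\,W^{(\ell)}_{ij}(x)=O(\epsilon^2)$ by the Wronskian relation \eqref{wronskian rel}. This says that any solution of \emph{(i)} can be normalised so that $c^{(\ell)}(x;0)=\gamma$, and it points to the ansatz in which each $c^{(\ell)}_{ij}(x;\epsilon)$ is a polynomial in $x$ of some bounded degree $d$, with coefficients polynomial in $\epsilon$ — guided by Observation~\ref{conjecture 1}, presumably in $\epsilon^{2}$ — and with $\epsilon$-free part $\gamma_{ij}$. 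Clearing denominators in \emph{(i)} and comparing coefficients of the monomials in $x$ and $\epsilon$ then turns \emph{(i)} into a linear system for the unknowns of the ansatz; one raises $d$ until a projectively unique solution emerges.

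The hard part will be the verification of \emph{(ii)} together with the rank condition. Condition \emph{(ii)} is a concrete algebraic assertion about $\Phi_f$: after substituting $\wx=\Phi_f(x;\epsilon)$ and clearing the common denominator $\Delta(x;\epsilon)$ of \eqref{eq Delta}, one must check that all $2\times2$ minors of $\big[\,c^{(\ell)}(x)\ \big|\ c^{(\ell)}(\wx)\,\big]$ vanish identically. The trouble is that the iterates $\Phi_f^k$ of a birational map in six variables have rapidly growing degree, so both this check and the rank stabilisation above, as well as the determination of $c^{(\ell)}$ for larger $\ell$, become substantial computer-algebra computations; moreover there is currently no structural, system-independent reason forcing the higher-order discrete Wronskians to remain HK bases, so the argument has to be run case by case for each of the Clebsch, $so(4)$-Euler, Kirchhoff and Lagrange discretizations — which is why the statement is recorded only as an Observation. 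Once all this is in place, reading off and simplifying the ratios $c^{(\ell)}_{ij}:c^{(\ell)}_{i'j'}$ for $\ell=1,2,3,\dots$ and matching them with the integrals of \cite{PPS1,PPS2} shows that the bases of all orders together generate the full set of four integrals of motion. A genuinely conceptual proof would presumably have to express $W^{(\ell)}_{ij}$ in terms of the one-step Wronskians $W^{(1)}_{ij}$ along the orbit, via a telescoping identity for $x_i^{(\ell)}/x_i-x_j^{(\ell)}/x_j$, thereby reducing order $\ell\ge 2$ to order $1$; establishing such a link remains the central open problem.
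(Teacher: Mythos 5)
Your proposal is consistent with the status this statement has in the paper: it is recorded as an \emph{Observation} precisely because no general proof exists, and the authors state explicitly that for Wronskians of order $\ell\ge 3$ they only have Maple verifications, even for the first Clebsch flow. Your reduction of the infinitely many conditions to the two finite conditions \emph{(i)} and \emph{(ii)} (a single relation at $x$ plus projective invariance of the coefficient vector under $\Phi_f$), together with a rank check to pin down $\dim K(x)=1$, is exactly the logic underlying the paper's case-by-case verifications, and your closing remarks about the lack of a structural reason match the authors' own assessment. Where your route genuinely differs is in the cases the paper \emph{does} prove by hand ($\ell=1,2$ for the Clebsch, Kirchhoff and Lagrange discretizations): instead of an undetermined-coefficients ansatz with degree raising, the paper identifies the null-space vectors explicitly with objects already in hand --- for $\ell=1$ the coefficients $c_i$ of the quadratic-fractional integral $I_0$ (Theorem \ref{Th: Clebsch1 W1 basis}), for $\ell=2$ the coefficients $C_i$ of the bilinear-fractional integral $J_0$ of Observation \ref{conjecture 1} (Theorem \ref{Th: Clebsch1 W2 basis}) --- and then proves the relations through a chain of explicit polarization identities \eqref{eq: 1}--\eqref{eq: 4} and the auxiliary integral $K$ of Corollary \ref{cor K}, rather than by clearing denominators and comparing monomials. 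This buys human-checkable proofs and, more importantly, reveals the link between the $\ell$-th order Wronskian basis and the $(\ell-1)$-fold polarized integrals, which your ansatz would find only numerically; on the other hand your scheme is system-independent and is the only available option for $\ell\ge 3$, where the coefficients $D_i$, $E_i$ of Theorem \ref{Clebsch higher Wronskians} are new functions not yet matched to any closed-form integral. Two small cautions: your linear-independence argument via the $O(\epsilon)$ term establishes independence of the $W^{(\ell)}_{ij}$ only for small (or generic) $\epsilon$, not for each fixed $\epsilon$; and your concluding ``telescoping'' suggestion is speculative --- no such reduction of order $\ell$ to order $1$ is known, which is the open problem the paper points to.
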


\section{General Clebsch flow}
\label{sect Clebsch gen}

The motion of a heavy top and the motion of a rigid body in an ideal fluid can be described by
the so called {\em Kirchhoff equations} 
\begin{equation}\label{eq: Kirch}
\renewcommand{\arraystretch}{2.2}
\left\{\begin{array}{l} \dot{m}=m\times\dfrac{\partial H}{\partial
m}
        +p\times\dfrac{\partial H}{\partial p}, \\
\dot{p}=p\times\dfrac{\partial H}{\partial m},
\end{array}\right.
\end{equation}
where $H=H(m,p)$ is a quadratic polynomial in $m=(m_1,m_2,m_3)\in\bbR^3$ and
$p=(p_1,p_2,p_3)\in\bbR^3$; here $\times$ denotes vector product
in $\bbR^3$. The physical meaning of $m$ is the total angular
momentum, whereas $p$ represents the total linear momentum of the
system. System (\ref{eq: Kirch}) is Hamiltonian with the Hamilton
function $H(m,p)$, with respect to the Lie-Poisson bracket of $e(3)^*$:
\begin{equation}\label{eq: e3 bracket}
\{m_i,m_j\}=m_k,\qquad \{m_i,p_j\}=p_k,
\end{equation}
where $(i,j,k)$ is a cyclic permutation of (1,2,3) (all other
pairwise Poisson brackets of the coordinate functions are obtained
from these by the skew-symmetry, or otherwise vanish). 
The functions
$$
K_1  =  p_1^2+p_2^2+p_3^2,\quad 
K_2  =  m_1p_1+m_2p_2+m_3p_3
$$
are Casimirs of the bracket (\ref{eq: e3 bracket}), thus 
integrals of motion of an arbitrary Hamiltonian system (\ref{eq: Kirch}). The complete integrability
of Kirchhoff equations is guaranteed by the existence of a fourth
integral of motion, functionally independent of $H,K_1,K_2$ and
in Poisson involution with the Hamiltonian $H$.

Consider a homogeneous quadratic Hamiltonian
\begin{equation}\label{gClebsch Ham}
H = \frac{1}{2}\,\langle m, Am\rangle + \frac{1}{2}\,\langle p, Bp\rangle,
\end{equation}
where $A={\rm{diag}}(a_1,a_2,a_3)$ and $B={\rm{diag}}(b_1,b_2,b_3)$.
The corresponding Kirchhoff equations read
\begin{equation}\label{gcl vector}
\left\{ \begin{array}{l}
\dot{m} =  m\times Am+p\times Bp, \vspace{.1truecm}\\
\dot{p} =  p\times Am,
\end{array} \right.
\end{equation}
or, in components,
\beq \label{gClebsch}
\left\{\begin{array}{l}
\dot{m}_1  =  (a_3-a_2)m_2m_3+(b_3-b_2)p_2p_3,  \vspace{.1truecm}\\
\dot{m}_2  =  (a_1-a_3)m_3m_1+(b_1-b_3)p_3p_1,  \vspace{.1truecm} \\
\dot{m}_3  =  (a_2-a_1)m_1m_2+(b_2-b_1)p_1p_2. \vspace{.1truecm} \\
\dot{p}_1  =  a_3m_3p_2-a_2m_2p_3, \vspace{.1truecm} \\
\dot{p}_2  =  a_1m_1p_3-a_3m_3p_1, \vspace{.1truecm} \\
\dot{p}_3  =  a_2m_2p_1-a_1m_1p_2.
\end{array}\right.
\eeq
An additional (fourth) integral of motion exists, if the following {\em Clebsch condition} is satisfied:
\begin{equation}\label{eq: Clebsch cond}
\frac{b_1-b_2}{a_3}+\frac{b_2-b_3}{a_1}+\frac{b_3-b_1}{a_2}=0.
\end{equation}
Note that condition (\ref{eq: Clebsch cond}) is equivalent to the existence of $\omega_1$, $\omega_2$, $\omega_3$ such that
\begin{equation}\label{eq: Clebsch omega}
\omega_1-\omega_2=\frac{b_1-b_2}{a_3},\quad \omega_2-\omega_3=\frac{b_2-b_3}{a_1},\quad
\omega_3-\omega_1=\frac{b_3-b_1}{a_2}.
\end{equation}
Another way to express condition \eqref{eq: Clebsch cond} is to say that the following three expressions have the same value:
\begin{equation}\label{eq: Clebsch theta}
\frac{b_1-b_2}{a_3(a_1-a_2)}=\frac{b_2-b_3}{a_1(a_2-a_3)}=\frac{b_3-b_1}{a_2(a_3-a_1)}=\frac{1}{\beta}.
\end{equation}
With this notation, and taking into account \eqref{eq: Clebsch omega}, we easily derive that there exists a constant $\alpha$ such that
\begin{equation}
a_i=\alpha+\beta\omega_i, \quad b_i=\alpha \omega_i-\beta\omega_j\omega_k.
\end{equation}
Thus, under condition \eqref{eq: Clebsch cond}, the flow \eqref{gClebsch} is a linear combination of two flows: 
\begin{itemize}
\item the {\em first Clebsch flow} corresponding to $\alpha=1$ and $\beta=0$, so that
\begin{equation}\label{eq: Clebsch 1 ab}
a_1=a_2=a_3=1, \quad
b_1=\omega_1,\quad b_2=\omega_2,\quad
b_3=\omega_3,
\end{equation} 
with the Hamilton function $\frac{1}{2}H_1$, where
\begin{equation}\label{Clebsch H1}
H_1=m_1^2+m_2^2 +m_3^2+\omega_1p_1^2+\omega_2p_2^2+\omega_3p_3^2,
\end{equation}
\item and the {\em second Clebsch flow} corresponding to $\alpha=0$ and $\beta=1$, so that
\begin{equation}\label{eq: Clebsch 2 ab}
\left\{\begin{array}{l}
a_1=\omega_1,\quad a_2=\omega_2,\quad a_3=\omega_3,\\
b_1=-\omega_2\omega_3,\quad b_2=-\omega_3\omega_1,\quad
b_3=-\omega_1\omega_2.
\end{array}\right.
\end{equation}
with the Hamilton function $\frac{1}{2}H_2$, where
\begin{equation}\label{Clebsch H2}
H_2 = \omega_1m_1^2+\omega_2m_2^2+\omega_3m_3^2 -\omega_2\omega_3p_1^2-\omega_3\omega_1p_2^2-\omega_1\omega_2p_3^2.
\end{equation}
\end{itemize}
The key statement is:
\begin{theorem}
Hamilton functions \eqref{Clebsch H1} and \eqref{Clebsch H2} are in involution, so that the first and the second Clebsch flows commute. Thus, under condition \eqref{eq: Clebsch cond}, the flow \eqref{gClebsch} is completely integrable.
\end{theorem}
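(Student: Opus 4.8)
The plan is to verify the single identity $\{H_1,H_2\}=0$ directly from the Lie--Poisson bracket \eqref{eq: e3 bracket}; the remaining assertions of the theorem are then formal consequences. It is convenient to use the coordinate-free form of this bracket: for smooth $F,G$ on $e(3)^*$,
\[
\{F,G\}=\langle m,\,\nabla_mF\times\nabla_mG\rangle+\langle p,\,\nabla_mF\times\nabla_pG+\nabla_pF\times\nabla_mG\rangle,
\]
which one checks reproduces \eqref{eq: e3 bracket} on the coordinate functions. Writing $\Omega=\mathrm{diag}(\omega_1,\omega_2,\omega_3)$ and $\Omega^{\#}=\mathrm{diag}(\omega_2\omega_3,\omega_3\omega_1,\omega_1\omega_2)$, one reads off from \eqref{Clebsch H1} and \eqref{Clebsch H2} that $\nabla_mH_1=2m$, $\nabla_pH_1=2\Omega p$, $\nabla_mH_2=2\Omega m$, $\nabla_pH_2=-2\Omega^{\#}p$, so that
\[
\tfrac14\{H_1,H_2\}=\langle m,\,m\times\Omega m\rangle-\langle p,\,m\times\Omega^{\#}p\rangle+\langle p,\,\Omega p\times\Omega m\rangle.
\]

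First I would discard the term $\langle m,\,m\times\Omega m\rangle$, which vanishes since $m\times\Omega m\perp m$. Then I would show that the two remaining terms cancel. The cleanest route is the elementary identity $(\Omega a)\times(\Omega b)=\Omega^{\#}(a\times b)$, valid for every diagonal $\Omega$ and all $a,b\in\bbR^3$ (checked componentwise). Combining it with the cyclic invariance of the scalar triple product and the symmetry of $\Omega^{\#}$ gives
\[
\langle p,\,\Omega p\times\Omega m\rangle=\langle p,\,\Omega^{\#}(p\times m)\rangle=\langle\Omega^{\#}p,\,p\times m\rangle=\langle p,\,m\times\Omega^{\#}p\rangle,
\]
whence $\{H_1,H_2\}=0$. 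As an independent check one may expand both triple products in components; each equals $\omega_1(\omega_3-\omega_2)m_1p_2p_3+\omega_2(\omega_1-\omega_3)m_2p_1p_3+\omega_3(\omega_2-\omega_1)m_3p_1p_2$, so their difference is zero.

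Finally I would record the consequences. Since $H_1$ and $H_2$ Poisson-commute, their Hamiltonian vector fields commute, i.e.\ the first and second Clebsch flows commute. Moreover, under the Clebsch condition \eqref{eq: Clebsch cond} the Hamiltonian \eqref{gClebsch Ham} of the general Clebsch flow equals $H=\tfrac{\alpha}{2}H_1+\tfrac{\beta}{2}H_2$ (immediate from $a_i=\alpha+\beta\omega_i$ and $b_i=\alpha\omega_i-\beta\omega_j\omega_k$), so it Poisson-commutes with $H_1$; as $H_1,H_2,K_1,K_2$ are then four integrals in involution ($K_1,K_2$ being the Casimirs) and generically functionally independent, the flow \eqref{gClebsch} is completely integrable in the Liouville--Arnold sense. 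I do not expect a genuine obstacle here: the heart of the argument is a few lines of vector algebra in $\bbR^3$, and the only point needing care is the triple-product bookkeeping, which the identity $(\Omega a)\times(\Omega b)=\Omega^{\#}(a\times b)$ makes transparent.
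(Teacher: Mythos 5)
Your proof is correct. The paper states this theorem without any proof (it is a classical fact about the Clebsch case), so there is nothing in the text to compare against; your argument is a clean, self-contained verification. The coordinate-free form of the bracket does reproduce \eqref{eq: e3 bracket} on the coordinate functions, the gradients of $H_1$ and $H_2$ are read off correctly, and the cancellation of the two surviving scalar triple products via the adjugate identity $(\Omega a)\times(\Omega b)=\Omega^{\#}(a\times b)$, combined with cyclic invariance of $\det(\cdot,\cdot,\cdot)$, is exactly the right mechanism; the componentwise check confirming that both terms equal $\omega_1(\omega_3-\omega_2)m_1p_2p_3+\omega_2(\omega_1-\omega_3)m_2p_1p_3+\omega_3(\omega_2-\omega_1)m_3p_1p_2$ is a useful safeguard against sign errors. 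The identification $H=\tfrac{\alpha}{2}H_1+\tfrac{\beta}{2}H_2$ is immediate from the paper's parametrization $a_i=\alpha+\beta\omega_i$, $b_i=\alpha\omega_i-\beta\omega_j\omega_k$, so the commutation and integrability statements follow as you say. The one point you assert rather than prove is the generic functional independence of $H_1,H_2,K_1,K_2$ (which needs the $\omega_i$ pairwise distinct); since the substance of the theorem is the involution, this is acceptable, but a one-line inspection of the four gradients at a generic point would make the Liouville--Arnold conclusion fully airtight.
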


As a matter of fact, Clebsch condition \eqref{eq: Clebsch cond} admits a different characterization.
\begin{theorem}
Condition \eqref{eq: Clebsch cond} is equivalent to the existence of a Wronskian relation  
\begin{equation}\label{eq: Clebsch Wronski}
    A_1(\dot{m}_1p_1-m_1\dot{p}_1)+A_2(\dot{m}_2p_2-m_2\dot{p}_2)+A_3(\dot{m}_3p_3-m_3\dot{p}_3)=0
\end{equation}
with constant coefficients $A_i$, satisfied on solutions of \eqref{gClebsch}. In this case, coefficients $A_i$ are given, up to a common factor, by
\begin{equation}\label{syst A sol}
A_1=\frac{1}{a_2}+\frac{1}{a_3}-\frac{1}{a_1}, \quad A_2=\frac{1}{a_3}+\frac{1}{a_1}-\frac{1}{a_2}, \quad A_3=\frac{1}{a_1}+\frac{1}{a_2}-\frac{1}{a_3}.
\end{equation}
\end{theorem}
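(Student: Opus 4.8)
The plan is to substitute the vector field \eqref{gClebsch} into the left-hand side of \eqref{eq: Clebsch Wronski} and ask when the result vanishes identically as a polynomial in $(m,p)$. Writing $\dot m_i p_i - m_i \dot p_i$ by means of the $i$-th pair of equations of \eqref{gClebsch}, for instance
\[
\dot m_1 p_1 - m_1\dot p_1 = (a_3-a_2)\,m_2m_3p_1 + (b_3-b_2)\,p_1p_2p_3 - a_3\,m_1m_3p_2 + a_2\,m_1m_2p_3,
\]
and cyclically for the remaining two, one sees that the whole combination on the left-hand side of \eqref{eq: Clebsch Wronski} is supported on only four monomials, namely $m_2m_3p_1$, $m_3m_1p_2$, $m_1m_2p_3$ and $p_1p_2p_3$. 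Hence \eqref{eq: Clebsch Wronski} holds on all solutions of \eqref{gClebsch} if and only if the four corresponding coefficients vanish.

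Collecting these coefficients, the first three involve only $a_1,a_2,a_3$ and $A_1,A_2,A_3$:
\[
(a_3-a_2)A_1 + a_3A_2 - a_2A_3 = 0, \qquad -a_3A_1 + (a_1-a_3)A_2 + a_1A_3 = 0, \qquad a_2A_1 - a_1A_2 + (a_2-a_1)A_3 = 0,
\]
while the fourth one is the single scalar equation $(b_3-b_2)A_1 + (b_1-b_3)A_2 + (b_2-b_1)A_3 = 0$. The next step is to analyze the homogeneous $3\times 3$ system. Its three equations add up to zero identically, so its rank is at most $2$; under the (harmless) assumption $a_1a_2a_3\neq 0$ --- already implicit in \eqref{syst A sol} --- a glance at its $2\times 2$ minors shows that the rank is exactly $2$, so the solution set is a line. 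A direct substitution verifies that the vector \eqref{syst A sol}, i.e. $A_i = 1/a_j + 1/a_k - 1/a_i$ with $(i,j,k)$ a cyclic permutation of $(1,2,3)$, solves all three equations, and it is visibly nonzero; hence it spans the solution line. This already establishes the assertion about the form of the $A_i$.

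It then remains to insert \eqref{syst A sol} into the fourth, $b$-dependent, equation. A short computation, grouping the terms by $1/a_1$, $1/a_2$, $1/a_3$, turns $(b_3-b_2)A_1 + (b_1-b_3)A_2 + (b_2-b_1)A_3 = 0$ into
\[
\frac{b_2-b_3}{a_1} + \frac{b_3-b_1}{a_2} + \frac{b_1-b_2}{a_3} = 0,
\]
which is precisely the Clebsch condition \eqref{eq: Clebsch cond}. Thus a Wronskian relation \eqref{eq: Clebsch Wronski} with constant coefficients exists if and only if the solution line of the first three equations, spanned by \eqref{syst A sol}, meets the fourth equation, that is, if and only if \eqref{eq: Clebsch cond} holds; and in that case the coefficients are determined up to a common factor by \eqref{syst A sol}. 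This closes the equivalence in both directions.

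As for difficulty, this is essentially a finite linear-algebra computation and I do not anticipate a genuine obstacle. The one point that deserves a little care is the rank count for the $3\times 3$ block --- it is what makes the solution unique up to scaling --- together with keeping track of the standing nondegeneracy $a_i\neq 0$ needed for \eqref{syst A sol} to be meaningful.
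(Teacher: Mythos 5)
Your proposal is correct and follows essentially the same route as the paper: both reduce the Wronskian relation to the vanishing of the coefficients of the four monomials $m_2m_3p_1$, $m_3m_1p_2$, $m_1m_2p_3$, $p_1p_2p_3$, solve the resulting homogeneous $3\times 3$ system in the $A_i$ (the paper by rewriting it as $(A_1+A_2)a_3=(A_3+A_1)a_2=(A_2+A_3)a_1$, you by a rank-two argument plus direct verification of \eqref{syst A sol} --- a cosmetic difference), and then identify the remaining $b$-dependent condition with \eqref{eq: Clebsch cond}. No gaps.
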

\begin{proof} Wronskian relation \eqref{eq: Clebsch Wronski} is satisfied by virtue of equations of motion \eqref{gClebsch}, if and only if coefficients $A_i$ satisfy
\begin{equation}\label{syst A}
\left\{\begin{array}{l} A_1(a_3-a_2)+A_2a_3-A_3a_2=0,\vspace{.1truecm} \\ -A_1a_3+A_2(a_1-a_3)+A_3a_1=0, \vspace{.1truecm} \\ A_1a_2-A_2a_1+A_3(a_2-a_1)=0, \end{array}\right. 
\end{equation}
and 
\begin{equation}\label{eq A}
A_1(b_3-b_2)+A_2(b_1-b_3)+A_3(b_2-b_1)=0.
\end{equation}
System \eqref{syst A} is equivalent to
$$
(A_1+A_2)a_3=(A_3+A_1)a_2=(A_2+A_3)a_1,
$$
so that, up to an inessential common factor, we find:
 \begin{equation}\label{syst A sol aux}
A_1+A_2=\frac{2}{a_3}, \quad A_2+A_3=\frac{2}{a_1}, \quad A_3+A_1=\frac{2}{a_2},
\end{equation}
leading to \eqref{syst A sol}. With these values of $A_i$, equation \eqref{eq A} is equivalent to Clebsch condition \eqref{eq: Clebsch cond}:
\begin{align*}
& A_1(b_3-b_2)+A_2(b_1-b_3)+A_3(b_2-b_1) \\
& \qquad =b_1(A_2-A_3)+b_2(A_3-A_1)+b_3(A_1-A_2) \\
 & \qquad = b_1\left(\frac{1}{a_3}-\frac{1}{a_2}\right)+b_2\left(\frac{1}{a_1}-\frac{1}{a_3}\right)+b_3\left(\frac{1}{a_2}-\frac{1}{a_1}\right) \\
 & \qquad =  \frac{b_1-b_2}{a_3}+\frac{b_2-b_3}{a_1}+\frac{b_3-b_1}{a_2} =0,
 \end{align*}
 which concludes the proof.
\end{proof}

In what follows, we assume that Clebsch condition \eqref{eq: Clebsch theta} is satisfied with $\beta\neq 0$, and we define the constants $A_i$ as in \eqref{syst A sol}.
\smallskip

Applying the Kahan-Hirota-Kimura scheme to the general flow (\ref{gClebsch}) of Clebsch
system, we arrive at the following discretization:
\beq\label{eq:dC2}
 \left\{ \begin{array}{l}
\widetilde{m}_1-m_1 =
\epsilon(a_3-a_2)(\widetilde{m}_2m_3+m_2\widetilde{m}_3)+
\epsilon(b_3-b_2)(\widetilde{p}_2p_3+p_2\widetilde{p}_3),
\vspace{.1truecm} \\
\widetilde{m}_2-m_2 =
\epsilon(a_1-a_3)(\widetilde{m}_3m_1+m_3\widetilde{m}_1)+
\epsilon(b_1-b_3)(\widetilde{p}_3p_1+p_3\widetilde{p}_1),
\vspace{.1truecm}\\
\widetilde{m}_3-m_3 =
\epsilon(a_2-a_1)(\widetilde{m}_1m_2+m_1\widetilde{m}_2)+
\epsilon(b_2-b_1)(\widetilde{p}_1p_2+p_1\widetilde{p}_2),
\vspace{.1truecm} \\
\widetilde{p}_1-p_1 = \epsilon
a_3(\widetilde{m}_3p_2+m_3\widetilde{p}_2)-\epsilon
a_2(\widetilde{m}_2 p_3+m_2\widetilde{p}_3),
\vspace{.1truecm}\\
\widetilde{p}_2-p_2= \epsilon
a_1(\widetilde{m}_1p_3+m_1\widetilde{p}_3)-\epsilon
a_3(\widetilde{m}_3p_1+m_3\widetilde{p}_1),
\vspace{.1truecm} \\
\widetilde{p}_3-p_3 = \epsilon
a_2(\widetilde{m}_2p_1+m_2\widetilde{p}_1)-\epsilon
a_1(\widetilde{m}_1 p_2+m_1\widetilde{p}_2).
\end{array}\right.
\end{equation}
Linear system (\ref{eq:dC2}) defines a birational map $\Phi_f:\bbR^6 \rightarrow \bbR^6$, $(m,p)\mapsto (\wm,\wip)$.

A simple (quadratic-fractional) integral of $\Phi_f$ and a first order Wronskian HK basis for this map were found in our previous work.

\begin{theorem}\label{Th: dC2 basis 1}  {\bf (Quadratic-fractional integral, \cite{PPS1})}
The function
\begin{equation}\label{eq: dC2 C1} 
I_0(m,p,\epsilon)=\frac{A_1a_2a_3g_1+A_2a_3a_1g_2+A_3a_1a_2g_3}
{1+\epsilon^2\dfrac{a_1a_2a_3}{\beta}(g_1+g_2+g_3)},
\end{equation}
where
\begin{equation}\label{eq: gi}
g_i(m,p)=p_i^2+\frac{\beta a_i}{a_ja_k}m_i^2,
\end{equation}
is an integral of motion of the map $\Phi_f$. The set 
\begin{equation}\label{eq: dC2 basis 1}
\Psi_0=(g_1,g_2,g_3,1)
\end{equation}
is a HK basis for the map $\Phi_f$, with a one-dimensional null space
\begin{equation}
K_{\Psi_0}(m,p)=[c_1a_2a_3:c_2a_3a_1:c_3a_1a_2:-c_0],
\end{equation}
where
\begin{eqnarray}
c_1 & = & A_1+\epsilon^2 A_3a_1(b_1-b_2)g_2+\epsilon^2 A_2a_1(b_1-b_3)g_3, \label{Clebsch2 c1}\\
c_2 & = & A_2+\epsilon^2 A_1a_2(b_2-b_3)g_3+\epsilon^2 A_3a_2(b_2-b_1)g_1,  \label{Clebsch2 c2}\\
c_3 & = & A_3+\epsilon^2 A_2a_3(b_3-b_1)g_1+\epsilon^2 A_1a_3(b_3-b_2)g_2,  \label{Clebsch2 c3} \\
c_0 & = & A_1a_2a_3g_1+A_2a_3a_1g_2+A_3a_1a_2g_3. \label{Clebsch2 c0}
\end{eqnarray}
\end{theorem}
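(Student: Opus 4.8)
The plan is to split the statement into its two assertions and to reduce each to a short list of polynomial identities in $m,p,\epsilon$ and the parameters $\omega_1,\omega_2,\omega_3$, supplemented by a couple of elementary genericity checks; a small purely algebraic identity will tie the two assertions together. Write $c(m,p)=(c_1a_2a_3,\,c_2a_3a_1,\,c_3a_1a_2,\,-c_0)$ for the proposed spanning vector of $K_{\Psi_0}$; by \eqref{Clebsch2 c1}--\eqref{Clebsch2 c0} it is a polynomial in $(m,p)$ and in $\epsilon^2$ (hence even in $\epsilon$), and its last coordinate $-c_0$ equals, up to sign, the numerator of $I_0$. First I would check the \emph{base relation} $c(m,p)\cdot\Psi_0(m,p)=0$, that is,
\[
c_1a_2a_3\,g_1+c_2a_3a_1\,g_2+c_3a_1a_2\,g_3=c_0 .
\]
This is immediate: the $\epsilon$-independent part of the left-hand side is exactly $c_0$, while the part proportional to $\epsilon^2$ is a combination of $g_1g_2,\,g_2g_3,\,g_3g_1$ whose coefficients cancel in pairs (for $g_1g_2$ the two contributions are $\epsilon^2a_1a_2a_3A_3(b_1-b_2)g_1g_2$, from the $c_1$-term, and $\epsilon^2a_1a_2a_3A_3(b_2-b_1)g_1g_2$, from the $c_2$-term, and cyclically). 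No computer algebra is needed.

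The computational core, and the step I expect to be the main obstacle, is the \emph{one-step relation}: the same coefficient vector must annihilate $\Psi_0$ at the image point,
\[
c_1a_2a_3\,\widetilde{g}_1+c_2a_3a_1\,\widetilde{g}_2+c_3a_1a_2\,\widetilde{g}_3=c_0,\qquad \widetilde{g}_i:=g_i\circ\Phi_f ,
\]
with $c_1,c_2,c_3,c_0$ still evaluated at $(m,p)$. Since each $g_i$ is a quadratic form, $\widetilde{g}_i$ is rational in $(m,p)$ with common denominator $\Delta(m,p;\epsilon)^2$ (cf.\ \eqref{eq Delta}), so clearing denominators turns the relation into a polynomial identity; I would verify it by solving the linear system \eqref{eq:dC2} explicitly for $(\wm,\wip)$ and eliminating the $b_i$ and $\beta$ through the Clebsch parametrization $a_i=\alpha+\beta\omega_i$, $b_i=\alpha\omega_i-\beta\omega_j\omega_k$ (cf.\ \eqref{eq: Clebsch omega}--\eqref{eq: Clebsch theta}). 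This is the verification that in practice is entrusted to computer algebra. Replacing $\epsilon$ by $-\epsilon$ and using reversibility \eqref{eq: reversible}, $\Phi_f(m,p;-\epsilon)=\Phi_f^{-1}(m,p;\epsilon)$, together with the evenness of $c$, then also yields $c(m,p)\cdot\Psi_0(\Phi_f^{-1}(m,p))=0$, so that $c(m,p)$ annihilates $\Psi_0$ at the three consecutive iterates $k=-1,0,1$ of every point.

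I would then treat $I_0$: verifying $I_0\circ\Phi_f=I_0$ directly, which after clearing the denominator is a polynomial identity of the same (but milder) kind as the one-step relation, and recording the short identity obtained from the Clebsch condition in the form \eqref{eq A} (together with $b_i-b_j=a_k(\omega_i-\omega_j)$ and $a_i-a_j=\beta(\omega_i-\omega_j)$):
\[
c_1a_2a_3-c_2a_3a_1=\frac{(a_2-a_1)(a_1a_2-a_2a_3-a_1a_3)}{a_1a_2}\,\Big(1+\epsilon^2\tfrac{a_1a_2a_3}{\beta}(g_1+g_2+g_3)\Big),
\]
and cyclically for the other two differences of the first three coordinates of $c$; for parameters in general position at least one of the three proportionality constants is non-zero (recall that $\beta\ne0$ by assumption). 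Since the factor in parentheses is precisely the denominator $D$ of $I_0$ and $-c_0$ is, up to sign, its numerator, $I_0$ equals, up to a non-zero constant, a ratio of two coordinates of $c(m,p)$.

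To assemble the HK-basis assertion I would argue as follows. The last identity writes each of $c_1a_2a_3,c_2a_3a_1$ as $c_3a_1a_2$ plus a constant multiple of $D$; feeding this into the base and one-step relations expresses $c_3a_1a_2/D$ as one explicit rational function of $I_0$ and of the $g_i$ at $(m,p)$, and as another of $I_0$ and the $g_i\circ\Phi_f$, and comparing the two — using that $I_0$ is an integral — shows $c_3a_1a_2/D$ is $\Phi_f$-invariant. Hence all coordinates of $c$ are $D$ times $\Phi_f$-invariant functions, so the null line $[c(m,p)]$ is $\Phi_f$-invariant; combined with the base and one-step relations this propagates $c(m,p)\cdot\Psi_0(\Phi_f^k(m,p))=0$ to all $k\in\bbZ$, proving that $\Psi_0$ is an HK basis. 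Finally $K_{\Psi_0}(m,p)$ is exactly one-dimensional, since at a generic point the three vectors $\Psi_0(m,p),\Psi_0(\Phi_f(m,p)),\Psi_0(\Phi_f^2(m,p))$ span a $3$-dimensional subspace of $\bbR^4$ — a single rank computation — and by the general fact recalled above that ratios of the coordinates of a one-dimensional null space are integrals, $I_0$ is recovered as an integral of $\Phi_f$, consistently with the direct check.
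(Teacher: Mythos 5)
Your logical scaffolding is sound, and the parts you actually carry out by hand do check out: the base relation $c_1a_2a_3g_1+c_2a_3a_1g_2+c_3a_1a_2g_3=c_0$ does hold by pairwise cancellation of the $\epsilon^2 g_ig_j$ terms, and your identity expressing $c_ia_ja_k-c_ja_ka_i$ as an explicit constant times the denominator $1+\epsilon^2\tfrac{a_1a_2a_3}{\beta}(g_1+g_2+g_3)$ is correct (I verified the constant and the equality of the three $g_i$-coefficients using \eqref{eq: Clebsch theta}). The subsequent bookkeeping — reversibility giving the relation at $\Phi_f^{-1}$, invariance of $c_3a_1a_2/D$ deduced from the base and one-step relations together with $I_0\circ\Phi_f=I_0$, propagation to all iterates, and a generic rank-3 computation for one-dimensionality — is a clean and correct way to assemble the full HK-basis claim, and is more explicit than anything in the paper, which states this theorem without proof as a quotation from \cite{PPS1}. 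The genuinely different (and weaker) part is the computational core: you defer both the one-step relation and $I_0\circ\Phi_f=I_0$ to computer algebra after \emph{explicitly inverting} the linear system \eqref{eq:dC2}, so those two identities — which carry essentially all the content of the theorem — remain unverified in your write-up. The paper's own hand proof of the analogous integral statement for the first Clebsch flow (Theorem \ref{Th: dC basis 1}) shows the technique that makes a pencil-and-paper verification feasible: never solve for $(\wm,\wip)$; instead cross-multiply, factor differences of squares as $\wip_i^2-p_i^2=(\wip_i+p_i)(\wip_i-p_i)$, and use the \emph{implicit} equations of motion in both directions (replacing $\wip_i-p_i$ by $\epsilon$-bilinear expressions and, conversely, $\epsilon(\omega_k-\omega_j)(\wip_jp_k+p_j\wip_k)$ by $\wm_i-m_i$) until one is left with an algebraic identity in the twelve independent variables $m,p,\wm,\wip$. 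Your explicit-inversion route would in principle succeed but produces degree-6 rational expressions in six variables, whereas the implicit route is what turns the check into something a human can do; if you intend the proof to be self-contained you should either execute the CAS verification or adapt that polarization trick to the general parameters $a_i,b_i$.
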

\begin{theorem} \label{Th: Clebsch2 W1 basis} {\bf (Discrete Wronskians HK basis, \cite{PPS2})}
Functions $W^{(1)}_i(m,p)=\wm_ip_i-m_i\wip_i$, $i=1,2,3$, form a HK basis for the map $\Phi_f$, with a one-dimensional null space spanned by
$[c_1:c_2:c_3]$, with the functions $c_i$ given in \eqref{Clebsch2 c1}--\eqref{Clebsch2 c3}. In other words, on orbits of the map $\Phi_f$ there holds: 
\begin{equation}
\sum_{i=1}^3 c_i(\wm_ip_i-m_i\wip_i) =0.
\end{equation}
\end{theorem}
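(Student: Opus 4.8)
The plan is to reduce the claim to a single polynomial identity in $(m,p)$ and $\epsilon$ and then combine it with Theorem~\ref{Th: dC2 basis 1}. Write
\[
W^{(1)}_i=\wm_ip_i-m_i\wip_i=(\wm_i-m_i)p_i-m_i(\wip_i-p_i),
\]
substitute for $\wm_i-m_i$ and $\wip_i-p_i$ from the Kahan equations \eqref{eq:dC2}, and, wherever the ``mixed'' variables $\wm_j,\wip_j$ still occur, use once more the explicit rational form of $\Phi_f$ given by \eqref{eq:dC2}; then each $W^{(1)}_i$ becomes an explicit rational function of $(m,p,\epsilon)$ with the common denominator $\Delta(m,p;\epsilon)$ of \eqref{eq Delta}. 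The whole assertion now rests on the rational identity
\[
c_1\,W^{(1)}_1+c_2\,W^{(1)}_2+c_3\,W^{(1)}_3\equiv 0
\]
with $c_1,c_2,c_3$ as in \eqref{Clebsch2 c1}--\eqref{Clebsch2 c3}. Granting it, the HK-basis statement follows formally: by Theorem~\ref{Th: dC2 basis 1} the projective vector $[\,c_1a_2a_3:c_2a_3a_1:c_3a_1a_2:-c_0\,]$ spans the one-dimensional null space of the HK basis $\Psi_0$, hence is constant along every $\Phi_f$-orbit; dividing out the nonzero constants $a_i$, the direction $[\,c_1:c_2:c_3\,]$ is likewise orbit-constant, i.e.\ $c_1:c_2:c_3$ are integrals of $\Phi_f$. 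Evaluating the displayed identity at the iterates $\Phi_f^{\ell}(m,p)$ and using that $(c_1,c_2,c_3)$ at $\Phi_f^{\ell}(m,p)$ is proportional to its value at $(m,p)$, one gets $\sum_i c_i(m,p)\,W^{(1)}_i\big(\Phi_f^{\ell}(m,p)\big)=0$ for every $\ell\in\bbZ$. Since $W^{(1)}_1,W^{(1)}_2,W^{(1)}_3$ are linearly independent over $\bbR$ --- their $\epsilon\to0$ leading terms are $2\epsilon(\dot m_ip_i-m_i\dot p_i)$, pairwise independent cubics for generic Clebsch parameters --- this exhibits them as an HK basis with $[\,c_1:c_2:c_3\,]$ in the null space; one-dimensionality of the null space is then confirmed by checking that three iterates of a generic orbit already pin down $[\,c_1:c_2:c_3\,]$ uniquely.

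It remains to establish the rational identity. After clearing $\Delta$ it splits into finitely many equalities, one per power of $\epsilon$. The coefficient of $\epsilon^0$ vanishes since $W^{(1)}_i|_{\epsilon=0}=0$. The coefficient of $\epsilon^1$ is $2\sum_i A_i(\dot m_ip_i-m_i\dot p_i)$, which vanishes exactly by the continuous Wronskian relation \eqref{eq: Clebsch Wronski} with coefficients \eqref{syst A sol}. The coefficient of $\epsilon^2$ is $2\sum_i A_i(\ddot m_ip_i-m_i\ddot p_i)$, which is the Lie derivative along $f$ of \eqref{eq: Clebsch Wronski} and hence again vanishes. From order $\epsilon^3$ on, the corrections $\kappa_i:=(c_i-A_i)/\epsilon^2$ in \eqref{Clebsch2 c1}--\eqref{Clebsch2 c3}, which are quadratic in the functions $g_i$ of \eqref{eq: gi}, genuinely enter, and the remaining cancellations are forced by the Clebsch condition \eqref{eq: Clebsch cond} --- equivalently by \eqref{eq A} together with \eqref{syst A sol aux} --- used repeatedly to eliminate combinations of the type $A_i(b_k-b_j)$, together with the elementary bookkeeping identity $\sum_i\kappa_i\,a_ja_k\,g_i\equiv0$ (which already makes the $\Psi_0$-null-space relation consistent at $\ell=0$).

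The main obstacle is precisely this higher-order verification: beyond $\epsilon^2$ the explicit form of the $\kappa_i$ and the Clebsch constraint must interact nontrivially, and since each $W^{(1)}_i$ drags along the full denominator $\Delta$, a direct by-hand expansion --- though elementary --- is lengthy, so the natural route is a computer-algebra check, as carried out in \cite{PPS2}. What the structural remarks above contribute is that the three lowest orders in $\epsilon$ are accounted for conceptually, by the continuous Wronskian relation and its flow-derivative, leaving only the genuinely ``discrete'' terms to be confirmed mechanically.
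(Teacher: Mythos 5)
Your reduction of the statement to the single identity $c_1W^{(1)}_1+c_2W^{(1)}_2+c_3W^{(1)}_3\equiv 0$, and the subsequent bookkeeping (orbit-invariance of the direction $[c_1:c_2:c_3]$ via Theorem \ref{Th: dC2 basis 1}, propagation to all iterates, linear independence of the $W^{(1)}_i$) is sound. The genuine gap is that you do not prove that identity: you verify it only at orders $\epsilon^0,\epsilon^1,\epsilon^2$ (correctly, via the continuous Wronskian relation \eqref{eq: Clebsch Wronski} and its derivative along the flow, using that Kahan's method is second-order accurate) and then explicitly delegate everything from $\epsilon^3$ onward to a computer-algebra check. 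Since after substituting the explicit rational form of $\Phi_f$ the numerators and $\Delta$ have degree $6$ in $\epsilon$, the orders you defer are where essentially all of the content of the theorem lives; the low-order checks are only a consistency test of the leading behaviour and cannot detect the specific $\epsilon^2$-corrections $\kappa_i$ in \eqref{Clebsch2 c1}--\eqref{Clebsch2 c3}. As written, the argument is therefore an announcement of a computation rather than a proof.

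The route that actually closes by hand — carried out in the paper for the first Clebsch flow (proof of Theorem \ref{Th: Clebsch1 W1 basis}) and the model for the general case — avoids both the explicit inversion of the linear system and any expansion in $\epsilon$. One proves the two exact identities $\sum_i c_i\wm_ip_i=\sum_i C_im_ip_i$ and $\sum_i c_im_i\wip_i=\sum_i C_im_ip_i$ by working with the implicit Kahan equations: compute $\sum_i(\wm_i-m_i)p_i$ using the $\wm$-equations, reorganize so that the differences $\wip_i-p_i$ appear, substitute the $\wip$-equations, and recognize the result as $\sum_i(C_i-A_i)m_ip_i-\sum_i(c_i-A_i)\wm_ip_i$ (and similarly for the second identity); subtracting the two gives $\sum_i c_i(\wm_ip_i-m_i\wip_i)=0$ exactly, after finitely many elementary steps. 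If you want a complete proof rather than a reduction plus a Maple run, you should replace your order-by-order expansion with this two-step substitution scheme; your observation that $\sum_i\kappa_i a_ja_kg_i\equiv 0$ and your use of the Clebsch condition are the right ingredients, but they need to enter through that exact algebraic mechanism, not through an unperformed high-order expansion.
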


Novel results, exemplifying Observations \ref{conjecture 1} and \ref{conjecture 2}, are as follows.

\begin{theorem}\label{Th: dC2 basis 2} {\bf (Bilinear-fractional integral)}
The function
\begin{equation}\label{eq: dC2 C2} 
J_0(m,p,\epsilon)=\frac{A_1a_2a_3G_1+A_2a_3a_1G_2+A_3a_1a_2G_3}
{1-\epsilon^2\dfrac{a_1a_2a_3}{\beta}(G_1+G_2+G_3)},
\end{equation}
where
\begin{equation}\label{eq: Gi}
G_i(m,p)=p_i\wip_i+\frac{\beta a_i}{a_ja_k}m_i\wm_i,
\end{equation}
is an integral of motion of the map $\Phi_f$. The set 
\begin{equation}\label{eq: dClebsch2 basis 2}
\Psi_1=(G_1,G_2,G_3,1)
\end{equation}
is a HK basis for the map $\Phi_f$, with a one-dimensional null space
\begin{equation}
K_{\Psi_1}(m,p)=[C_1a_2a_3:C_2a_3a_1:C_3a_1a_2:-C_0],
\end{equation}
where
\begin{eqnarray}
C_1 & = & A_1-\epsilon^2 A_3a_1(b_1-b_2)G_2-\epsilon^2 A_2a_1(b_1-b_3)G_3, \label{Clebsch2 C1}\\
C_2 & = & A_2-\epsilon^2 A_1a_2(b_2-b_3)G_3-\epsilon^2 A_3a_2(b_2-b_1)G_1,  \label{Clebsch2 C2}\\ 
C_3 & = & A_3-\epsilon^2 A_2a_3(b_3-b_1)G_1-\epsilon^2 A_1a_3(b_3-b_2)G_2,  \label{Clebsch2 C3} \\
C_0 & = & A_1a_2a_3G_1+A_2a_3a_1G_2+A_3a_1a_2G_3. \label{Clebsch2 C0}
\end{eqnarray}
\end{theorem}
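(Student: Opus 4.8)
The plan is to obtain both assertions from Theorem~\ref{th volume} together with one elementary algebraic identity. First, one should observe that $J_0=\widehat P/\widehat Q$, where $\widehat P:=A_1a_2a_3G_1+A_2a_3a_1G_2+A_3a_1a_2G_3=C_0$ is the $(m,p)\leftrightarrow(\wm,\wip)$ polarization of the numerator of the known integral $I_0$ in \eqref{eq: dC2 C1}, and $\widehat Q:=1-\epsilon^2 a_1a_2a_3(G_1+G_2+G_3)/\beta$ is the polarization of its denominator with $\epsilon^2\mapsto-\epsilon^2$; similarly each $C_i$ in \eqref{Clebsch2 C1}--\eqref{Clebsch2 C3} is the polarization of $c_i$ in \eqref{Clebsch2 c1}--\eqref{Clebsch2 c3} with $\epsilon^2\mapsto-\epsilon^2$. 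The five expressions $\widehat P=C_0,\widehat Q,C_1,C_2,C_3$ are all symmetric under $(m,p)\leftrightarrow(\wm,\wip)$ and contain $\epsilon$ explicitly only through $\epsilon^2$.

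The computational core is to prove that, after the substitution $(\wm,\wip)=\Phi_f(m,p;\epsilon)$, each of these five expressions has the form $r(m,p;\epsilon^2)/\Delta(m,p;\epsilon)$ with $r$ polynomial. Since every $\wm_i,\wip_i$ is a polynomial over the common denominator $\Delta$, each expression is automatically $\widetilde r(m,p;\epsilon)/\Delta(m,p;\epsilon)$ with $\widetilde r$ polynomial, so what must be checked is that $\widetilde r$ is \emph{even} in $\epsilon$. By reversibility \eqref{eq: reversible}, evenness of $\widetilde r$ is equivalent, for the expression $H$ in question, to the identity
\[
H(m,p,\wm,\wip)\,\Delta(m,p;\epsilon)=H(m,p,\um,\up)\,\Delta(m,p;-\epsilon),\qquad (\um,\up)=\Phi_f(m,p;-\epsilon),
\]
which I would verify directly from the explicit linear system \eqref{eq:dC2}. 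It is precisely here that the Clebsch condition \eqref{eq: Clebsch cond} and the Wronskian relation \eqref{eq: Clebsch Wronski}--\eqref{syst A sol} enter: they are what makes the odd-in-$\epsilon$ parts cancel. For $\widehat P$ and $\widehat Q$ the work can be shortened by using that $I_0=P/Q$ is already an integral (Theorem~\ref{Th: dC2 basis 1}).

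Granting this, the argument inside the proof of Theorem~\ref{th volume} — which uses only symmetry, \eqref{eq: reversible}, \eqref{Jac det} and the form $r(\cdot\,;\epsilon^2)/\Delta(\cdot\,;\epsilon)$ — shows that each of $\widehat P,\widehat Q,C_1,C_2,C_3$ is multiplied under $\Phi_f$ by one and the same factor $\Delta(\wm,\wip;-\epsilon)/\Delta(\wm,\wip;\epsilon)$. Hence all their pairwise ratios are integrals of $\Phi_f$; in particular $J_0=\widehat P/\widehat Q$ is an integral, and the projective vector $[C_1a_2a_3:C_2a_3a_1:C_3a_1a_2:-C_0]$ is $\Phi_f$-invariant.

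For the Hirota--Kimura basis claim I would then record the purely algebraic identity
\[
C_1a_2a_3\,G_1+C_2a_3a_1\,G_2+C_3a_1a_2\,G_3=C_0,
\]
valid identically in $G_1,G_2,G_3$ and $\epsilon^2$: the $\epsilon$-free part of the left side is $C_0$ by \eqref{Clebsch2 C0}, while the remaining $\epsilon^2$-part vanishes because its $G_iG_j$-coefficients have the form $a_1a_2a_3\,A_k\big((b_i-b_j)+(b_j-b_i)\big)=0$. Evaluating this identity at $\Phi_f^\ell(m,p)$ and using the projective invariance of the previous paragraph to pull the coefficient vector back to $(m,p)$ gives $\sum_i C_i(m,p)\,a_ja_k\,(G_i\circ\Phi_f^\ell)(m,p)=C_0(m,p)$ for all $\ell\in\bbZ$; i.e.\ $(C_1a_2a_3,C_2a_3a_1,C_3a_1a_2,-C_0)$ lies in $K_{\Psi_1}(m,p)$, and one-dimensionality of $K_{\Psi_1}$ follows from generic linear independence of $(G_1\circ\Phi_f^\ell,G_2\circ\Phi_f^\ell,G_3\circ\Phi_f^\ell,1)$ for $\ell=0,1,2$. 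The main obstacle is the evenness step of the second paragraph, in particular for $C_1,C_2,C_3$: it is elementary but a genuine computation, it fails for a generic quadratic vector field, and it uses the Clebsch condition and the Wronskian relation in an essential way — everything else (the polarization bookkeeping, the algebraic identity, and the passage from $\ell=0$ to all $\ell$) is formal.
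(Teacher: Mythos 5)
Your overall architecture is sound and it is genuinely different from the paper's, but it has a real gap at its computational core. Everything in your argument funnels through the claim that, upon substituting $(\wm,\wip)=\Phi_f(m,p;\epsilon)$, each of the five expressions $\widehat P=C_0$, $\widehat Q$, $C_1$, $C_2$, $C_3$ takes the form $r(m,p;\epsilon^2)/\Delta(m,p;\epsilon)$, i.e.\ that the numerator polynomials are even in $\epsilon$. You correctly reduce this to the identity $H(x,\wx)\,\Delta(x;\epsilon)=H(x,\untilde{x})\,\Delta(x;-\epsilon)$ and then say you ``would verify it directly'' from \eqref{eq:dC2}. That verification \emph{is} the theorem in your approach: it is a sizeable polynomial identity in the six phase variables and $\epsilon$ (the numerators have degree up to $8$ in $\epsilon$), it is exactly where the Clebsch condition and the coefficients $A_i$ must enter, and nothing you have written makes it routine. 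Your proposed shortcut for $\widehat P,\widehat Q$ --- ``use that $I_0=P/Q$ is already an integral'' (Theorem \ref{Th: dC2 basis 1}) --- does not obviously work: the relation $P(\wx)/Q(\wx)=P(x)/Q(x)$ constrains only the diagonal values of the quadratic forms and does not yield the representation $\widehat P(x,\wx)=p(x;\epsilon^2)/\Delta(x;\epsilon)$ for the polarized bilinear forms. Note also that the evenness you need is strictly \emph{stronger} than the statement being proved (it additionally yields the invariant measure of Corollary \ref{Th: Clebsch2 conserved density}); the paper itself only asserts it in a Remark (``one can show\dots'') and never proves it.

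For comparison: the paper proves the integral statement (in detail only for the first Clebsch flow, Theorem \ref{Th: dC basis 2}) without ever establishing evenness. It verifies directly the three-point relation $\widehat P(x,\untilde{x})/\widehat Q(x,\untilde{x})=\widehat P(x,\wx)/\widehat Q(x,\wx)$ --- i.e.\ the \emph{conclusion} of the last display in the proof of Theorem \ref{th volume} --- by cross-multiplying, substituting the equations of motion for both $\wx-x$ and $x-\um,\,x-\up$, and reducing to a polynomial identity in the eighteen variables $m,p,\wm,\wip,\um,\up$. If you want a hand-checkable proof, imitate that computation rather than routing through Theorem \ref{th volume}. On the positive side, your identity $C_1a_2a_3G_1+C_2a_3a_1G_2+C_3a_1a_2G_3=C_0$ is correct (the $G_iG_j$-coefficients cancel exactly as you indicate), and combined with projective invariance of $[C_1a_2a_3:C_2a_3a_1:C_3a_1a_2:-C_0]$ it gives a clean derivation of the null-space claim, which the paper does not write out; but that projective invariance again rests on the unproven evenness (or on some other proof that the ratios $C_i/C_j$ are integrals), and the one-dimensionality of the null space is a further genericity assertion you have not checked.
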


Actually, the numerator and the denominator of the fraction \eqref{eq: dC2 C2} satisfy conditions of Theorem \ref{th volume}.

\begin{corollary}\label{Th: Clebsch2 conserved density} {\bf (Density of an invariant measure)}
The map $\Phi_f(x;\epsilon)$ has an invariant measure
$$
\frac{dm_1\wedge dm_2\wedge dm_3 \wedge dp_1\wedge dp_2\wedge dp_3}{\phi(m,p;\epsilon)},
$$
where $\phi(m,p;\epsilon)$ can be taken as the numerator of either of the functions $C_i$, $i=0,1,2,3$.
\end{corollary}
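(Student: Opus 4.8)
The plan is to derive the Corollary from four applications of Theorem~\ref{th volume}~a), taking $\widehat P=C_i$ in turn for $i=0,1,2,3$. Viewed as an expression in the two points $(m,p)$ and $(\wm,\wip)$, held independent, $C_0$ is symmetric bilinear, while $C_1,C_2,C_3$ are a constant plus a symmetric bilinear expression (built from the $G_j$ of \eqref{eq: Gi}), and each depends on $\epsilon$ only through $\epsilon^2$; the presence of the constant terms $A_i$ is immaterial, since the proof of Theorem~\ref{th volume} uses only the invariance of $\widehat P$ under the interchange of those two points. Thus the sole thing to be checked is the hypothesis \eqref{bilinear exp P}: that, once one substitutes $(\wm,\wip)=\Phi_f(m,p;\epsilon)$ (whose components are rational functions with common denominator $\Delta(m,p;\epsilon)$), the function $C_i$ becomes $p_i(m,p;\epsilon^2)/\Delta(m,p;\epsilon)$, equivalently that $\phi_i:=C_i\cdot\Delta(m,p;\epsilon)$ is a polynomial in $(m,p)$ that is \emph{even} in $\epsilon$. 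Granting this, Theorem~\ref{th volume}~a) supplies the $\Phi_f$-invariance of $dm_1\wedge dm_2\wedge dm_3\wedge dp_1\wedge dp_2\wedge dp_3\,/\,\phi_i$, and, the denominator of $C_i$ being $\Delta$, $\phi_i$ is the numerator of $C_i$, as claimed.

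For $i=0$ the property required of $\widehat P=C_0$ is precisely the remark preceding the Corollary, carried out in full (by hand) for the first Clebsch flow in Section~\ref{sect Clebsch 1}; there $C_0$ and the denominator $\widehat Q$ of the bilinear-fractional integral $J_0$ of Theorem~\ref{Th: dC2 basis 2} are obtained from the numerator $c_0$ and the denominator of the quadratic-fractional integral $I_0$ of Theorem~\ref{Th: dC2 basis 1} by polarization combined with $\epsilon^2\mapsto-\epsilon^2$, which is the mechanism of Observation~\ref{conjecture 1}. For $i=1,2,3$ the same picture holds with $c_0$ replaced by the null-space coefficient $c_i$ of $\Psi_0$ in Theorem~\ref{Th: dC2 basis 1}: $c_i/c_0$ is itself a quadratic-fractional integral of $\Phi_f$ (both $c_i$ and $c_0$ are polynomials of degree $\le 2$ in $(m,p)$ and polynomials in $\epsilon^2$), and $C_i$ is the polarization of $c_i$ with $\epsilon^2\mapsto-\epsilon^2$. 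Moreover, $C_1,C_2,C_3$ are cyclic images of one another under the relabelling $1\to2\to3$ of indices and of the parameters $a_i,b_i,\omega_i$, so it is enough to settle the case of $C_1$.

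The genuinely non-formal step, and the expected main obstacle, is to show that $\phi_i=C_i\cdot\Delta(m,p;\epsilon)$ is even in $\epsilon$. After the substitution $(\wm,\wip)=\Phi_f(m,p;\epsilon)$ the function $C_i$ is a rational function of $(m,p,\epsilon)$ with denominator $\Delta(m,p;\epsilon)$ whose numerator \emph{a priori} contains both parities of $\epsilon$; the assertion is that the odd part vanishes identically, a cancellation forced by the Clebsch condition \eqref{eq: Clebsch cond} together with the very specific combination of the $G_j$ appearing in \eqref{Clebsch2 C1}--\eqref{Clebsch2 C0}. I would prove this by direct computation: for the first Clebsch flow, where $a_1=a_2=a_3=1$, it is short enough to do by hand (Section~\ref{sect Clebsch 1}), and the general Clebsch flow then follows either by expanding along the powers of $\beta$ with $a_i=\alpha+\beta\omega_i$, $b_i=\alpha\omega_i-\beta\omega_j\omega_k$, or by computer algebra. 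Everything else is bookkeeping: the invariant-density relation $\phi_i(\wm,\wip)\,\Delta(m,p;\epsilon)=\phi_i(m,p)\,\Delta(\wm,\wip;-\epsilon)$ that comes out of \eqref{Jac det} holds as a rational identity once \eqref{bilinear exp P} is known, so no separate attention is needed near the zero loci of $\Delta$ or of the $C_i$.
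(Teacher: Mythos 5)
Your proposal is correct and follows essentially the same route as the paper: the corollary is obtained by feeding the $C_i$ into Theorem~\ref{th volume}~a), the only substantive hypothesis being that after substituting $(\wm,\wip)=\Phi_f(m,p;\epsilon)$ the numerator of $C_i$ depends on $\epsilon$ only through $\epsilon^2$ --- a fact the paper likewise asserts (in the Remark preceding the analogous Corollary~\ref{Th: Clebsch1 conserved density}) rather than derives in print, relegating its verification to direct/symbolic computation. Your added observation that the constant terms $A_i$ in $C_1,C_2,C_3$ are harmless because the proof of Theorem~\ref{th volume} uses only symmetry in $(m,p)\leftrightarrow(\wm,\wip)$ and evenness in $\epsilon$, not bilinearity per se, is a correct clarification of a point the paper leaves implicit.
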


\begin{theorem}  \label{Th: Clebsch2 W2 basis} {\bf (Second order Wronskians HK basis)}
The functions $$W^{(2)}_i(m,p)=\widetilde{\wm}_ip_i-m_i\widetilde{\wip}_i, \quad i=1,2,3,$$ form a HK basis for the map $\Phi_f$, with a one-dimensional null space spanned by
$[C_1:C_2:C_3]$, with the functions $C_i$ given in \eqref{Clebsch2 C1}--\eqref{Clebsch2 C3}. In other words, on orbits of the map $\Phi_f$ there holds: 
\begin{equation}
\sum_{i=1}^3 C_i(\wwm_ip_i-m_i\wwp_i) =0.
\end{equation}
\end{theorem}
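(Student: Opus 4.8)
The plan is to separate the ``soft'' part of the statement from the one genuinely hard identity, which is then the computer-algebra content anticipated in Observation~\ref{conjecture 2}. By the definition of an HK basis the claim decomposes into three assertions: (i) $W^{(2)}_1,W^{(2)}_2,W^{(2)}_3$ are linearly independent over $\bbR$; (ii) the line $[C_1(m,p):C_2(m,p):C_3(m,p)]$ is constant along $\Phi_f$-orbits; and (iii) the identity
\[
\sum_{i=1}^{3} C_i(m,p)\,\bigl(\wwm_i p_i-m_i\wwp_i\bigr)=0
\]
holds identically in $(m,p)$ after substituting $(\wwm,\wwp)=\Phi_f^2(m,p;\epsilon)$. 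Granting these, evaluating (iii) at the orbit point $\Phi_f^\ell(m,p)$ and cancelling the common scalar supplied by (ii) gives $\sum_i C_i(m,p)W^{(2)}_i(\Phi_f^\ell(m,p))=0$ for every $\ell\in\bbZ$, so $(C_1:C_2:C_3)$ spans the null space; its one-dimensionality follows from (i) together with a check that at two consecutive orbit points the triples $(W^{(2)}_1,W^{(2)}_2,W^{(2)}_3)$ span a plane.

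Item (ii) is essentially at hand. By Theorem~\ref{Th: dC2 basis 2} the set $\Psi_1=(G_1,G_2,G_3,1)$ is an HK basis with one-dimensional null space $[C_1a_2a_3:C_2a_3a_1:C_3a_1a_2:-C_0]$, and since the ratios of the coordinates of a one-dimensional null space of an HK basis are integrals of $\Phi_f$ and the $a_i$ are constants, $C_1/C_2$ and $C_2/C_3$ are integrals; hence $[C_1:C_2:C_3]$ is orbit-constant. Item (i), and the rank-$2$ check above, I would settle by evaluation at a handful of numerical sample points. Passing to $\epsilon\to0$ is not enough here: at first order $W^{(2)}_i=4\epsilon(\dot m_ip_i-m_i\dot p_i)+O(\epsilon^2)$, and under the Clebsch condition the continuous Wronskians $\dot m_ip_i-m_i\dot p_i$ already satisfy relation \eqref{eq: Clebsch Wronski} with the constant coefficients $A_i$, so linear independence of the discrete Wronskians is genuinely an $O(\epsilon^2)$ phenomenon.

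This leaves item (iii), which is the crux and which, as the paper states, currently admits only a computer-algebra proof. Concretely: solve the linear system \eqref{eq:dC2} for $(\wm,\wip)=\Phi_f(m,p;\epsilon)$ (rational, with common denominator $\Delta(m,p;\epsilon)=\det(I-\epsilon f'(m,p))$ of degree $6$), compose once more to obtain $(\wwm,\wwp)$, substitute into the left-hand side of (iii), clear denominators, and verify that the resulting polynomial in $(m,p,\epsilon)$ — with $a_i,b_i$ constrained by \eqref{eq: Clebsch cond} — vanishes. A possible way to shrink this computation is to use, coordinatewise, the elementary $2\times2$ Plücker identity
\[
\wip_i\bigl(\wwm_ip_i-m_i\wwp_i\bigr)=p_i\bigl(\wwm_i\wip_i-\wm_i\wwp_i\bigr)+\wwp_i\bigl(\wm_ip_i-m_i\wip_i\bigr)
\]
together with its $\wm_i$-counterpart, which expresses each $W^{(2)}_i$ through the first-order Wronskians $W^{(1)}_i(m,p)$ and $W^{(1)}_i(\wm,\wip)$, and then to look for scalars $\lambda,\rho$ (with $\lambda+\rho=2$ forced at leading order in $\epsilon$) such that $C_iW^{(2)}_i=\lambda\,c_i(\wm,\wip)W^{(1)}_i(\wm,\wip)+\rho\,c_i(m,p)W^{(1)}_i(m,p)$ for each $i$; summing over $i$ and invoking Theorem~\ref{Th: Clebsch2 W1 basis} at $(m,p)$ and at $\Phi_f(m,p)$ would then deliver (iii) with no further cancellation. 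Whether such $\lambda,\rho$ exist is itself to be decided by symbolic computation, and if they do not one reverts to the direct verification.

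The main obstacle is therefore conceptual rather than technical: there is no known structural reason why the \emph{same} coefficient vector $(C_i)$ that linearizes the basis $(G_1,G_2,G_3,1)$ should also linearize the second-order discrete Wronskians $W^{(2)}_i$, and — as flagged in Observation~\ref{conjecture 2} — elucidating this remains open; absent such an explanation, the identity (iii) has to be certified by computer algebra, exactly as in the detailed treatment of the first Clebsch flow.
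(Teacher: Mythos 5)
Your reduction of the theorem to three items is sound, and item (ii) is handled correctly: by Theorem \ref{Th: dC2 basis 2} the ratios $C_i:C_j$ are integrals of $\Phi_f$, so the line $[C_1:C_2:C_3]$ is orbit-constant and it suffices to check the relation once per orbit. The genuine gap is item (iii), which you do not prove but outsource to a symbolic computation that you do not carry out; and the justification you offer for this --- that ``as the paper states'' the identity ``currently admits only a computer-algebra proof'' --- is a misreading. The paper makes that concession only for the Wronskians of order $\ell\ge 3$ (Theorem \ref{Clebsch higher Wronskians}); for the second-order basis it supplies, in the model case of the first Clebsch flow and by announced analogy in the general case, an explicit hand proof. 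That proof is precisely the ``structural reason'' whose existence your closing paragraph denies: one first establishes the four bilinear identities \eqref{eq: 1}--\eqref{eq: 4}, which tie $\sum c_i\wm_ip_i$, $\sum c_im_i\wip_i$, $\sum\widetilde{c}_im_i\wip_i$, $\sum\widetilde{c}_i\wm_ip_i$ to $\sum C_im_ip_i$ and $\sum C_i\wm_i\wip_i$; from these one deduces that $K=\sum_i(C_i/C_0)(m_ip_i/c_0)$ is an integral (Corollary \ref{cor K}); and a direct manipulation of the equations of motion then yields
\[
\sum_{i=1}^3 C_i(\wwm_ip_i-m_i\wwp_i)=\sum_{i=1}^3\widetilde{C}_im_ip_i-\sum_{i=1}^3 c_i\wm_i\wwp_i,
\]
whose right-hand side equals $\widetilde{C}_0\,c_0\bigl(K(m,p)-K(\wm,\wip)\bigr)=0$. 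Without either this chain of identities or an actually executed symbolic verification, the crux of the theorem remains unproved in your write-up.

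Two smaller points. The Pl\"ucker shortcut is unlikely to close the argument with constant $\lambda,\rho$: the identity $\wip_iW^{(2)}_i=p_iW^{(1)}_i(\wm,\wip)+\wwp_iW^{(1)}_i(m,p)$ carries the non-constant weights $p_i/\wip_i$ and $\wwp_i/\wip_i$, so your ansatz almost certainly fails and the fallback is again the brute-force computation. On the other hand, your observations that $[C_1:C_2:C_3]$ need only be certified at one point per orbit, and that linear independence of the $W^{(2)}_i$ is an $O(\epsilon^2)$ effect (since at leading order they reproduce the continuous Wronskian relation with the constant coefficients $A_i$), are both correct and consistent with the paper's treatment.
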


\begin{theorem} \label{Clebsch higher Wronskians} {\bf (Higher order Wronskians HK bases)}
\begin{enumerate}
\item The functions $W^{(3)}_i(m,p)=\widetilde{\widetilde {\widetilde{m}}}_ip_i-m_i\widetilde{\widetilde {\widetilde{p}}}_i$, $i=1,2,3$, form a HK basis for the map $\Phi_f$, with a one-dimensional null space.
On orbits of the map $\Phi_f$ there holds
$$
\sum_{i=1}^3 D_i(\widetilde{\widetilde {\widetilde{m}}}_ip_i-m_i\widetilde{\widetilde {\widetilde{p}}}_i) =0.
$$
The two integrals of motion $J_1=D_1/D_3$ and $J_2=D_2/D_3$ are functionally independent.
\smallskip

\item The functions $W^{(4)}_i(m,p)=\widetilde{\widetilde{\widetilde {\widetilde{m}}}}_ip_i-m_i\widetilde{\widetilde{\widetilde {\widetilde{p}}}}_i$, $i=1,2,3$, form a HK basis for the map $\Phi_f$, with a one-dimensional null space.
On orbits of the map $\Phi_f$ there holds
$$
\sum_{i=1}^3 E_i(\widetilde{\widetilde{\widetilde {\widetilde{m}}}}_ip_i-m_i\widetilde{\widetilde{\widetilde {\widetilde{p}}}}_i) =0.
$$
The two integrals of motion $J_3=E_1/E_3$ and $J_4=E_2/E_3$  are functionally independent.
\smallskip

\item Among the integrals of motion $I_0,J_0,\dots,J_4$ four are functionally independent. In particular, each of the sets
$\{I_0,J_0,J_1,J_2\}$, $\{I_0,J_0,J_3,J_4\}$, $\{J_1,J_2,J_3,J_4\}$ consists of four independent integrals of motion.
\end{enumerate}
\end{theorem}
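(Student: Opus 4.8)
I would treat this as a computer-algebra-assisted statement, following the template already used for the first- and second-order bases in Theorems~\ref{Th: Clebsch2 W1 basis} and \ref{Th: Clebsch2 W2 basis} and illustrating Observation~\ref{conjecture 2}. The underlying mechanism is elementary: if one can produce explicit functions $D_1,D_2,D_3$ on $\bbR^6$ such that (i) the identity $D_1W^{(3)}_1+D_2W^{(3)}_2+D_3W^{(3)}_3\equiv 0$ holds on $\bbR^6$, and (ii) the ratios $J_1=D_1/D_3$, $J_2=D_2/D_3$ are integrals of $\Phi_f$ --- equivalently, the projective vector $[D_1:D_2:D_3]$ is $\Phi_f$-invariant --- then $(W^{(3)}_1,W^{(3)}_2,W^{(3)}_3)$ is a HK basis with null space $[D_1:D_2:D_3]$. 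Indeed, by (ii) one has $D_i(\Phi_f^k(m,p))=\mu_k\,D_i(m,p)$ for a common nonzero scalar $\mu_k$ and all $k\in\bbZ$ (negative $k$ being covered since an integral is $\Phi_f^{-1}$-invariant too), so evaluating (i) at $\Phi_f^k(m,p)$ gives $\sum_i D_i(m,p)\,W^{(3)}_i(\Phi_f^k(m,p))=\mu_k^{-1}\sum_i D_i(\Phi_f^k(m,p))\,W^{(3)}_i(\Phi_f^k(m,p))=0$ for every $k$; one-dimensionality of the null space follows from exhibiting one point where the $3\times3$ matrix $\big(W^{(3)}_i(\Phi_f^k(m,p))\big)_{k=0,1,2}$ has rank $2$ (and noting the $D_i$ are not all identically zero). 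The same scheme with $W^{(4)}_i$, $E_i$, $J_3,J_4$ settles part~2.

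To carry this out I would first iterate $\Phi_f$ up to four times via \eqref{eq: Phi gen} (each step a $6\times6$ rational linear solve), form the Wronskian matrices $\big(W^{(\ell)}_i(\Phi_f^k(m,p))\big)$ for $\ell=3,4$, and read off candidate kernel vectors $(D_1,D_2,D_3)$ and $(E_1,E_2,E_3)$ by computing kernels at several random rational specializations of $\omega_i$ and $\epsilon$; the expected shape, by analogy with the passage $c_i\mapsto C_i$ (i.e.\ $g_i\mapsto G_i$, $\epsilon^2\mapsto-\epsilon^2$, plus higher-order corrections) from Theorem~\ref{Th: Clebsch2 W1 basis} to Theorem~\ref{Th: Clebsch2 W2 basis}, gives an ansatz reducing the unknowns to finitely many constants fixed by interpolation. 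One then verifies (i) and (ii) as symbolic identities in $(m,p,\epsilon)$ (in practice via dense interpolation or reduction modulo a large prime, afterwards confirmed symbolically) and checks the rank-$2$ condition at one explicit point. Functional independence of $J_1,J_2$ (resp.\ $J_3,J_4$) is then the assertion that the Jacobian $\partial(J_1,J_2)/\partial(m,p)$ (resp.\ $\partial(J_3,J_4)/\partial(m,p)$) has rank $2$ at a generic point, which is decided by exhibiting a single nonvanishing $2\times2$ minor.

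For part~3 I would assemble the six integrals $I_0,J_0,J_1,J_2,J_3,J_4$ (the first two from Theorems~\ref{Th: dC2 basis 1} and \ref{Th: dC2 basis 2}) and examine the Jacobian $\partial(I_0,J_0,J_1,J_2,J_3,J_4)/\partial(m_1,\dots,p_3)$ at a generic point: for each of the three listed four-element subsets, independence follows from one nonvanishing $4\times4$ minor of the corresponding $4\times6$ Jacobian, while the fact that no five of the six are independent can be taken from the known integrability of $\Phi_f$ with two-dimensional generic level sets established in \cite{PPS1} (or, directly, from the vanishing of all $5\times5$ minors of the full $6\times6$ Jacobian).

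The real obstacle is purely computational: as stressed in the introduction, no structural reason is known for the existence of higher-order Wronskian bases, so the argument ultimately rests on the symbolic identities. The third and fourth iterates of $\Phi_f$ are rational maps of rapidly growing degree in $(m,p)$ and $\epsilon$, which makes both the guessing of $D_i,E_i$ and the verification of (i)--(ii) heavy, and the coefficients $D_i,E_i$ turn out too bulky to record in closed form --- hence the statement is phrased without them. Keeping the computations tractable (exploiting homogeneity, specializing $\omega_i$ and $\epsilon$ to random rationals to reduce the identities to numeric or dense univariate checks, and only then confirming the handful of needed identities symbolically) is the main practical hurdle.
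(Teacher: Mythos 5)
Your plan coincides with what the paper actually does: for this theorem the authors state explicitly that they only have a proof based on symbolic computations in Maple, and your outline (find the kernel vectors $D_i,E_i$ numerically/by interpolation, verify the linear identity and the $\Phi_f$-invariance of $[D_1:D_2:D_3]$ symbolically, check rank conditions and Jacobian minors for independence) is precisely such a computer-algebra verification, with the correct logical skeleton for why invariance of the projective kernel vector yields a HK basis. No genuinely different route is offered by the paper, so there is nothing to compare beyond noting that your write-up makes the underlying logic more explicit than the paper's one-line remark.
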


\section{The first Clebsch flow}
\label{sect Clebsch 1}

We collect here the formulas for the first Clebsch flow, since this will be our main example. Equations of motion:
\beq \label{Cl_1}
\left\{\begin{array}{l}
\dot{m}_1  =  (\omega_3-\omega_2)p_2p_3,  \vspace{.1truecm}  \\
\dot{m}_2  = (\omega_1-\omega_3)p_3p_1,  \vspace{.1truecm}\\
\dot{m}_3  =  (\omega_2-\omega_1)p_1p_2,  \vspace{.1truecm} \\
\dot{p}_1  =  m_3 p_2 - m_2 p_3,   \vspace{.1truecm}\\
\dot{p}_2  =  m_1 p_3 - m_3 p_1,  \vspace{.1truecm}\\
\dot{p}_3  =  m_2 p_1 - m_1 p_2.  
\end{array}\right.
\eeq
Wronskian relation satisfied on solutions of \eqref{Cl_1} is:
\begin{equation}
\label{eq: Clebsch W}
(\dot{m}_1p_1-m_1\dot{p}_1)+(\dot{m}_2p_2-m_2\dot{p}_2)+(\dot{m}_3p_3-m_3\dot{p}_3)=0.
\end{equation}

Applying the Kahan-Hirota-Kimura scheme to the first Clebsch flow (\ref{Cl_1}), we arrive at the following discretization:
\beq \label{eq: dC}
\left\{\begin{array}{l}
\widetilde{m}_1-m_1  =  \epsilon(\omega_3-\omega_2)
(\widetilde{p}_2p_3+p_2\widetilde{p}_3),         \vspace{.1truecm} \\
\widetilde{m}_2-m_2  =  \epsilon(\omega_1-\omega_3)
(\widetilde{p}_3p_1+p_3\widetilde{p}_1),         \vspace{.1truecm} \\
\widetilde{m}_3-m_3  =  \epsilon(\omega_2-\omega_1)
(\widetilde{p}_1p_2+p_1\widetilde{p}_2),         \vspace{.1truecm} \\
\widetilde{p}_1-p_1  = 
\epsilon(\widetilde{m}_3p_2+m_3\widetilde{p}_2)-
\epsilon(\widetilde{m}_2p_3+m_2\widetilde{p}_3), \vspace{.1truecm} \\
\widetilde{p}_2-p_2  = 
\epsilon(\widetilde{m}_1p_3+m_1\widetilde{p}_3)-
\epsilon(\widetilde{m}_3p_1+m_3\widetilde{p}_1),  \vspace{.1truecm} \\
\widetilde{p}_3-p_3  = 
\epsilon(\widetilde{m}_2p_1+m_2\widetilde{p}_1)-
\epsilon(\widetilde{m}_1p_2+m_1\widetilde{p}_2). 
\end{array}\right.
\eeq
Linear system (\ref{eq: dC}) defines a birational map $\Phi_f:\bbR^6 \rightarrow \bbR^6$,
$(m,p)\mapsto (\widetilde{m}, \widetilde{p})$.

\begin{theorem}\label{Th: dC basis 1}  {\bf (Quadratic-fractional integral, \cite{PPS1})}
The function
\begin{equation}\label{eq: dC C1} 
I_0(m,p,\epsilon)=\frac{p_1^2+p_2^2+p_3^2}
{1-\epsilon^2(\omega_1p_1^2+\omega_2p_2^2+\omega_3p_3^2)}
\end{equation}
is an integral of motion of the map $\Phi_f:(m,p)\mapsto(\wm,\wip)$. The set 
\begin{equation}\label{eq: dClebsch basis 1}
\Psi_0=(p_1^2,p_2^2,p_3^2,1)
\end{equation}
is a HK-basis for the map $\Phi_f$, with  a one-dimensional null-space 
\begin{equation}
K_{\Psi_0}(m,p)=[c_1:c_2:c_3:-c_0],
\end{equation}
where
\begin{eqnarray}
c_1 & = & 1+\epsilon^2(\omega_1-\omega_2)p_2^2+\epsilon^2(\omega_1-\omega_3)p_3^2, \label{Clebsch1 c1}\\
c_2 & = & 1+\epsilon^2(\omega_2-\omega_1)p_1^2+\epsilon^2(\omega_2-\omega_3)p_3^2, \label{Clebsch1 c2}\\
c_3 & = & 1+\epsilon^2(\omega_3-\omega_1)p_1^2+\epsilon^2(\omega_3-\omega_2)p_2^2, \label{Clebsch1 c3} \\
c_0 & = & p_1^2+p_2^2+p_3^2. \label{Clebsch1 c0}
\end{eqnarray}
Equivalently:
\begin{eqnarray}\label{eq: dC ans1 res}
K_{\Psi_0}(m,p)  & = & [\ 1+\epsilon^2\omega_1 I_0:1+\epsilon^2\omega_2 I_0:
1+\epsilon^2\omega_3 I_0: -I_0\ ] \\
& = & \left[\,\frac{1}{I_0}+\epsilon^2\omega_1:\frac{1}{I_0}+\epsilon^2\omega_2:
\frac{1}{I_0}+\epsilon^2\omega_3:-1\,\right].
\end{eqnarray}
\end{theorem}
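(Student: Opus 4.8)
Here is how I would organize the argument. The plan is to prove first that $I_0$ is an integral of motion, and then to read off the Hirota--Kimura basis statement from this with essentially no extra work.

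\smallskip\noindent\emph{Step 1 (the only real computation): $I_0\circ\Phi_f=I_0$.} Clearing denominators, the invariance of $I_0$ is equivalent to the polynomial identity
\[
\sum_{i=1}^{3}\big(\wip_i^{\,2}-p_i^{\,2}\big)=\epsilon^2\Big[\big(\textstyle\sum_i\wip_i^{\,2}\big)\big(\sum_i\omega_ip_i^{\,2}\big)-\big(\sum_ip_i^{\,2}\big)\big(\sum_i\omega_i\wip_i^{\,2}\big)\Big].
\]
I would write the left-hand side as $\sum_i(\wip_i-p_i)(\wip_i+p_i)$ and substitute the last three equations of \eqref{eq: dC} for $\wip_i-p_i$. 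This gives a bilinear expression in $m,p,\wm,\wip$; a short bookkeeping shows that all monomials of the form $\wm_j\wip_k\wip_\ell$ cancel pairwise, and likewise all monomials of the form $m_jp_kp_\ell$, so that the left-hand side collapses to $\epsilon\sum_{\mathrm{cyc}}(\wm_1-m_1)(\wip_2p_3-\wip_3p_2)$. Now substituting the first three equations of \eqref{eq: dC} for $\wm_i-m_i$ and using $\wip_kp_j=p_j\wip_k$ to recognise $(\wip_2p_3+p_2\wip_3)(\wip_2p_3-\wip_3p_2)=\wip_2^{\,2}p_3^{\,2}-p_2^{\,2}\wip_3^{\,2}$, one obtains $\epsilon^2\sum_{\mathrm{cyc}}(\omega_3-\omega_2)(\wip_2^{\,2}p_3^{\,2}-p_2^{\,2}\wip_3^{\,2})$, which is exactly the expansion of the right-hand side. (This is the polarized, discrete counterpart of the continuous conservation law $\tfrac{d}{dt}(p_1^2+p_2^2+p_3^2)=2\langle p,p\times m\rangle=0$.) The computation is a little tedious but completely elementary, and it is really the only substantial point in the proof; alternatively one could verify the identity directly on the explicit rational map $\Phi_f$, at the price of losing its ``nice algebraic'' character.

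\smallskip\noindent\emph{Step 2 (the HK basis follows from the integral).} By the definition \eqref{eq: dC C1} of $I_0$ we have, identically in $(m,p)$,
\[
(1+\epsilon^2\omega_1I_0)\,p_1^2+(1+\epsilon^2\omega_2I_0)\,p_2^2+(1+\epsilon^2\omega_3I_0)\,p_3^2-I_0=0 .
\]
Evaluating this at the iterates $\Phi_f^{\,i}(m,p)$ and using $I_0(\Phi_f^{\,i}(m,p))=I_0(m,p)$ from Step 1, we see that the \emph{fixed} vector $[\,1+\epsilon^2\omega_1I_0:1+\epsilon^2\omega_2I_0:1+\epsilon^2\omega_3I_0:-I_0\,]$ annihilates $\Psi_0=(p_1^2,p_2^2,p_3^2,1)$ along the whole orbit — this is precisely \eqref{eq: dC ans1 res}. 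Multiplying this vector by the common denominator $1-\epsilon^2(\omega_1p_1^2+\omega_2p_2^2+\omega_3p_3^2)$ and using $c_i=\big(1-\epsilon^2\textstyle\sum_j\omega_jp_j^2\big)+\epsilon^2\omega_i\big(p_1^2+p_2^2+p_3^2\big)$ together with $c_0=p_1^2+p_2^2+p_3^2$ turns it into the polynomial null-vector $[c_1:c_2:c_3:-c_0]$ of \eqref{Clebsch1 c1}--\eqref{Clebsch1 c0}. Hence $\Psi_0$ is an HK basis with a null-space of dimension at least one, and the two forms in \eqref{eq: dC ans1 res} are obtained from $[c_1:c_2:c_3:-c_0]$ by dividing through by $1-\epsilon^2\sum_j\omega_jp_j^2$ and by $c_0$ respectively.

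\smallskip\noindent\emph{Step 3 (one-dimensionality).} To conclude that the null-space is exactly one-dimensional — so that the ratios $c_i/c_j$ are genuine integrals — I would exhibit a single value of $(m,p,\epsilon)$ for which the $4\times4$ matrix with rows $\big(p_1^2,p_2^2,p_3^2,1\big)\circ\Phi_f^{\,i}$, $i=0,1,2,3$, has rank $3$. Since its $3\times3$ minors are rational functions of $(m,p,\epsilon)$, one nonvanishing value forces rank $3$, and hence a one-dimensional null-space, on a Zariski-dense open set. I do not expect any conceptual obstacle anywhere: the bookkeeping in Step 1 is the bulk of the work and is routine, while the only thing demanding a bit of care is producing the explicit sample point in Step 3.
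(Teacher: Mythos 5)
Your proposal is correct and follows essentially the same route as the paper: cross-multiply to a polynomial identity, factor $\wip_i^2-p_i^2$ and $\wip_j^2p_k^2-p_j^2\wip_k^2$, substitute the discrete equations of motion, and verify an algebraic identity in the twelve variables; the paper's written proof covers only the invariance of $I_0$, so your Steps 2--3 (reading off the null vector from the definition of $I_0$ and checking one-dimensionality at a sample point) complete rather than diverge from it. One cosmetic slip in Step 1: the monomials that cancel pairwise are those of the form $\wm_jp_kp_\ell$ and $m_j\wip_k\wip_\ell$ (monomials of the form $\wm_j\wip_k\wip_\ell$ and $m_jp_kp_\ell$ do not occur in the expansion at all), but the collapsed expression $\epsilon\sum_{\mathrm{cyc}}(\wm_1-m_1)(\wip_2p_3-\wip_3p_2)$ that you then write down is correct, so the argument goes through.
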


\begin{theorem} \label{Th: Clebsch1 W1 basis} {\bf (Discrete Wronskians HK basis, \cite{PPS2})}
The functions $W^{(1)}_i(m,p)=\wm_ip_i-m_i\wip_i$, $i=1,2,3$, form a HK basis for the map $\Phi_f$, with a one-dimensional null space spanned by
$[c_1:c_2:c_3]$, with the functions $c_i$ given in \eqref{Clebsch1 c1}--\eqref{Clebsch1 c3}. In other words, on orbits of the map $\Phi_f$ there holds: 
\begin{equation}
\sum_{i=1}^3 c_i(\wm_ip_i-m_i\wip_i) =0.
\end{equation}
\end{theorem}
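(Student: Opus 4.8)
The statement to prove is Theorem~\ref{Th: Clebsch1 W1 basis}: the functions $W^{(1)}_i=\wm_ip_i-m_i\wip_i$ form a HK basis for $\Phi_f$ with one-dimensional null space spanned by $[c_1:c_2:c_3]$, i.e. $\sum_{i=1}^3 c_i W^{(1)}_i=0$ on orbits. The plan is to prove the identity $\sum_{i=1}^3 c_i(\wm_ip_i-m_i\wip_i)=0$ directly from the discretization equations \eqref{eq: dC}, and then to argue that the null space is exactly one-dimensional (not larger) by a genericity/dimension count.

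\textbf{Deriving the linear relation.} First I would take the three $\wm$-equations of \eqref{eq: dC} and multiply the $i$-th one by $p_i$:
\begin{align*}
(\wm_1-m_1)p_1 &= \epsilon(\omega_3-\omega_2)(\wip_2 p_3 + p_2\wip_3)p_1,\\
(\wm_2-m_2)p_2 &= \epsilon(\omega_1-\omega_3)(\wip_3 p_1 + p_3\wip_1)p_2,\\
(\wm_3-m_3)p_3 &= \epsilon(\omega_2-\omega_1)(\wip_1 p_2 + p_1\wip_2)p_3.
\end{align*}
Symmetrically, take the three $\wip$-equations, multiply the $i$-th by $m_i$, and also multiply each $\wm$-equation by $\wip_i$ and each $\wip$-equation by $\wm_i$ where convenient; the point is that $W^{(1)}_i=\wm_i p_i - m_i\wip_i$ is built to exploit the bilinear symmetry of \eqref{eq: dC}. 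Concretely, from $\wm_i - m_i = \epsilon(\dots)$ multiplied by $p_i$ and from $\wip_i - p_i = \epsilon(\dots)$ multiplied by $m_i$, subtracting gives an expression for $W^{(1)}_i$ minus its ``backward'' counterpart in terms of $\epsilon$ times quadratic expressions in the mixed variables. Then I would form the combination $\sum_i c_i W^{(1)}_i$ with $c_i$ as in \eqref{Clebsch1 c1}--\eqref{Clebsch1 c3} and check that everything cancels. The natural bookkeeping is to expand in powers of $\epsilon^2$: the order-$\epsilon^0$ part is $\sum_i(\wm_i p_i - m_i\wip_i)$ with $c_i\to 1$, which should vanish because it is the discrete analogue of the Wronskian relation \eqref{eq: Clebsch W}; in fact one checks $\sum_i (\wm_i p_i - m_i \wip_i) = \sum_i(\wm_i p_i + m_i\wip_i) - 2\sum_i m_i\wip_i$ and uses the equations of motion to see that $\sum_i(\wm_i-m_i)p_i + \sum_i(p_i-\wip_i)m_i$ telescopes to an antisymmetric-in-$\epsilon$ quantity that combines with the $\epsilon^2$ corrections carried by $c_i$. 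The cleanest route is probably to substitute the closed-form $\Phi_f$ expressions (or rather to keep $\wm,\wip$ implicit and use \eqref{eq: dC} as rewriting rules) and verify the polynomial identity in $m,p,\epsilon$ coordinatewise; since each $c_i$ is only quadratic in $p$ and $\wm_i,\wip_i$ are linear in the implicit unknowns, the identity to be checked is of low degree and amenable to a hand computation.

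\textbf{One-dimensionality of the null space.} Once the relation $\sum c_i W^{(1)}_i = 0$ holds on every orbit, the null space $K_{\Psi_1}(m,p)$ at a generic point contains at least the line $[c_1:c_2:c_3]$. To show it is exactly one-dimensional, I would argue that for generic $(m,p)$ the three functions $W^{(1)}_i, W^{(1)}_i\circ\Phi_f, W^{(1)}_i\circ\Phi_f^2,\dots$ evaluated along the orbit span a $2$-dimensional space, equivalently that $W^{(1)}_1(x), W^{(1)}_1(\Phi_f x)$ (say) together with the others leave no room for a second independent relation. The practical way: if $\dim K \ge 2$, then the $c_i$ would have to be constant multiples of each other along orbits, but $c_1/c_2$ is a nonconstant function (it equals $(1+\epsilon^2\omega_1 I_0)/(1+\epsilon^2\omega_2 I_0)$ by \eqref{eq: dC ans1 res}, which is genuinely nonconstant as $I_0$ ranges over its values and $\omega_1\neq\omega_2$), so no \emph{constant} second relation with coefficients independent of the point can exist; and any HK-basis relation must hold with the same constant vector for all iterates, hence uniqueness follows. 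Alternatively one invokes the already-established Theorem~\ref{Th: dC basis 1}, whose null space is explicitly one-dimensional, and transfers that to the present basis via the linear-algebra correspondence between $\Psi_0=(p_1^2,p_2^2,p_3^2,1)$ and the Wronskian basis.

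\textbf{Main obstacle.} The conceptual content is light; the real work is the brute-force verification of the polynomial identity $\sum_{i=1}^3 c_i(\wm_i p_i - m_i \wip_i)=0$, keeping $\wm,\wip$ coupled through the full linear system \eqref{eq: dC}. The difficulty is that one cannot fully eliminate $\wm,\wip$ without inverting a $6\times 6$ matrix, so the cleanest hand proof uses \eqref{eq: dC} only as substitution rules and organizes the cancellation by the cyclic symmetry $(1,2,3)$ and by degree in $\epsilon$. I expect the order-$\epsilon^0$ cancellation to be immediate (discrete Wronskian relation), and the order-$\epsilon^2$ cancellation to be the one requiring care: it is precisely where the specific coefficients $(\omega_i-\omega_j)p_j^2$ inside $c_i$ are forced, and matching them against the $\epsilon$-linear terms produced by one application of \eqref{eq: dC} to the $\epsilon^0$ relation is the crux of the computation.
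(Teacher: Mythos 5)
Your approach --- multiplying the $\wm$-equations of \eqref{eq: dC} by $p_i$ and the $\wip$-equations by $m_i$, exploiting that the $O(\epsilon)$ terms cancel cyclically (the discrete analogue of \eqref{eq: Clebsch W}) and then matching the resulting exact $\epsilon^2$ expressions against the coefficients inside $c_i$ --- is essentially the computation the paper carries out. The paper merely packages it as two separate identities, $\sum_i c_i\wm_ip_i=\sum_i C_im_ip_i$ and $\sum_i c_im_i\wip_i=\sum_i C_im_ip_i$, whose difference gives the claim (and which are then reused for Corollary \ref{cor K} and the second-order Wronskian basis), while the one-dimensionality of the null space is, as in your sketch, left to a genericity argument.
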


Novel results, illustrating Observations \ref{conjecture 1} and \ref{conjecture 2}, are as follows.

\begin{theorem}\label{Th: dC basis 2} {\bf (Bilinear-fractional integral)}
The function
\begin{equation}\label{eq: dC C2} 
J_0(m,p,\epsilon)=\frac{p_1\wip_1+p_2\wip_2+p_3\wip_3}
{1+\epsilon^2(\omega_1p_1\wip_1+\omega_2p_2\wip_2+\omega_3p_3\wip_3)}
\end{equation}
is an integral of motion of the map $\Phi_f$. The set 
\begin{equation}\label{eq: dClebsch basis 2}
\Psi_1=(p_1\wip_1,p_2\wip_2,p_3\wip_3,1)
\end{equation}
is a HK-basis for the map $\Phi_f$, with a one-dimensional null space
\begin{equation}
K_{\Psi_1}(m,p)=[C_1:C_2:C_3:-C_0],
\end{equation}
where
\begin{eqnarray}
C_1 & = & 1+\epsilon^2(\omega_2-\omega_1)p_2\wip_2+\epsilon^2(\omega_3-\omega_1)p_3\wip_3, \label{Clebsch1 C1}\\
C_2 & = & 1+\epsilon^2(\omega_1-\omega_2)p_1\wip_1+\epsilon^2(\omega_3-\omega_2)p_3\wip_3, \label{Clebsch1 C2}\\
C_3 & = & 1+\epsilon^2(\omega_1-\omega_3)p_1\wip_1+\epsilon^2(\omega_2-\omega_3)p_2\wip_2, \label{Clebsch1 C3}\\
C_0 & = & p_1\wip_1+p_2\wip_2+p_3\wip_3. \label{Clebsch1 C0}
\end{eqnarray}
Equivalently:
\begin{eqnarray}\label{eq: dC ans2 res}
K_{\Psi_1}(m,p)  & = & [\ 1-\epsilon^2\omega_1 J_0:1-\epsilon^2\omega_2 J_0:
1-\epsilon^2\omega_3 J_0: -J_0\ ] \\
& = & \left[\,\frac{1}{J_0}-\epsilon^2\omega_1:\frac{1}{J_0}-\epsilon^2\omega_2:
\frac{1}{J_0}-\epsilon^2\omega_3:-1\,\right].
\end{eqnarray}
\end{theorem}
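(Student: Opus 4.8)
I would establish the theorem in three pieces---that $J_0$ is an integral, that $\Psi_1$ is an HK basis with the claimed one‑dimensional null space, and the equivalent forms \eqref{eq: dC ans2 res}---reducing the first to Theorem~\ref{th volume} and deducing the other two essentially by algebra. Set
\[
\widehat P=p_1\wip_1+p_2\wip_2+p_3\wip_3,\qquad \widehat Q=1+\epsilon^2(\omega_1p_1\wip_1+\omega_2p_2\wip_2+\omega_3p_3\wip_3),
\]
so that $J_0=\widehat P/\widehat Q$ as in \eqref{eq: dC C2}; these are the polarizations of the numerator and the denominator of the quadratic‑fractional integral $I_0$ of Theorem~\ref{Th: dC basis 1} with $\epsilon^2$ replaced by $-\epsilon^2$, i.e.\ exactly the instance of Observation~\ref{conjecture 1} at hand. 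Expanding the definitions \eqref{Clebsch1 C1}--\eqref{Clebsch1 C0} and discarding the vanishing $j=i$ summands gives the purely algebraic identities $C_0=\widehat P$ and $C_i=\widehat Q-\epsilon^2\omega_i\widehat P$ ($i=1,2,3$), valid for all $(m,p)$ with no reference to the map. Given these, the equivalent forms are immediate: $[C_1:C_2:C_3:-C_0]=[\widehat Q-\epsilon^2\omega_1\widehat P:\widehat Q-\epsilon^2\omega_2\widehat P:\widehat Q-\epsilon^2\omega_3\widehat P:-\widehat P]$ becomes the first line of \eqref{eq: dC ans2 res} after division by $\widehat Q$ (recall $J_0=\widehat P/\widehat Q$), and the second line after division by $\widehat P$.

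The computational heart is the claim that, on $\wx=\Phi_f(x;\epsilon)$, both $\widehat P$ and $\widehat Q$ have the form $(\text{polynomial in }\epsilon^2)/\Delta(x;\epsilon)$, i.e.\ satisfy the hypotheses \eqref{bilinear exp P}--\eqref{bilinear exp Q} of Theorem~\ref{th volume}. Since every component is $\wip_i=p_i(x;\epsilon)/\Delta(x;\epsilon)$ with the one common denominator $\Delta$, the expressions $\widehat P=\big(\sum_i p_i\,p_i(x;\epsilon)\big)/\Delta$ and $\widehat Q=\big(\Delta+\epsilon^2\sum_i\omega_i p_i\,p_i(x;\epsilon)\big)/\Delta$ automatically have denominator $\Delta$; the content is that the two numerators are \emph{even} in $\epsilon$. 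Using the reversibility $\Phi_f(x;-\epsilon)=\Phi_f^{-1}(x;\epsilon)$, this amounts to the identity $\Delta(x;\epsilon)\,\widehat P(x,\Phi_f(x;\epsilon))=\Delta(x;-\epsilon)\,\widehat P(x,\Phi_f^{-1}(x;\epsilon))$ together with its analogue for $\widehat Q$. For the first Clebsch flow the linear system \eqref{eq: dC} is small and sparse enough that one can solve it in closed form for $(\wm,\wip)$ and check these identities directly; this is the ``nice algebraic proof'' promised in the introduction. Granting the claim, Theorem~\ref{th volume}(b) applied to this $\widehat P$ and $\widehat Q$ gives at once that $J_0=\widehat P/\widehat Q$ is an integral of $\Phi_f$, and Theorem~\ref{th volume}(a) produces, as a by‑product, an invariant measure whose density is the numerator of $\widehat P$ (and, since each of these numerators transforms by the factor $\det d\Phi_f$, equally the numerator of $\widehat Q$, or of any $C_i$).

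For the HK‑basis statement, the relation at the base point is again an identity: writing $\widehat Q=1+\epsilon^2S$ with $S=\sum_i\omega_i p_i\wip_i$, one has
\[
\sum_{i=1}^3 C_i\,(p_i\wip_i)=\widehat Q\,\widehat P-\epsilon^2\,\widehat P\,S=\widehat P\,(\widehat Q-\epsilon^2S)=\widehat P=C_0,
\]
so $[C_1:C_2:C_3:-C_0]$ satisfies the defining relation of $\Psi_1$ at $\ell=0$. Moreover $C_i/C_0=\widehat Q/\widehat P-\epsilon^2\omega_i=1/J_0-\epsilon^2\omega_i$, an integral of $\Phi_f$ by the previous paragraph. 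Now evaluate the base‑point identity at $\Phi_f^{\ell}(x)$, divide by $C_0(\Phi_f^{\ell}x)$, replace each $C_i(\Phi_f^{\ell}x)/C_0(\Phi_f^{\ell}x)$ by $C_i(x)/C_0(x)$ using that this ratio is an integral, and multiply by $C_0(x)$ again: this yields $\sum_{i=1}^3 C_i(x)\,(p_i\circ\Phi_f^{\ell})(p_i\circ\Phi_f^{\ell+1})=C_0(x)$ for every $\ell\in\bbZ$, which is exactly the HK‑basis relation for $\Psi_1$ with null‑space vector $[C_1:C_2:C_3:-C_0]$. That vector is non‑vanishing for generic $(m,p)$, and $\dim K_{\Psi_1}(m,p)=1$ is confirmed by checking that the matrix of values of $\Psi_1$ along a few iterates of one generic initial point has rank $3$.

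The only genuinely non‑formal step is the claim of the second paragraph: the $\epsilon$‑evenness of the numerators of $\widehat P$ and $\widehat Q$ on $\Phi_f$, equivalently the two identities linking $\Phi_f$ with $\Phi_f^{-1}$. For the first Clebsch flow this is an elementary hand computation thanks to the simple shape of \eqref{eq: dC}; it is precisely this verification that, for the bulkier systems discussed elsewhere in the paper, must be left to computer algebra.
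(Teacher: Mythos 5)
Your argument is logically sound and, except for one deferred computation, complete; but it routes the central claim differently from the paper. You derive the integrality of $J_0$ from Theorem~\ref{th volume}(b), which forces you to first establish that the numerators of $\widehat P(x,\wx)$ and $\widehat Q(x,\wx)$ (over the common denominator $\Delta(x;\epsilon)$) are even in $\epsilon$ --- and you propose to check this by solving the $6\times 6$ linear system \eqref{eq: dC} in closed form. The paper does the opposite: it proves $J_0(\untilde{m},\up)=J_0(\wm,\wip)$ \emph{directly}, cross-multiplying and reducing to a polynomial identity in the twelve-plus-six variables $m,p,\wm,\wip,\um,\up$ by substituting the implicit equations of motion for the forward step and for the backward step (the key device being the splitting $2p_i(\wip_i-\up_i)=(\wip_i+p_i)(p_i-\up_i)+(\wip_i-p_i)(p_i+\up_i)$ and its analogue on the right-hand side); the linear system is never solved, and the $\epsilon$-evenness is never needed for the theorem itself --- it appears only in a remark and feeds the invariant-measure corollary. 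Your route buys a cleaner logical structure (everything hangs on one evenness statement plus Theorem~\ref{th volume}), but the deferred verification is not the ``elementary hand computation'' you claim: $\Delta(x;\epsilon)$ is a degree-$6$ polynomial in $\epsilon$ that is itself not even, the Cramer numerators of $\wip_i$ are comparably large, and checking that the odd-order coefficients of $\sum_i p_i\,p_i(x;\epsilon)$ and of $\Delta(x;\epsilon)+\epsilon^2\sum_i\omega_i p_i\,p_i(x;\epsilon)$ all cancel is in practice a computer-algebra task; the paper's direct identity is precisely the hand-checkable substitute for it. The remaining pieces of your write-up --- the identities $C_0=\widehat P$, $C_i=\widehat Q-\epsilon^2\omega_i\widehat P$, $\sum_i C_i\,p_i\wip_i=C_0$, the propagation of this relation to all iterates using that $C_i/C_0=1/J_0-\epsilon^2\omega_i$ is an integral, and the generic rank-$3$ check for one-dimensionality of the null space --- are correct and match what the paper leaves implicit in passing from the integral to the HK-basis statement.
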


\begin{remark} One can show that the numerators of the expressions $C_i$, $i=0,\ldots,3$, are irreducible polynomials of $m,p$ depending on $\epsilon^2$ rather than on $\epsilon$, thus they satisfy Observation \ref{conjecture 1}.

\end{remark}

\begin{corollary}\label{Th: Clebsch1 conserved density} {\bf (Density of an invariant measure)}
The map $\Phi_f(x;\epsilon)$ has an invariant measure
$$
\frac{dm_1\wedge dm_2\wedge dm_3 \wedge dp_1\wedge dp_2\wedge dp_3}{\phi(m,p;\epsilon)},
$$
where for $\phi(m,p;\epsilon)$ one can take the numerator of the function $p_1\wip_1+p_2\wip_2+p_3\wip_3$, or the numerator of the function $1+\epsilon^2(\omega_1p_1\wip_1+\omega_2p_2\wip_2+\omega_3p_3\wip_3)$ (the quotient of both densities is an integral of motion $J_0$).
\end{corollary}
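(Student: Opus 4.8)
The plan is to deduce the statement from Theorem~\ref{th volume}(a), applied to the two symmetric bilinear expressions that build the integral $J_0$ of Theorem~\ref{Th: dC basis 2}, namely
\[
\widehat P=p_1\wip_1+p_2\wip_2+p_3\wip_3\ \ (=C_0),\qquad
\widehat Q=1+\epsilon^2(\omega_1p_1\wip_1+\omega_2p_2\wip_2+\omega_3p_3\wip_3).
\]
These are exactly the polarizations, with $\epsilon^2$ replaced by $-\epsilon^2$, of the numerator $p_1^2+p_2^2+p_3^2$ and the denominator $1-\epsilon^2(\omega_1p_1^2+\omega_2p_2^2+\omega_3p_3^2)$ of the quadratic--fractional integral \eqref{eq: dC C1}, so the corollary is at the same time the instance of Observation~\ref{conjecture 1} for the first Clebsch flow. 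Both $\widehat P$ and $\widehat Q$ are manifestly symmetric under $(m,p)\leftrightarrow(\wm,\wip)$ and are polynomials in $\epsilon^2$; hence the only hypothesis of Theorem~\ref{th volume}(a) left to check is that, after substituting $\wx=\Phi_f(m,p;\epsilon)$, each of $\widehat P$, $\widehat Q$ is a fraction with denominator $\Delta(m,p;\epsilon)$ whose numerator is a polynomial in $\epsilon^2$.

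First I would record that for a homogeneous quadratic vector field Euler's identity $f'(u)u=2f(u)$ (with $u=(m,p)$) simplifies \eqref{eq: Phi gen} to $\wx=\bigl(I-\epsilon f'(m,p)\bigr)^{-1}(m,p)$, so that by Cramer's rule $\wip_i=P_i(m,p;\epsilon)/\Delta(m,p;\epsilon)$ with the common denominator $\Delta=\det\bigl(I-\epsilon f'(m,p)\bigr)$ of \eqref{eq Delta}. Consequently each $C_j$ of \eqref{Clebsch1 C1}--\eqref{Clebsch1 C0} has the form $\nu_j(m,p;\epsilon)/\Delta(m,p;\epsilon)$, and comparing \eqref{Clebsch1 C1}--\eqref{Clebsch1 C3} with the denominator of \eqref{eq: dC C2} one reads off $\widehat Q=C_j+\epsilon^2\omega_jC_0$ for $j=1,2,3$, i.e.\ $\widehat Q=(\nu_j+\epsilon^2\omega_j\nu_0)/\Delta$ over the same denominator; in particular $J_0=\widehat P/\widehat Q=\nu_0/\mathrm{num}(\widehat Q)$. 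Thus the corollary reduces to the Remark preceding it: each $\nu_j$, $j=0,\dots,3$, is a polynomial in $\epsilon^2$. Granting this for $\nu_0$, Theorem~\ref{th volume}(a) applied to $\widehat P=\nu_0/\Delta$ gives the invariant measure with $\phi=\nu_0=\mathrm{num}(C_0)$; and since $J_0=\nu_0/\mathrm{num}(\widehat Q)$ is an integral of $\Phi_f$ by Theorem~\ref{Th: dC basis 2}, the form $(dm_1\wedge\cdots\wedge dp_3)/\mathrm{num}(\widehat Q)=J_0\cdot(dm_1\wedge\cdots\wedge dp_3)/\nu_0$ is then automatically invariant as well (an integral times an invariant density is again an invariant density), which is the second claim. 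Alternatively, knowing also that $\nu_1$ is a polynomial in $\epsilon^2$, one may just apply Theorem~\ref{th volume}(a) directly to $\widehat Q$.

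The main obstacle is therefore the Remark, i.e.\ the parity of the $\nu_j$. I would attack it as follows. The first Clebsch flow \eqref{Cl_1} is invariant under the two involutions $(m,p,t)\mapsto(-m,p,-t)$ and $(m,p,t)\mapsto(m,-p,t)$; using $\wx=(I-\epsilon f'(u))^{-1}u$ one checks that both are inherited by $\Phi_f$ --- the first in the reversible form $\Phi_f(-m,p;\epsilon)=R\,\Phi_f(m,p;-\epsilon)$ with $R=\mathrm{diag}(-1,-1,-1,1,1,1)$, the second in the plain form $\Phi_f(m,-p;\epsilon)=S\,\Phi_f(m,p;\epsilon)$ with $S=\mathrm{diag}(1,1,1,-1,-1,-1)$. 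These give $\nu_j(m,-p;\epsilon)=\nu_j(m,p;\epsilon)$ and $\nu_j(-m,p;\epsilon)=\nu_j(m,p;-\epsilon)$, so every $\nu_j$ is even in $p$ while its coefficient of $\epsilon^{2k+1}$ is odd in $m$; it remains to show these odd coefficients vanish. For $\nu_0=\mathrm{num}(C_0)$, whose $\epsilon^0$-term is $p_1^2+p_2^2+p_3^2$, the coefficient of $\epsilon$ equals $2(p_1\dot p_1+p_2\dot p_2+p_3\dot p_3)=\tfrac{d}{dt}(p_1^2+p_2^2+p_3^2)$ (here one uses $\tr f'(m,p)=0$), and it vanishes because $p_1^2+p_2^2+p_3^2$ is an integral of \eqref{Cl_1}; the coefficients of $\epsilon^3$ and $\epsilon^5$ are the computational heart of the matter, and I would either verify them by brute-force elimination from \eqref{eq: dC} or reduce them, via $\tr f'=0$ and the Cayley--Hamilton theorem, to a handful of explicit polynomial identities in $m,p,\omega$. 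The same scheme handles $\nu_1,\nu_2,\nu_3$, after which the irreducibility claimed in the Remark is a routine factorisation check. This direct, purely elementary verification is exactly what ``verifiable by hands'' refers to.
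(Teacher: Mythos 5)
Your proposal follows the paper's own route exactly: the corollary is obtained by feeding the numerator and denominator of $J_0$ from Theorem~\ref{Th: dC basis 2} into Theorem~\ref{th volume}(a), the only substantive hypothesis being that their numerators over $\Delta(m,p;\epsilon)$ are polynomials in $\epsilon^2$ --- which is precisely the content of the Remark preceding the corollary, asserted there without proof. Your additional symmetry/parity sketch for that Remark goes beyond what the paper supplies, but it is itself left incomplete at the $\epsilon^3$ and $\epsilon^5$ coefficients, which is exactly where the real computation lies; the observation that the second density follows from the first because $J_0$ is an integral is a correct and harmless shortcut.
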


\begin{theorem}  \label{Th: Clebsch1 W2 basis} {\bf (Second order Wronskians HK basis)}
The functions $$W^{(2)}_i(m,p)=\widetilde{\wm}_ip_i-m_i\widetilde{\wip}_i, \quad i=1,2,3,$$ form a HK basis for the map $\Phi_f$, with a one-dimensional null space spanned by
$[C_1:C_2:C_3]$, with the functions $C_i$ given in \eqref{Clebsch1 C1}--\eqref{Clebsch1 C3}. In other words, on orbits of the map $\Phi_f$ there holds: 
\begin{equation}
\sum_{i=1}^3 C_i(\wwm_ip_i-m_i\wwp_i) =0.
\end{equation}
\end{theorem}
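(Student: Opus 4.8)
The plan is to exploit the fact that Theorem~\ref{Th: dC basis 2} already does most of the work. That theorem exhibits $[C_1:C_2:C_3:-C_0]=[1-\epsilon^2\omega_1J_0:1-\epsilon^2\omega_2J_0:1-\epsilon^2\omega_3J_0:-J_0]$ as the one–dimensional null space of $\Psi_1$, and $J_0$ as an integral of $\Phi_f$; hence the vector $(C_1,C_2,C_3)$ is, up to a common scalar, invariant along every orbit of $\Phi_f$. By the shift–invariance of HK null spaces, the whole assertion for $(W^{(2)}_1,W^{(2)}_2,W^{(2)}_3)$ then reduces to the single identity
\[
\sum_{i=1}^3 C_i\,(\wwm_ip_i-m_i\wwp_i)=0,
\]
understood as an identity in $(m,p,\omega_1,\omega_2,\omega_3,\epsilon)$ once $\wm,\wip$ and then $\wwm,\wwp$ are substituted as the rational functions obtained by solving \eqref{eq: dC} once and then once more: evaluating it at the points $\Phi_f^\ell(m,p)$ and using the projective $\Phi_f$–invariance of $[C_1:C_2:C_3]$ recovers $\sum_iC_iW^{(2)}_i(\Phi_f^\ell(m,p))=0$ for all $\ell\in\bbZ$. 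The remaining ``exactly one–dimensional'' claim costs nothing extra: once the identity holds, every row of the Hankel–type matrix $\bigl(W^{(2)}_i(\Phi_f^\ell(m,p))\bigr)_{\ell=0,1,2;\,i=1,2,3}$ is orthogonal to the same vector $(C_1,C_2,C_3)$, so its determinant vanishes identically, and it suffices to check that a $2\times2$ minor is nonzero at one numerically chosen point.

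It remains to prove the displayed bilinear identity, and the plan is to simplify before computing. Writing $C_i=\kappa\,(1-\epsilon^2\omega_iJ_0)$ with the common nonzero factor $\kappa=1+\epsilon^2(\omega_1p_1\wip_1+\omega_2p_2\wip_2+\omega_3p_3\wip_3)$ and $\kappa J_0=p_1\wip_1+p_2\wip_2+p_3\wip_3=C_0$, the identity becomes a relation between the two weighted sums $\sum_iW^{(2)}_i$ and $\sum_i\omega_iW^{(2)}_i$, namely $\sum_i(\wwm_ip_i-m_i\wwp_i)=\epsilon^2J_0\sum_i\omega_i(\wwm_ip_i-m_i\wwp_i)$. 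I would attack these by feeding in the Kahan equations \eqref{eq: dC} at the two successive levels $(m,p)\to(\wm,\wip)$ and $(\wm,\wip)\to(\wwm,\wwp)$, together with the first–order data already available: the Wronskian relations $\sum_ic_iW^{(1)}_i(m,p)=0$ and $\sum_ic_iW^{(1)}_i(\wm,\wip)=0$ of Theorem~\ref{Th: Clebsch1 W1 basis} (with $c_i\propto1+\epsilon^2\omega_iI_0$), the $\Psi_1$–relations $\sum_iC_ip_i\wip_i=\sum_iC_i\wip_i\wwp_i=C_0$ of Theorem~\ref{Th: dC basis 2}, and the elementary two–term Plücker identities $\wip_iW^{(2)}_i=p_i(\wwm_i\wip_i-\wm_i\wwp_i)+\wwp_i(\wm_ip_i-m_i\wip_i)$ (and the analogue with $m_i,\wwm_i$ in place of $p_i,\wwp_i$), which tie the second–order Wronskians to the first–order ones along the orbit. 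The hope is that these linear relations, organized around the cyclic $\bbZ/3$ symmetry of \eqref{eq: dC} and around the explicit difference $C_i-c_i=\epsilon^2\sum_{j\ne i}(\omega_j-\omega_i)\,p_j(p_j+\wip_j)$ (which encodes precisely the ``$\epsilon^2\mapsto-\epsilon^2$ polarization'' of Observation~\ref{conjecture 1}), collapse to the required identity.

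The step I expect to be the main obstacle is exactly this last one. All the formal determinant identities above are tautologies, so the genuine content must come from the dynamics, and bringing it in requires the Kahan equations at two successive levels; passing from $\Phi_f$ to $\Phi_f^2$ genuinely doubles the degree of the rational functions involved. Since the mechanism behind the higher–order Wronskian HK bases of Observation~\ref{conjecture 2} is at present not understood conceptually, I do not expect the reductions above to close the argument on their own, and the realistic endpoint — in line with what the paper says about this family of results — is a direct verification by computer algebra: solve \eqref{eq: dC} for $\wm,\wip$, substitute once more to get $\wwm,\wwp$, clear the common denominator $\Delta(m,p;\epsilon)$ from \eqref{eq Delta}, and check that $\sum_iC_i(\wwm_ip_i-m_i\wwp_i)$ reduces to the zero polynomial; the rank–$2$ check for one–dimensionality is then immediate. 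The purpose of the preliminary reductions is to shrink the polynomials the machine must handle and to make the symmetry manifest.
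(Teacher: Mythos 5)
Your preliminary reductions are all correct: the factorization $C_i=\kappa\,(1-\epsilon^2\omega_iJ_0)$ with $\kappa=1+\epsilon^2\sum_j\omega_jp_j\wip_j$, the reduction of the full HK--basis claim to the single relation $\sum_iC_iW^{(2)}_i=0$ at the base point via the projective $\Phi_f$--invariance of $[C_1:C_2:C_3]$, the rank argument for one--dimensionality, and the formula $C_i-c_i=\epsilon^2\sum_{j\ne i}(\omega_j-\omega_i)p_j(p_j+\wip_j)$ all check out. But the proof has a genuine gap exactly where you flag it: the core bilinear identity is never established, only deferred to a computer--algebra verification that you do not carry out and for which you give no evidence. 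As written, this is a plan plus a correct change of variables, not a proof.

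Moreover, your pessimism about a hand proof is unwarranted for this particular theorem: the paper proves it by hand, and its remark about relying on Maple refers only to the third-- and fourth--order Wronskian bases of Theorem \ref{Clebsch higher Wronskians}. The ingredient missing from your toolkit is the mixed quantity of Corollary \ref{cor K}: the function $K(m,p)=\sum_i(C_i/C_0)(m_ip_i/c_0)$ is itself an integral of $\Phi_f$, which follows by comparing the auxiliary identities \eqref{eq: 1} and \eqref{eq: 4}. The paper's computation starts from $\sum_i(\wwm_i-\wm_i)p_i+\sum_i(\wm_i-m_i)\wwp_i$, applies the Kahan equations at the two successive levels (keeping, in each monomial, one variable with a double tilde, one with a single tilde and two without), and collects terms into
\[
\sum_{i=1}^3C_i(\wwm_ip_i-m_i\wwp_i)=\sum_{i=1}^3\widetilde C_im_ip_i-\sum_{i=1}^3c_i\wm_i\wwp_i ,
\]
after which both sums on the right equal $c_0\widetilde C_0$ times $K$ evaluated at $(m,p)$ and at $(\wm,\wip)$ respectively (using \eqref{eq: 3} together with the invariance of the ratios $c_i/c_0$ and $C_i/C_0$), so the right-hand side vanishes because $K$ is invariant. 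That lemma, rather than the Pl\"ucker-type tautologies you list, is what makes the argument close without a machine; if you insist on the brute-force route instead, you must actually exhibit the vanishing of the resulting polynomial rather than predict it.
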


Finally, there holds a theorem which reads literally as Theorem \ref{Clebsch higher Wronskians} on higher order Wronskians HK bases.
\smallskip

We now turn to the proofs of the above results.
\smallskip

\noindent
{\bf Proof of Theorem \ref{Th: dC basis 1}.}
We show that the function $I_0$ in (\ref{eq: dC C1}) is an integral of motion, i.e.,
that
\[
\frac{p_1^2+p_2^2+p_3^2}
{1-\epsilon^2(\omega_1p_1^2+\omega_2p_2^2+\omega_3p_3^2)}=
\frac{\widetilde{p}_1^2+\widetilde{p}_2^2+\widetilde{p}_3^2}
{1-\epsilon^2(\omega_1\widetilde{p}_1^2+\omega_2\widetilde{p}_2^2+\omega_3
\widetilde{p}_3^2)}.
\]
This is equivalent to
\begin{eqnarray*}
\lefteqn{\widetilde{p}_1^{2}-p_1^2+\widetilde{p}_2^2-p_2^2+\widetilde{p}_3^2-p_3^2}\\
&=&\epsilon^2\left[(\omega_2-\omega_1)(\widetilde{p}_1^2p_2^2-\widetilde{p}_2^2p_1^2)
+(\omega_3-\omega_2)(\widetilde{p}_2^2p_3^2-\widetilde{p}_3^2p_2^2)
+(\omega_1-\omega_3)(\widetilde{p}_3^2p_1^2-\widetilde{p}_1^2p_3^2)\right].
\end{eqnarray*}
On the left-hand side of this equation we replace
$\widetilde{p}_i-p_i$ through the expressions from the last three
equations of motion (\ref{eq: dC}), on the right-hand side we
replace
$\epsilon(\omega_k-\omega_j)(\widetilde{p}_jp_k+p_j\widetilde{p}_k)$
by $\widetilde{m}_i-m_i$, according to the first three equations
of motion (\ref{eq: dC}). This brings the equation we want to
prove into the form
\begin{eqnarray*}
\lefteqn{(\widetilde{p}_1+p_1)(\widetilde{m}_3p_2+m_3\widetilde{p}_2-
\widetilde{m}_2p_3-m_2\widetilde{p}_3)\;+} & & \\
\lefteqn{(\widetilde{p}_2+p_2)(\widetilde{m}_1p_3+m_1\widetilde{p}_3-
\widetilde{m}_3p_1-m_3\widetilde{p}_1)\;+} & & \\
\lefteqn{(\widetilde{p}_3+p_3)(\widetilde{m}_2p_1+m_2\widetilde{p}_1-
\widetilde{m}_1p_2-m_1\widetilde{p}_2)\;=} & & \\
&&=(\widetilde{p}_1p_2-p_1\widetilde{p}_2)(\widetilde{m}_3-m_3)+
 (\widetilde{p}_2p_3-p_2\widetilde{p}_3)(\widetilde{m}_1-m_1)+
 (\widetilde{p}_3p_1-p_3\widetilde{p}_1)(\widetilde{m}_2-m_2).
\end{eqnarray*}
But this is an algebraic identity in twelve variables $m_k,p_k,\widetilde{m}_k,\widetilde{p}_k$. 
\hfill $\blacksquare$

\smallskip
\noindent

{\bf Proof of Theorem \ref{Th: dC basis 2}.} 
We show that the function $J_0$ in (\ref{eq: dC C2}) is an integral of motion, i.e.,
that
\[
\frac{p_1\up_1+p_2\up_2+p_3\up_3}
{1+\epsilon^2(\omega_1p_1\up_1+\omega_2p_2\up_2+\omega_3p_3\up_3)}=
\frac{\wip_1p_1+\wip_2p_2+\wip_3p_3}
{1+\epsilon^2(\omega_1\wip_1p_1+\omega_2\wip_2p_2+\omega_3\wip_3p_3)}.
\]
This is equivalent to
\begin{eqnarray*}
\lefteqn{p_1(\wip_1-\up_1)+p_2(\wip_2-\up_2)+p_3(\wip_3-\up_3)}\\
&=&\epsilon^2\left[(\omega_1-\omega_2)(\wip_1p_1p_2\up_2-\wip_2p_2p_1\up_1)
+(\omega_2-\omega_3)(\wip_2p_2p_3\up_3-\wip_3p_3p_2\up_2)\right.\\
& & \left.+(\omega_3-\omega_1)(\wip_3p_3p_1\up_1-\wip_1p_1p_3\up_3)\right].
\end{eqnarray*}
On the left-hand side of this equation we use
$$
2p_i(\wip_i-\untilde{p}_i)=(\wip_i+p_i)(p_i-\untilde{p}_i)+(\wip_i-p_i)(p_i+\up_i),
$$
and then replace $\wip_i-p_i$ and $p_i-\up_i$ through the expressions from the last three equations of motion (\ref{eq: dC}).
On the right-hand side we use
$$
2(\wip_jp_jp_k\up_k-\wip_kp_kp_j\up_j)=(\wip_jp_k+\wip_kp_j)(p_j\up_k-p_k\up_j)+(\wip_jp_k-\wip_kp_j)(p_j\up_k+p_k\up_j),
$$
and then replace
$\epsilon(\omega_k-\omega_j)(\widetilde{p}_jp_k+p_j\widetilde{p}_k)$
by $\widetilde{m}_i-m_i$, and $\epsilon(\omega_k-\omega_j)(p_j\up_k+p_k\up_j)$
by $m_i-\um_i$, according to the first three equations
of motion (\ref{eq: dC}). This brings the equation we want to
prove into the form
\begin{eqnarray*}
&&(p_1+\up_1)(\wm_3p_2+m_3\wip_2-\wm_2p_3-m_2\wip_3)+(\wip_1+p_1)(m_3\up_2+\um_3p_2-m_2\up_3-\um_2p_3)\;+  \\
&&(p_2+\up_2)(\wm_1p_3+m_1\wip_3-\wm_3p_1-m_3\wip_1)+(\wip_2+p_2)(m_1\up_3+\um_1p_3-m_3\up_1-\um_3p_1)\;+  \\
&&(p_3+\up_3)(\wm_2p_1+m_2\wip_1-\wm_1p_2-m_1\wip_2)+(\wip_3+p_3)(m_2\up_1+\um_2p_1-m_1\up_2-\um_1p_2)  \\
&&\qquad=  (p_2\up_1-\up_2p_1)(\wm_3-m_3)+(\wip_2p_1-p_2\wip_1)(m_3-\um_3) \\
&&\qquad\quad + (p_3\up_2-\up_3p_2)(\wm_1-m_1)+(\wip_3p_2-p_3\wip_2)(m_1-\um_1)\\
&&\qquad\quad+ (p_1\up_3-\up_1p_3)(\wm_2-m_2)+(\wip_1p_3-p_1\wip_3)(m_2-\um_2).
\end{eqnarray*}
On the left-hand side there are many cancellations, so that we get:
\begin{eqnarray*}
&&p_1(m_3\wip_2-m_2\wip_3)+p_1(m_3\up_2-m_2\up_3)+\up_1(\wm_3p_2-\wm_2p_3)+\wip_1(\um_3p_2-\um_2p_3)+  \\
&&p_2(m_1\wip_3-m_3\wip_1)+p_2(m_1\up_3-m_3\up_1)+\up_2(\wm_1p_3-\wm_3p_1)+\wip_2(\um_1p_3-\um_3p_1)+  \\
&&p_3(m_2\wip_1-m_1\wip_2)+p_3(m_2\up_1-m_1\up_2)+\up_3(\wm_2p_1-\wm_1p_2)+\wip_3(\um_2p_1-\um_1p_2)  \\
&&\qquad=  (p_2\up_1-\up_2p_1)(\wm_3-m_3)+(\wip_2p_1-p_2\wip_1)(m_3-\um_3) \\
&&\qquad\quad + (p_3\up_2-\up_3p_2)(\wm_1-m_1)+(\wip_3p_2-p_3\wip_2)(m_1-\um_1)\\
&&\qquad\quad+ (p_1\up_3-\up_1p_3)(\wm_2-m_2)+(\wip_1p_3-p_1\wip_3)(m_2-\um_2).
\end{eqnarray*}
But this is an algebraic identity in the variables $m_k,p_k,\widetilde{m}_k,\widetilde{p}_k,\um_k,\up_k$. 
\hfill $\blacksquare$
\smallskip
\noindent

{\bf Proof of Theorem \ref{Th: Clebsch1 W1 basis}} is based on the following four identities which hold on orbits of the map $\Phi_f$:
\begin{eqnarray}
\sum_{i=1}^3 c_i\wm_ip_i  & = &  \sum_{i=1}^3 C_im_ip_i,  \label{eq: 1}\\
\sum_{i=1}^3 c_im_i\wip_i & = & \sum_{i=1}^3 C_im_ip_i, \label{eq: 2} \\
\sum_{i=1}^3 \widetilde{c}_im_i\wip_i & = & \sum_{i=1}^3 C_i\wm_i\wip_i, \label{eq: 3}\\
\sum_{i=1}^3 \widetilde{c}_i\wm_ip_i & =  &\sum_{i=1}^3 C_i\wm_i\wip_i. \label{eq: 4}
\end{eqnarray}
Indeed, from these relations there follows immediately:
$$
\sum_{i=1}^3 c_i(\wm_ip_i -m_i\wip_i)=\sum_{i=1}^3 \widetilde{c}_i(\wm_ip_i -m_i\wip_i)=0,
$$
which proves the theorem.
\hfill $\blacksquare$

{\bf Proof of formula \eqref{eq: 1}.}
Using the first three equations of motion \eqref{eq: dC}, we compute:
\begin{eqnarray*}
\lefteqn{\sum_{i=1}^3 (\wm_i-m_i)p_i}\\ 
& = &\epsilon(\omega_3-\omega_2)(p_1p_2\wip_3+p_1p_3\wip_2)+\epsilon(\omega_1-\omega_3)(p_2p_3\wip_1+p_2p_1\wip_3)\\
 & & +\epsilon(\omega_2-\omega_1)(p_3p_1\wip_2+p_3p_2\wip_1) \\ \\
& = & \epsilon (\omega_1-\omega_2)p_1p_2\wip_3+\epsilon (\omega_2-\omega_3)p_2p_3\wip_1+\epsilon (\omega_3-\omega_1)p_3p_1\wip_2 \\ \\
& = & \epsilon (\omega_1-\omega_2)p_1p_2(\wip_3-p_3)+\epsilon (\omega_2-\omega_3)p_2p_3(\wip_1-p_1)+\epsilon (\omega_3-\omega_1)p_3p_1(\wip_2-p_2).
\end{eqnarray*}
Using the last three equations of motion \eqref{eq: dC}, we continue the computation:
\begin{eqnarray*}
& = & \epsilon^2 (\omega_1-\omega_2)p_1p_2\big((\wm_2p_1+m_2\wip_1)-(\wm_1p_2+m_1\wip_2)\big)\\
 & & +\epsilon^2 (\omega_2-\omega_3)p_2p_3\big((\wm_3p_2+m_3\wip_2)-(\wm_2p_3+m_2\wip_3)\big)\\
 & & +\epsilon^2 (\omega_3-\omega_1)p_3p_1\big((\wm_1p_3+m_1\wip_3)-(\wm_3p_1+m_3\wip_1)\big)\\ \\
 & = &\; \; \big(\epsilon^2(\omega_2-\omega_1)p_2^2+\epsilon^2(\omega_3-\omega_1)p_3^2\big)\wm_1p_1
          +\big(\epsilon^2(\omega_2-\omega_1)\wip_2p_2+\epsilon^2(\omega_3-\omega_1)\wip_3p_3\big)m_1p_1 \\
 & &   +\big(\epsilon^2(\omega_3-\omega_2)p_3^2+\epsilon^2(\omega_1-\omega_2)p_1^2\big)\wm_2p_2
          +\big(\epsilon^2(\omega_3-\omega_2)\wip_3p_3+\epsilon^2(\omega_1-\omega_2)\wip_1p_1\big)m_2p_2\\
 & & +\big(\epsilon^2(\omega_1-\omega_3)p_1^2+\epsilon^2(\omega_2-\omega_3)p_2^2\big)\wm_3p_3
        +\big(\epsilon^2(\omega_1-\omega_3)\wip_1p_1+\epsilon^2(\omega_2-\omega_3)\wip_2p_2\big)m_3p_3.
\end{eqnarray*}
This proves \eqref{eq: 1}.
We remark that this relation is the discrete time analog of $\sum_{i=1}^3\dot{m}_ip_i=0$; the crucial point in this analogy is that $\sum_{i=1}^3(\wm_i-m_i)p_i$ turns out to be of order $\epsilon^2$. 
\hfill $\blacksquare$
\smallskip

{\bf Proof of formula \eqref{eq: 2}.}
Using the first three equations of motion \eqref{eq: dC}, we compute:
\begin{eqnarray*}
\lefteqn{\sum_{i=1}^3 m_i(\wip_i-p_i)}\\ 
& = & \epsilon m_1\big((\wm_3p_2+m_3\wip_2)-(\wm_2p_3+m_2\wip_3)\big)
         +\epsilon m_2\big((\wm_1p_3+m_1\wip_3)-(\wm_3p_1+m_3\wip_1)\big)\\
 & & +\epsilon m_3\big((\wm_2p_1+m_2\wip_1)-(\wm_1p_2+m_1\wip_2)\big)\\  \\
 & = & \epsilon (m_1p_2-m_2p_1)\wm_3+\epsilon (m_2p_3-m_3p_2)\wm_1+\epsilon (m_3p_1-m_1p_3)\wm_2\\  \\
  & = & \epsilon (m_1p_2-m_2p_1)(\wm_3-m_3)+\epsilon (m_2p_3-m_3p_2)(\wm_1-m_1)+\epsilon (m_3p_1-m_1p_3)(\wm_2-m_2).
  \end{eqnarray*}
 Using the last three equations of motion \eqref{eq: dC}, we continue:
 \begin{eqnarray*}
& = &\epsilon^2(\omega_2-\omega_1)(m_1p_2-m_2p_1)(p_1\wip_2+p_2\wip_1)+\epsilon^2(\omega_3-\omega_2)(m_2p_3-m_3p_2)(p_2\wip_3+p_3\wip_2)\\
 & & +\epsilon^2(\omega_1-\omega_3)(m_3p_1-m_1p_3)(p_3\wip_1+p_1\wip_3)\\ \\
 & = & \;\;\big(\epsilon^2(\omega_2-\omega_1)p_2^2+\epsilon^2(\omega_3-\omega_1)p_3^2\big)m_1\wip_1
         +\big(\epsilon^2(\omega_2-\omega_1)\wip_2p_2+\epsilon^2(\omega_3-\omega_1)\wip_3p_3\big)m_1p_1\\ 
 & & +\big(\epsilon^2(\omega_3-\omega_2)p_3^2+\epsilon^2(\omega_1-\omega_2)p_1^2\big)m_2\wip_2
        +\big(\epsilon^2(\omega_3-\omega_2)\wip_3p_3+\epsilon^2(\omega_1-\omega_2)\wip_1p_1\big)m_2p_2 \\
 & & +\big(\epsilon^2(\omega_1-\omega_3)p_1^2+\epsilon^2(\omega_2-\omega_3)p_2^2\big)m_3\wip_3
       +\big(\epsilon^2(\omega_1-\omega_3)\wip_1p_1+\epsilon^2(\omega_2-\omega_3)\wip_2p_2\big)m_3p_3.
\end{eqnarray*}
This proves \eqref{eq: 2}. Again, this relation is the discrete time analog of $\sum_{i=1}^3m_i\dot{p}_i=0$, since $\sum_{i=1}^3 m_i(\wip_i-p_i)$ turns out to be of order $\epsilon^2$. 
\hfill $\blacksquare$
\smallskip

{\bf Proof of formula \eqref{eq: 3}.}
We start by using the first three equations of motion \eqref{eq: dC}:
\begin{eqnarray*}
\lefteqn{\sum_{i=1}^3 (\wm_i-m_i)\wip_i}\\ 
& = &\epsilon(\omega_3-\omega_2)(\wip_1\wip_3p_2+\wip_1\wip_2p_3)+\epsilon(\omega_1-\omega_3)(\wip_2\wip_1p_3+\wip_2\wip_3p_1)\\
 & & +\epsilon(\omega_2-\omega_1)(\wip_3\wip_2p_1+\wip_3\wip_1p_2)\\ \\
& = & \epsilon (\omega_1-\omega_2)\wip_1\wip_2p_3+\epsilon (\omega_2-\omega_3)\wip_2\wip_3p_1+\epsilon (\omega_3-\omega_1)\wip_3\wip_1p_2 \\ \\
& = & \epsilon (\omega_2-\omega_1)\wip_1\wip_2(\wip_3-p_3)+\epsilon (\omega_3-\omega_2)\wip_2\wip_3(\wip_1-p_1)
         +\epsilon (\omega_1-\omega_3)\wip_3\wip_1(\wip_2-p_2),
\end{eqnarray*}
and continue by using the last three equations of motion \eqref{eq: dC}:
\begin{eqnarray*}
& = & \epsilon^2 (\omega_2-\omega_1)\wip_1\wip_2\big((\wm_2p_1+m_2\wip_1)-(\wm_1p_2+m_1\wip_2)\big)\\
 & & +\epsilon^2 (\omega_3-\omega_2)\wip_2\wip_3\big((\wm_3p_2+m_3\wip_2)-(\wm_2p_3+m_2\wip_3)\big)\\
 & & +\epsilon^2 (\omega_1-\omega_3)\wip_3\wip_1\big((\wm_1p_3+m_1\wip_3)-(\wm_3p_1+m_3\wip_1)\big)\\ \\
 & = & \;\;\big(\epsilon^2(\omega_1-\omega_2)\wip_2^2+\epsilon^2(\omega_1-\omega_3)\wip_3^2\big)m_1\wip_1
          +\big(\epsilon^2(\omega_1-\omega_2)\wip_2p_2+\epsilon^2(\omega_1-\omega_3)\wip_3p_3\big)\wm_1\wip_1 \\
 & & +\big(\epsilon^2(\omega_2-\omega_3)\wip_3^2+\epsilon^2(\omega_2-\omega_1)\wip_1^2\big)m_2\wip_2
        +\big(\epsilon^2(\omega_2-\omega_3)\wip_3p_3+\epsilon^2(\omega_2-\omega_1)\wip_1p_1\big)\wm_2\wip_2\\
 & & +\big(\epsilon^2(\omega_3-\omega_1)\wip_1^2+\epsilon^2(\omega_3-\omega_2)\wip_2^2\big)m_3\wip_3
        +\big(\epsilon^2(\omega_3-\omega_1)\wip_1p_1+\epsilon^2(\omega_3-\omega_2)\wip_2p_2\big)\wm_3\wip_3.
\end{eqnarray*}
This proves \eqref{eq: 3}. Again, this relation is the discrete time analog of $\sum_{i=1}^3\dot{m}_ip_i=0$. 
\hfill $\blacksquare$
\smallskip

{\bf Proof of formula \eqref{eq: 4}.}
We compute:
\begin{eqnarray*}
\lefteqn{\sum_{i=1}^3 \wm_i(\wip_i-p_i)}\\ 
& = & \epsilon \wm_1\big((\wm_3p_2+m_3\wip_2)-(\wm_2p_3+m_2\wip_3)\big)   
         +\epsilon \wm_2\big((\wm_1p_3+m_1\wip_3)-(\wm_3p_1+m_3\wip_1)\big)\\
 & & +\epsilon \wm_3\big((\wm_2p_1+m_2\wip_1)-(\wm_1p_2+m_1\wip_2)\big)\\  \\
 & = & \epsilon (\wm_1\wip_2-\wm_2\wip_1)m_3+\epsilon (\wm_2\wip_3-\wm_3\wip_2)m_1+\epsilon (\wm_3\wip_1-\wm_1\wip_3)m_2\\  \\
  & = & \epsilon (\wm_2\wip_1-\wm_1\wip_2)(\wm_3-m_3)+\epsilon (\wm_3\wip_2-\wm_2\wip_3)(\wm_1-m_1)+\epsilon (\wm_1\wip_3-\wm_3\wip_1)(\wm_2-m_2)\\  \\
& = &\epsilon^2(\omega_2-\omega_1)(\wm_2\wip_1-\wm_1\wip_2)(p_1\wip_2+p_2\wip_1)+\epsilon^2(\omega_3-\omega_2)(\wm_3\wip_2-\wm_2\wip_3)(p_2\wip_3+p_3\wip_2)\\
 & & +\epsilon^2(\omega_1-\omega_3)(\wm_1\wip_3-\wm_3\wip_1)(p_3\wip_1+p_1\wip_3)\\ \\
 & = & \;\;\big(\epsilon^2(\omega_1-\omega_2)\wip_2^2+\epsilon^2(\omega_2-\omega_3)\wip_3^2\big)\wm_1p_1
         +\big(\epsilon^2(\omega_1-\omega_2)\wip_2p_2+\epsilon^2(\omega_1-\omega_3)\wip_3p_3\big)\wm_1\wip_1\\ 
 & & +\big(\epsilon^2(\omega_2-\omega_3)\wip_3^2+\epsilon^2(\omega_2-\omega_1)\wip_1^2\big)\wm_2p_2
        +\big(\epsilon^2(\omega_2-\omega_3)\wip_3p_3+\epsilon^2(\omega_2-\omega_1)\wip_1p_1\big)\wm_2\wip_2 \\
 & & +\big(\epsilon^2(\omega_3-\omega_1)\wip_1^2+\epsilon^2(\omega_3-\omega_2)\wip_2^2\big)\wm_3p_3
       +\big(\epsilon^2(\omega_3-\omega_1)\wip_1p_1+\epsilon^2(\omega_3-\omega_2)\wip_2p_2\big)\wm_3\wip_3.
\end{eqnarray*}
This proves \eqref{eq: 4} and provides us with another discrete time analog of $\sum_{i=1}^3m_i\dot{p}_i=0$. 
\hfill $\blacksquare$

\begin{corollary}\label{cor K}
The function
\begin{equation} \label{Clebsch1 dK}
K(m,p)=\sum_{i=1}^3 \frac{C_i}{C_0}\frac{m_ip_i}{c_0}
\end{equation}
is an integral of motion of the map $\Phi_f$.
\end{corollary}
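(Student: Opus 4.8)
I would verify directly that $K\circ\Phi_f=K$, reducing the whole computation to the four identities \eqref{eq: 1}--\eqref{eq: 4} proved above, together with the fact that the ratios $c_i/c_0$ and $C_i/C_0$ are integrals of $\Phi_f$. The one non-routine point is to \emph{first} simplify $K\circ\Phi_f$ using the integral property of $C_i/C_0$; this is precisely what makes it unnecessary to hunt for any new algebraic identity. I do not anticipate a genuine obstacle: all the substantial algebra is already packaged in \eqref{eq: 1}--\eqref{eq: 4} and in Theorems \ref{Th: dC basis 1}--\ref{Th: dC basis 2}, and the only delicate issue is the bookkeeping of which quantities are evaluated at $(m,p)$ and which at $(\wm,\wip)$.

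In detail: since the HK bases $\Psi_0$ and $\Psi_1$ have one-dimensional null spaces, the projective coordinates of those null spaces are integrals, so $c_i/c_0$ and $C_i/C_0$ are integrals of $\Phi_f$ (explicitly $c_i/c_0=1/I_0+\epsilon^2\omega_i$ and $C_i/C_0=1/J_0-\epsilon^2\omega_i$, by \eqref{eq: dC ans1 res} and \eqref{eq: dC ans2 res}). Writing a tilde for composition with $\Phi_f$, this gives $\widetilde{c}_i\,c_0=c_i\,\widetilde{c}_0$ and $\widetilde{C}_i\,C_0=C_i\,\widetilde{C}_0$ for $i=1,2,3$.

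Next I would rewrite $K=(\sum_i C_im_ip_i)/(C_0\,c_0)$, with $c_0=p_1^2+p_2^2+p_3^2$, and use $\widetilde{C}_i=(C_i/C_0)\,\widetilde{C}_0$ to compute
\[
K\circ\Phi_f=\frac{\sum_i \widetilde{C}_i\,\wm_i\wip_i}{\widetilde{C}_0\,\widetilde{c}_0}=\frac{\sum_i C_i\,\wm_i\wip_i}{C_0\,\widetilde{c}_0},\qquad \widetilde{c}_0=\wip_1^2+\wip_2^2+\wip_3^2 .
\]
Hence $K$ is an integral if and only if $\widetilde{c}_0\,\sum_i C_im_ip_i=c_0\,\sum_i C_i\wm_i\wip_i$.

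Finally I would substitute \eqref{eq: 1}, $\sum_i C_im_ip_i=\sum_i c_i\wm_ip_i$, and \eqref{eq: 4}, $\sum_i C_i\wm_i\wip_i=\sum_i\widetilde{c}_i\wm_ip_i$ (one could use \eqref{eq: 2} and \eqref{eq: 3} instead, obtaining the same conclusion with $\wm_ip_i$ replaced by $m_i\wip_i$). The required equality then becomes
\[
\sum_{i=1}^3(\widetilde{c}_0\,c_i-c_0\,\widetilde{c}_i)\,\wm_ip_i=0 ,
\]
which holds term by term because $\widetilde{c}_0\,c_i=c_0\,\widetilde{c}_i$. This would complete the proof.
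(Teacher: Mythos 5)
Your proposal is correct and follows essentially the same route as the paper: both arguments combine identities \eqref{eq: 1} and \eqref{eq: 4} with the facts that $c_i/c_0$ and $C_i/C_0$ are integrals (i.e.\ $\widetilde{c}_ic_0=c_i\widetilde{c}_0$ and $\widetilde{C}_iC_0=C_i\widetilde{C}_0$), differing only in bookkeeping order. Your remark that \eqref{eq: 2} and \eqref{eq: 3} would serve equally well is also accurate.
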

\begin{proof}
Compare \eqref{eq: 1} with \eqref{eq: 4}, taking into account that $c_i/c_0$ are integrals of motion, that is, $c_i/c_0=\widetilde{c}_i/\widetilde{c}_0$. We arrive at
$$
\sum_{i=1}^3 C_i\frac{m_ip_i}{c_0}=\sum_{i=1}^3 C_i\frac{\wm_i\wip_i}{\widetilde{c}_0}.
$$
Since $C_i/C_0$ are integrals of motion, that is $C_i/C_0=\widetilde{C}_i/\widetilde{C}_0$, we see that
$$
\sum_{i=1}^3 \frac{C_i}{C_0}\frac{m_ip_i}{c_0}=\sum_{i=1}^3 \frac{\widetilde{C}_i}{\widetilde{C}_0}\frac{\wm_i\wip_i}{\widetilde{c}_0}.
$$
This proves the statement.
\end{proof}

\noindent
{\bf Proof of Theorem \ref{Th: Clebsch1 W2 basis}.}
We start with
\begin{eqnarray*}
\lefteqn{\sum_{i=1}^3 (\wwm_i-\wm_i)p_i+\sum_{i=1}^3(\wm_i-m_i)\wwp_i}\\
& = & \epsilon(\omega_3-\omega_2)(\wwp_2\wip_3+\wip_2\wwp_3)p_1+\epsilon(\omega_1-\omega_3)(\wwp_3\wip_1+\wip_3\wwp_1)p_2
 +\epsilon(\omega_2-\omega_1)(\wwp_1\wip_2+\wip_1\wwp_2)p_3\\
 & & +\epsilon(\omega_3-\omega_2)(\wip_2p_3+p_2\wip_3)\wwp_1+\epsilon(\omega_1-\omega_3)(\wip_3p_1+p_3\wip_1)\wwp_2
 +\epsilon(\omega_2-\omega_1)(\wip_1p_2+p_1\wip_2)\wwp_3\\ \\
 & = & \epsilon(\omega_3-\omega_1)\wwp_1\wip_2p_3+\epsilon(\omega_1-\omega_2)\wwp_1\wip_3p_2+
\epsilon(\omega_1-\omega_2)\wwp_2\wip_3p_1+\epsilon(\omega_2-\omega_3)\wwp_2\wip_1p_3\\
& & +
\epsilon(\omega_2-\omega_3)\wwp_3\wip_1p_2+\epsilon(\omega_3-\omega_1)\wwp_3\wip_2p_1\\ \\
& = & \epsilon(\omega_3-\omega_2) \wwp_1p_2p_3 +\epsilon (\omega_1-\omega_3) \wwp_2p_3p_1+\epsilon (\omega_2-\omega_1) \wwp_3p_1p_2\\
 &  & +\epsilon(\omega_3-\omega_1)\wwp_1(\wip_2-p_2)p_3+\epsilon(\omega_1-\omega_2)\wwp_1(\wip_3-p_3)p_2+
\epsilon(\omega_1-\omega_2)\wwp_2(\wip_3-p_3)p_1\\
& & +\epsilon(\omega_2-\omega_3)\wwp_2(\wip_1-p_1)p_3+\epsilon(\omega_2-\omega_3)\wwp_3(\wip_1-p_1)p_2+
\epsilon(\omega_3-\omega_1)\wwp_3(\wip_2-p_2)p_1
\end{eqnarray*}
This is equal to 
\begin{eqnarray*}
& = & \epsilon(\omega_3-\omega_2) (\wwp_1-\wip_1)p_2p_3 +\epsilon (\omega_1-\omega_3) (\wwp_2-\wip_2)p_3p_1+\epsilon (\omega_2-\omega_1) (\wwp_3-\wip_3)p_1p_2\\
 & & + \epsilon(\omega_3-\omega_2) \wip_1p_2p_3 +\epsilon (\omega_1-\omega_3) \wip_2p_3p_1+\epsilon (\omega_2-\omega_1) \wip_3p_1p_2\\
 &  & +\epsilon(\omega_3-\omega_1)\wwp_1(\wip_2-p_2)p_3+\epsilon(\omega_1-\omega_2)\wwp_1(\wip_3-p_3)p_2+
\epsilon(\omega_1-\omega_2)\wwp_2(\wip_3-p_3)p_1\\
& & +\epsilon(\omega_2-\omega_3)\wwp_2(\wip_1-p_1)p_3+\epsilon(\omega_2-\omega_3)\wwp_3(\wip_1-p_1)p_2+
\epsilon(\omega_3-\omega_1)\wwp_3(\wip_2-p_2)p_1.
\end{eqnarray*}
Here the second line is equal to
$$
-\sum_{i=1}^3 (\wm_i-m_i)p_i,
$$
so that, upon use of equations of motion, we find:
\begin{eqnarray*}
\lefteqn{\sum_{i=1}^3(\wwm_ip_i-m_i\wwp_i)-\sum_{i=1}^3 m_ip_i+\sum_{i=1}^3 \wm_i\wwp_i}\\
& = & \epsilon^2(\omega_3-\omega_2) \big((\wwm_3\wip_2+\wm_3\wwp_2)-(\wwm_2\wip_3+\wm_2\wwp_3)\big)p_2p_3 \\
 & & +\epsilon^2 (\omega_1-\omega_3) \big((\wwm_1\wip_3+\wm_1\wwp_3)-(\wwm_3\wip_1+\wm_3\wwp_1)\big)p_3p_1\\
 & & +\epsilon^2 (\omega_2-\omega_1) \big((\wwm_2\wip_1+\wm_2\wwp_1)-(\wwm_1\wip_2+\wm_1\wwp_2)\big)p_1p_2\\
 &  & +\epsilon^2(\omega_3-\omega_1)\wwp_1\big((\wm_1p_3+m_1\wip_3)-(\wm_3p_1+m_3\wip_1)\big)p_3\\
 & & +\epsilon^2(\omega_1-\omega_2)\wwp_1\big((\wm_2p_1+m_2\wip_1)-(\wm_1p_2+m_1\wip_2)\big)p_2\\
& & +\epsilon^2(\omega_1-\omega_2)\wwp_2\big((\wm_2p_1+m_2\wip_1)-(\wm_1p_2+m_1\wip_2)\big)p_1\\
& & +\epsilon^2(\omega_2-\omega_3)\wwp_2\big((\wm_3p_2+m_3\wip_2)-(\wm_2p_3+m_2\wip_3)\big)p_3\\
& & +\epsilon^2(\omega_2-\omega_3)\wwp_3\big((\wm_3p_2+m_3\wip_2)-(\wm_2p_3+m_2\wip_3)\big)p_2\\
& & +\epsilon^2(\omega_3-\omega_1)\wwp_3\big((\wm_1p_3+m_1\wip_3)-(\wm_3p_1+m_3\wip_1)\big)p_1.
\end{eqnarray*}
(So, the strategy of transformations is: leave one variable with double tilde, one with single tilde and two without tilde.) Collecting terms, we have:
\begin{eqnarray*}
\lefteqn{\sum_{i=1}^3(\wwm_ip_i-m_i\wwp_i)-\sum_{i=1}^3 m_ip_i+\sum_{i=1}^3 \wm_i\wwp_i}\\
& = & \wwm_1p_1\Big(\epsilon^2(\omega_1-\omega_3)\wip_3p_3+\epsilon^2(\omega_1-\omega_2)\wip_2p_2\Big) \\
& & +\wwm_2p_2\Big(\epsilon^2(\omega_2-\omega_3)\wip_3p_3+\epsilon^2(\omega_2-\omega_1)\wip_1p_1\Big) \\
& & +\wwm_3p_3\Big(\epsilon^2(\omega_3-\omega_2)\wip_2p_2+\epsilon^2(\omega_3-\omega_1)\wip_1p_1\Big) \\
&  & +m_1\wwp_1\Big(\epsilon^2(\omega_3-\omega_1)\wip_3p_3+\epsilon^2(\omega_2-\omega_1)\wip_2p_2\Big) \\
& & +m_2\wwp_2\Big(\epsilon^2(\omega_1-\omega_2)\wip_1p_1+\epsilon^2(\omega_3-\omega_2)\wip_3p_3\Big) \\
& & +m_3\wwp_3\Big(\epsilon^2(\omega_2-\omega_3)\wip_2p_2+\epsilon^2(\omega_1-\omega_3)\wip_1p_1\Big) \\
&  & +m_1p_1\Big(\epsilon^2(\omega_2-\omega_1)\wwp_2\wip_2+\epsilon^2(\omega_3-\omega_1)\wwp_3\wip_3\Big) \\
& & +m_2p_2\Big(\epsilon^2(\omega_1-\omega_2)\wwp_1\wip_1+\epsilon^2(\omega_3-\omega_2)\wwp_3\wip_3\Big) \\
& & +m_3p_3\Big(\epsilon^2(\omega_1-\omega_3)\wwp_1\wip_1+\epsilon^2(\omega_2-\omega_3)\wwp_2\wip_2\Big) \\
&  & +\wm_1\wwp_1\Big(\epsilon^2(\omega_2-\omega_1)p_2^2+\epsilon^2(\omega_3-\omega_1)p_3^2\Big) \\
& & +\wm_2\wwp_2\Big(\epsilon^2(\omega_1-\omega_2)p_1^2+\epsilon^2(\omega_3-\omega_2)p_3^2\Big) \\
& & +\wm_3\wwp_3\Big(\epsilon^2(\omega_2-\omega_3)p_2^2+\epsilon^2(\omega_1-\omega_3)p_1^2\Big) .
\end{eqnarray*}
This can be put as
$$
\sum_{i=1}^3 C_i(\wwm_ip_i-m_i\wwp_i)=\sum_{i=1}^3 \widetilde{C}_im_ip_i-\sum_{i=1}^3 c_i\wm_i\wwp_i.
$$
From \eqref{Clebsch1 dK}, \eqref{eq: 3}, and from relations $\widetilde{C}_i/\widetilde{C}_0=C_i/C_0$ and $c_i/c_0=\widetilde{c}_i/\widetilde{c}_0$, we derive:
$$
\sum_{i=1}^3 \widetilde{C}_im_ip_i=\widetilde{C}_0\cdot c_0 K(m,p), \quad \sum_{i=1}^3 c_i\wm_i\wwp_i=c_0\cdot \widetilde{C}_0 K(\wm,\wip).
$$
By Corollary \ref{cor K}, $K(m,p)=K(\wm,\wip)$. This finishes the proof.
\hfill $\blacksquare$
\smallskip

As for Theorem \ref{Clebsch higher Wronskians}, at present we only have a proof based on symbolic computations by Maple, even for the first Clebsch flow.

\section{The Kirchhoff case}
\label{sect Kirchhoff}

The {\em Kirchhoff case} of the motion of the rigid body in an ideal fluid corresponds to the following values of the parameters in \eqref{gClebsch Ham}, \eqref{gClebsch}:
\begin{equation}\label{eq: Kirchhoff cond}
a_1=a_2, \quad b_1=b_2.
\end{equation}
Equations of motion read:
\begin{equation} \label{eq:Kirchhoff}
\left\{\begin{array}{l}
\dot{m_1} = (a_3-a_1)m_2 m_3 + (b_3-b_1)p_2p_3, \vspace{.1truecm}  \\
\dot{m_2} = (a_1-a_3)m_1 m_3 + (b_1-b_3)p_1p_3, \vspace{.1truecm}  \\
\dot{m_3} = 0, \vspace{.1truecm}  \\
\dot{p_1} =  a_3p_2 m_3-a_1p_3 m_2, \vspace{.1truecm}  \\
\dot{p_2} =  a_1p_3 m_1-a_3p_1m_3,  \vspace{.1truecm}  \\
\dot{p_3} = a_1(p_1m_2 - p_2m_1).
\end{array}\right.
\end{equation}
Thus, $m_3$ is an obvious fourth integral, due to the rotational symmetry of the system. It is easy to see that for (\ref{eq: Kirchhoff cond}) the Clebsch condition \eqref{eq: Clebsch cond} is satisfied, as well. Thus, formally the Kirchhoff case is the particular case of the Clebsch case. One can choose the parameters $\omega_i$ in \eqref{eq: Clebsch omega} as
\begin{equation}\label{eq: Kirchhoff omega}
\omega_1=\omega_2=\frac{b_1}{a_1}, \quad \omega_3=\frac{b_3}{a_1}.
\end{equation}
Correspondingly, integral $H_1$ becomes proportional to
$$
a_1H_1=a_1(m_1^2+m_2^2+m_3^2)+b_1(p_1^2+p_2^2)+b_3p_3^2.
$$
Taking into account that the Hamilton function $H$ is given by 
$$
2H=a_1(m_1^2+m_2^2)+a_3m_3^2+b_1(p_1^2+p_2^2)+b_3p_3^2,
$$
we see that the fourth integral $H_1$ can be replaced just by $m_3^2$.  

Wronskian relation satisfied on solutions of \eqref{eq:Kirchhoff}:
\begin{equation}\label{eq: KC W}
(\dot{m}_1p_1-m_1\dot{p}_1)+(\dot{m}_2p_2-m_2\dot{p}_2)+\left(\frac{2a_3}{a_1}-1\right)(\dot{m}_3p_3-m_3\dot{p}_3)=0.
\end{equation}

Applying the Kahan-Hirota-Kimura scheme to the Kirchhoff system \eqref{eq:Kirchhoff}, we arrive at the following discretization:
\beq\label{eq:dKirchhoff}
 \left\{ \begin{array}{l}
\widetilde{m}_1-m_1 =
\epsilon(a_3-a_1)(\widetilde{m}_2m_3+m_2\widetilde{m}_3)+
\epsilon(b_3-b_1)(\widetilde{p}_2p_3+p_2\widetilde{p}_3),
\vspace{.1truecm} \\
\widetilde{m}_2-m_2 =
\epsilon(a_1-a_3)(\widetilde{m}_3m_1+m_3\widetilde{m}_1)+
\epsilon(b_1-b_3)(\widetilde{p}_3p_1+p_3\widetilde{p}_1),
\vspace{.1truecm}\\
\widetilde{m}_3-m_3 =0,
\vspace{.1truecm} \\
\widetilde{p}_1-p_1 = \epsilon
a_3(\widetilde{m}_3p_2+m_3\widetilde{p}_2)-\epsilon a_1(\widetilde{m}_2 p_3+m_2\widetilde{p}_3),
\vspace{.1truecm}\\
\widetilde{p}_2-p_2= \epsilon
a_1(\widetilde{m}_1p_3+m_1\widetilde{p}_3)-\epsilon a_3(\widetilde{m}_3p_1+m_3\widetilde{p}_1),
\vspace{.1truecm} \\
\widetilde{p}_3-p_3 = \epsilon
a_1(\widetilde{m}_2p_1+m_2\widetilde{p}_1-\widetilde{m}_1 p_2-m_1\widetilde{p}_2).
\end{array}\right.
\end{equation}
As usual, linear system (\ref{eq:dKirchhoff}) defines a birational map $\Phi_f:\bbR^6 \rightarrow \bbR^6$, $(m,p)\mapsto (\wm,\wip)$.

\begin{theorem}\label{Th: dKirchhoff basis 1}  {\bf (Quadratic-fractional integral, \cite{PPS2})}
The function
\begin{equation} \label{dKirchhoff simple int}
I_0(m,p;\epsilon)=\frac{c_3(m,p;\epsilon)}{c_1(m,p;\epsilon)},
\end{equation}
where
\begin{eqnarray}
c_1 & = & 1+\epsilon^2a_3(a_1-a_3)m_3^2+\epsilon^2a_1(b_1-b_3)p_3^2,                                                         \label{eq: dKirchhoff c1}\\
c_3 & = & \frac{2a_3}{a_1}-1+\epsilon^2a_1(a_3-a_1)(m_1^2+m_2^2)+\epsilon^2a_3(b_3-b_1)(p_1^2+p_2^2), \label{eq: dKirchhoff c3}
\end{eqnarray}
is an integral of motion of the map $\Phi_f$. 
\end{theorem}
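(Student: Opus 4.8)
The cleanest route exploits that the Kirchhoff flow \eqref{eq:Kirchhoff} is literally the general Clebsch flow \eqref{gClebsch} with the parameter choice \eqref{eq: Kirchhoff cond} — for which the Clebsch condition \eqref{eq: Clebsch cond} holds automatically (with $\omega_i$ as in \eqref{eq: Kirchhoff omega} and, for generic parameters, $\beta=a_1(a_1-a_3)/(b_1-b_3)\neq 0$). Since Kahan's discretization depends only on the vector field, the map $\Phi_f$ of \eqref{eq:dKirchhoff} is exactly the map \eqref{eq:dC2} evaluated at these parameters. So the first step is to specialize Theorem \ref{Th: dC2 basis 1}: setting $a_1=a_2$, $b_1=b_2$ one checks that $c_1=c_2$ in \eqref{Clebsch2 c1}--\eqref{Clebsch2 c2}, and that, up to the common factor $a_3$, the first and third entries of the null-space vector $K_{\Psi_0}=[c_1a_2a_3:c_2a_3a_1:c_3a_1a_2:-c_0]$ become, respectively, the polynomials $c_1$ and $c_3$ of \eqref{eq: dKirchhoff c1}--\eqref{eq: dKirchhoff c3}. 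Since the null space is one-dimensional, ratios of its entries are integrals of $\Phi_f$; in particular $c_3/c_1$, being a constant multiple of the ratio of the third and first entries, is an integral of motion.

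Alternatively — and this is the self-contained argument in the spirit of the proof of Theorem \ref{Th: dC basis 1} — one verifies invariance of $I_0=c_3/c_1$ directly, by reducing the claim $c_3(\wm,\wip)/c_1(\wm,\wip)=c_3(m,p)/c_1(m,p)$ to the polynomial identity
\[
c_1(m,p)\,\bigl(c_3(\wm,\wip)-c_3(m,p)\bigr)=c_3(m,p)\,\bigl(c_1(\wm,\wip)-c_1(m,p)\bigr)
\]
modulo the equations of motion \eqref{eq:dKirchhoff}. The crucial simplification is $\wm_3=m_3$, which makes the left difference depend only on the increments of $m_1,m_2,p_1,p_2$ and reduces the right difference to $\epsilon^2a_1(b_1-b_3)(\wip_3^2-p_3^2)$. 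One then writes every $\wx^2-x^2$ as $(\wx+x)(\wx-x)$, substitutes the increments $\wx-x$ from \eqref{eq:dKirchhoff}, and — exactly as in the proofs of \eqref{eq: 1}--\eqref{eq: 4}, where the point was that $\sum_i(\wm_i-m_i)p_i$ and $\sum_i m_i(\wip_i-p_i)$ are of order $\epsilon^2$ — applies the complementary half of the equations of motion a second time to the surviving linear factors. The structural fact behind this is the discrete counterpart of the Wronskian relation \eqref{eq: KC W}: on orbits $\sum_i c_i(\wm_ip_i-m_i\wip_i)=0$ (Theorem \ref{Th: Clebsch2 W1 basis} specialized, with $c_2=c_1$), and the $O(\epsilon^2)$ corrections to the continuous coefficients $(1,1,2a_3/a_1-1)$ are precisely the $\epsilon^2$-parts of $c_1$ and $c_3$. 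After these substitutions the equation collapses to an algebraic identity in the $24$ variables $m_i,p_i,\wm_i,\wip_i$, which one checks.

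The main obstacle is purely computational. Unlike the first Clebsch flow, where numerator and denominator of $I_0$ depend on $p$ alone, here $c_1$ and $c_3$ are genuine quadratic polynomials in all six phase variables, so the direct verification generates terms up to orders $\epsilon^4$ and $\epsilon^5$, and the asymmetry of the Kirchhoff system (the frozen, decoupled $m_3$-direction) forces extra case-checking; it is doable by hand but tedious. I would therefore present the specialization argument as the clean proof and relegate the direct identity to a remark or a computer-algebra verification.
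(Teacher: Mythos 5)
Your specialization argument is correct and is a genuinely different route from the paper, which in fact offers no proof of this theorem at all: it is imported from \cite{PPS2}, and the only detailed proofs given here (for the first Clebsch flow) are direct algebraic verifications against the equations of motion. Your derivation is consistent with the paper's own framing (``formally the Kirchhoff case is the particular case of the Clebsch case,'' with $\omega_i$ as in \eqref{eq: Kirchhoff omega}), and the bookkeeping checks out: with $a_2=a_1$, $b_2=b_1$ one gets $A_1=A_2=1/a_3$, $A_3=2/a_1-1/a_3$, $\beta=a_1(a_1-a_3)/(b_1-b_3)$, and the Clebsch coefficients \eqref{Clebsch2 c1}--\eqref{Clebsch2 c3} specialize to $c_1^{\rm Cl}=c_2^{\rm Cl}=a_3^{-1}c_1^{\rm Ki}$ and $c_3^{\rm Cl}=a_3^{-1}c_3^{\rm Ki}$, so the first and third entries of the null-space vector $[c_1a_2a_3:c_2a_3a_1:c_3a_1a_2:-c_0]$ become $a_1c_1^{\rm Ki}$ and $(a_1^2/a_3)\,c_3^{\rm Ki}$ (your ``common factor $a_3$'' is not the exact constant, but this is immaterial since only the projective class matters). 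One-dimensionality of the null space then makes the ratio, hence $c_3/c_1$, an integral. What your route buys is economy: no new computation, and the Kirchhoff statement is exhibited as a corollary of Theorem \ref{Th: dC2 basis 1}. What it costs is (i) reliance on Theorem \ref{Th: dC2 basis 1}, itself only quoted here from \cite{PPS1}, and (ii) the restriction to $\beta\neq 0$, i.e.\ $a_1\neq a_3$, $b_1\neq b_3$; you should add one sentence noting that the invariance relation $c_3(\wm,\wip)\,c_1(m,p)=c_1(\wm,\wip)\,c_3(m,p)$ is, after clearing the common denominator $\Delta(m,p;\epsilon)$, polynomial in the parameters, so validity for generic $(a_1,a_3,b_1,b_3)$ extends to all of them. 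Your second, self-contained route is only a sketch (and the identity it reduces to lives in $12$ variables, not $24$); as you say, it is best left to direct or computer-algebra verification, which is exactly how the paper treats the analogous first-Clebsch statements.
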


\begin{theorem} \label{Th: dKirchhoff W1 basis} {\bf (Discrete Wronskians HK basis, \cite{PPS2})}
Functions $W^{(1)}_i(m,p)=\wm_ip_i-m_i\wip_i$, $i=1,2,3$, form a HK basis for the map $\Phi_f$ with a one-dimensional null space spanned by
$[c_1:c_1:c_3]=[1:1:I_0]$, where $I_0$ is the integral of $\Phi_f$ given by \eqref{dKirchhoff simple int}.
\end{theorem}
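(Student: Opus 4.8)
I would obtain Theorem~\ref{Th: dKirchhoff W1 basis} as the specialization of the general Clebsch result, Theorem~\ref{Th: Clebsch2 W1 basis}, to the Kirchhoff locus \eqref{eq: Kirchhoff cond}. The first observation is that the Kahan map \eqref{eq:dKirchhoff} is \emph{literally} the Kahan map \eqref{eq:dC2} of the general Clebsch flow evaluated at $a_2=a_1$, $b_2=b_1$: Kahan's polarization acts componentwise and the vector field \eqref{gClebsch} goes over into \eqref{eq:Kirchhoff} under this substitution, so the two discretizations coincide. Next I would check that the substitution is admissible at the level of the coefficients appearing in Theorem~\ref{Th: Clebsch2 W1 basis}: the Clebsch condition \eqref{eq: Clebsch cond} does hold, and although the three ratios in \eqref{eq: Clebsch theta} take the indeterminate form $0/0$, the parametrization \eqref{eq: Kirchhoff omega} with $\omega_1=\omega_2=b_1/a_1$, $\omega_3=b_3/a_1$ fixes the finite value $\beta=a_1(a_1-a_3)/(b_1-b_3)$ (nonzero under the standing genericity assumptions $a_1\neq a_3$, $b_1\neq b_3$).

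The second step is to push the coefficients $c_1,c_2,c_3$ of Theorem~\ref{Th: Clebsch2 W1 basis}, given by \eqref{Clebsch2 c1}--\eqref{Clebsch2 c3} in terms of the $g_i$ of \eqref{eq: gi}, through this specialization. Here $A_1=A_2=1/a_3$, $A_3=2/a_1-1/a_3$, and with the above value of $\beta$ one computes
\[
a_1(b_1-b_3)\,g_3=a_1(b_1-b_3)p_3^2+a_3(a_1-a_3)m_3^2,\qquad a_3(b_3-b_1)\,g_j=a_3(b_3-b_1)p_j^2+a_1(a_3-a_1)m_j^2\ \ (j=1,2).
\]
Substituting these, I expect to find $c_1=c_2=\tfrac1{a_3}\widehat c_1$ and $c_3=\tfrac1{a_3}\widehat c_3$, where $\widehat c_1,\widehat c_3$ are precisely the polynomials \eqref{eq: dKirchhoff c1}, \eqref{eq: dKirchhoff c3}. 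Cancelling the common scalar $1/a_3$, the Wronskian relation $\sum_{i=1}^3 c_i(\wm_ip_i-m_i\wip_i)=0$ of Theorem~\ref{Th: Clebsch2 W1 basis} becomes $\widehat c_1(\wm_1p_1-m_1\wip_1)+\widehat c_1(\wm_2p_2-m_2\wip_2)+\widehat c_3(\wm_3p_3-m_3\wip_3)=0$, which is the asserted identity; its projective coefficient vector is $[\widehat c_1:\widehat c_1:\widehat c_3]=[1:1:\widehat c_3/\widehat c_1]=[1:1:I_0]$ by Theorem~\ref{Th: dKirchhoff basis 1}. (Specializing the companion coefficients \eqref{Clebsch2 C1}--\eqref{Clebsch2 C3} the same way would likewise yield the bilinear relation for $W^{(2)}$, which is not needed here.)

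It remains to see that the null space is exactly one-dimensional. The quickest option is to evaluate the matrix of values of $(W^{(1)}_1,W^{(1)}_2,W^{(1)}_3)$ along one generic orbit of $\Phi_f$ and verify it has rank $2$, as was done in \cite{PPS2}. A computation-free variant: $W^{(1)}_i$ is an $\epsilon$-deformation of $w_i=\dot m_ip_i-m_i\dot p_i$, and for the continuous Kirchhoff flow the only linear relation with constant coefficients among $w_1,w_2,w_3$ is \eqref{eq: KC W} (a quotient such as $w_1/w_2$ is not an integral of \eqref{eq:Kirchhoff}), so along a generic trajectory these vectors span a $2$-dimensional space; by lower semicontinuity of rank the same holds for $(W^{(1)}_1,W^{(1)}_2,W^{(1)}_3)$ for generic small $\epsilon$, hence the null space has dimension at most $1$, and combined with the previous paragraph it has dimension exactly $1$, the dependence on $\epsilon$ being polynomial so that the conclusion then holds for all $\epsilon$.

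The only delicate point is the bookkeeping of overall scalar factors and the resolution of the $0/0$ for $\beta$ in the second step; everything else is forced. One can also give a self-contained proof imitating the derivations of \eqref{eq: 1}--\eqref{eq: 4} from the six scalar equations \eqref{eq:dKirchhoff} — establishing $\sum_i\widehat c_i^{\,(w)}\wm_ip_i=\sum_i C_im_ip_i=\sum_i\widehat c_i^{\,(w)}m_i\wip_i$ with $\widehat c^{\,(w)}=(\widehat c_1,\widehat c_1,\widehat c_3)$ and $C=(C_1,C_1,C_3)$ the corresponding companion, and subtracting — but because $\wm_3=m_3$ several of the intermediate terms are of order $\epsilon$ rather than $\epsilon^2$ and cancel only in combination, which makes this route appreciably more laborious than the specialization argument above.
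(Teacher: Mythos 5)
Your route is sound, and it is genuinely different from what the paper does: the paper does not prove Theorem \ref{Th: dKirchhoff W1 basis} at all (it is quoted from \cite{PPS2}), and for the analogous statement it does prove in detail — Theorem \ref{Th: Clebsch1 W1 basis} for the first Clebsch flow — it proceeds by direct manipulation of the six Kahan equations, establishing the four auxiliary identities \eqref{eq: 1}--\eqref{eq: 4} and subtracting. You instead obtain the Kirchhoff statement by specializing the general Clebsch result, Theorem \ref{Th: Clebsch2 W1 basis}, to the locus $a_2=a_1$, $b_2=b_1$. The bookkeeping you carry out is correct: $A_1=A_2=1/a_3$, $A_3=2/a_1-1/a_3$, the value $\beta=a_1(a_1-a_3)/(b_1-b_3)$ resolves the indeterminate ratio in \eqref{eq: Clebsch theta}, and substituting into \eqref{Clebsch2 c1}--\eqref{Clebsch2 c3} (where the terms carrying $b_1-b_2$ drop out) does reproduce \eqref{eq: dKirchhoff c1}, \eqref{eq: dKirchhoff c3} up to the common factor $1/a_3$, with $c_1=c_2$. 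What your approach buys is economy — no new computation beyond coefficient algebra — at the price of making the result conditional on Theorem \ref{Th: Clebsch2 W1 basis}, which in this paper is itself only cited; the paper's direct computational style (as in the proofs of \eqref{eq: 1}--\eqref{eq: 4}) is self-contained but, as you note, more laborious here because $\wm_3=m_3$ degrades several intermediate cancellations from order $\epsilon^2$ to order $\epsilon$.

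Two points deserve tightening. First, the specialization should be justified by noting that the identity $\sum_i c_i(\wm_ip_i-m_i\wip_i)=0$, after clearing denominators, is polynomial in the parameters and holds on a Zariski-dense subset of the Clebsch hypersurface \eqref{eq: Clebsch cond}, hence extends to the Kirchhoff locus where $\beta$ remains regular and nonzero; merely ``fixing the finite value of $\beta$'' does not by itself show the hypotheses of Theorem \ref{Th: Clebsch2 W1 basis} are met there. Second, your ``computation-free'' argument for one-dimensionality of the null space is too quick as stated: for fixed iteration indices $j$ the vectors $W^{(1)}(\Phi_f^j(x))$ all collapse to multiples of $w(x)$ as $\epsilon\to 0$, so naive lower semicontinuity gives only rank $\ge 1$. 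One must compare indices scaling like $j\sim t/(2\epsilon)$, so that a $2\times2$ minor of the discrete matrix, divided by $4\epsilon^2$, converges to the corresponding minor of $\bigl(w(x(0)),w(x(t))\bigr)$, and one must actually verify that $w_1/w_2$ is non-constant along a generic continuous trajectory. The fallback of checking the rank on one explicit orbit, as in \cite{PPS2}, remains the cleanest way to close this.
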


Novel results, illustrating Observations \ref{conjecture 1} and \ref{conjecture 2}, are as follows.

\begin{theorem}\label{Th: dKirchhoff basis 2} {\bf (Bilinear-fractional integral)}
The function
\begin{equation} \label{dKirchhoff 2nd int}
J_0(m,p;\epsilon)=\frac{C_3(m,p;\epsilon)}{C_1(m,p;\epsilon)},
\end{equation}
where
\begin{eqnarray}
C_1 & = & 1-\epsilon^2a_3(a_1-a_3)m_3^2-\epsilon^2a_1(b_1-b_3)p_3\wip_3,     \label{eq: dKirchhoff C1}\\
C_3 & = & \frac{2a_3}{a_1}-1-\epsilon^2a_1(a_3-a_1)(m_1\wm_1+m_2\wm_2)-\epsilon^2a_3(b_3-b_1)(p_1\wip_1+p_2\wip_2),\qquad
\label{eq: dKirchhoff R}
\end{eqnarray}
is an integral of motion of the map $\Phi_f$. 
\end{theorem}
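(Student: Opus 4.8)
The statement is the Kirchhoff counterpart of Theorem~\ref{Th: dC basis 2}, and the cleanest route is to recognize it as the instance of Observation~\ref{conjecture 1} associated to the quadratic-fractional integral $I_0=c_3/c_1$ of Theorem~\ref{Th: dKirchhoff basis 1}. The first step is bookkeeping: the polynomials $c_1$ of \eqref{eq: dKirchhoff c1} and $c_3$ of \eqref{eq: dKirchhoff c3} have degree $\le 2$ in $(m,p)$ and are polynomials in $\epsilon^2$, so they fit the template \eqref{simple int}. Forming their polarizations $\widehat{c}_1,\widehat{c}_3$ in the variables $(m,p)$, replacing $\epsilon^2$ by $-\epsilon^2$, and substituting $(\wm,\wip)=\Phi_f(m,p)$, one recovers \emph{exactly} the expressions $C_1,C_3$ of \eqref{eq: dKirchhoff C1}--\eqref{eq: dKirchhoff R}; the only point worth a comment is that in the $m_3^2$-monomials of $c_1,c_3$ the polarization $m_3^2\mapsto m_3\wm_3$ collapses back to $m_3^2$ because $\wm_3=m_3$ by the third equation of \eqref{eq:dKirchhoff}. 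Hence $J_0=C_3/C_1$ is precisely the bilinear-fractional expression \eqref{semisimple int} attached to $I_0$, and it remains to verify the hypotheses of Theorem~\ref{th volume} for the pair $\widehat{P}=\widehat{c}_3(\cdot;-\epsilon^2)$, $\widehat{Q}=\widehat{c}_1(\cdot;-\epsilon^2)$.

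The substantive task is therefore to show that, after the substitution $\wip_i=p_i\circ\Phi_f$, one has $C_1=p(m,p;\epsilon^2)/\Delta(m,p;\epsilon)$ and $C_3=q(m,p;\epsilon^2)/\Delta(m,p;\epsilon)$, with $\Delta(m,p;\epsilon)=\det(I-\epsilon f'(m,p))$ and with numerators that genuinely depend on $\epsilon$ only through $\epsilon^2$. Since each $\wip_i$ carries the denominator $\Delta$, the products $C_1\,\Delta$ and $C_3\,\Delta$ are automatically polynomials in $(m,p)$; what must be checked is (i) that no factor of $\Delta$ cancels against the numerator, so that the common denominator really is $\Delta$ and not a proper divisor, and (ii) that after clearing the denominator all odd powers of $\epsilon$ disappear. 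Both can be done from the explicit form of $\Phi_f$ coming out of the linear system \eqref{eq:dKirchhoff}, which because of $\wm_3=m_3$ keeps $\Delta$ and the relevant numerators small enough to be handled by hand. Once (i) and (ii) are established, Theorem~\ref{th volume}(b) gives that $J_0=C_3/C_1$ is an integral of $\Phi_f$, and Theorem~\ref{th volume}(a), applied to either numerator, produces at the same time the density of the invariant measure stated in the accompanying corollary.

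An alternative that bypasses $\Delta$ altogether is to copy the proof of Theorem~\ref{Th: dC basis 2} line by line. One writes the desired identity $J_0\circ\Phi_f=J_0$ in cross-multiplied form, using symmetry of the polarizations together with reversibility $\Phi_f^{-1}(\cdot,\epsilon)=\Phi_f(\cdot,-\epsilon)$, so that it becomes $\widehat{c}_3(\um,\up,m,p;-\epsilon^2)\,\widehat{c}_1(m,p,\wm,\wip;-\epsilon^2)=\widehat{c}_1(\um,\up,m,p;-\epsilon^2)\,\widehat{c}_3(m,p,\wm,\wip;-\epsilon^2)$; then one replaces the differences $\wip_i-p_i$, $p_i-\up_i$ and the quadratic $\epsilon^2$-monomials through the right-hand sides of \eqref{eq:dKirchhoff} and of its $\epsilon\mapsto-\epsilon$ reversal and collects. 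After the cancellations, which are the exact analogues of those carried out in Section~\ref{sect Clebsch 1}, the claim reduces to a polynomial identity in the free variables $m_k,p_k,\wm_k,\wip_k,\um_k,\up_k$ and the parameters $a_1,a_3,b_1,b_3$, with $m_3=\wm_3=\um_3$. I expect step (ii) of the first route --- the cancellation of the odd powers of $\epsilon$ once $\Delta$ is cleared, which is precisely the ``mysterious'' feature highlighted in Observation~\ref{conjecture 1} --- to be the only part that is not mere bookkeeping; for the Kirchhoff system it is verifiable by hand, even though no structural explanation for it is apparent at this stage.
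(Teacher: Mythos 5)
Your proposal is sound, and your second route is essentially the paper's own method: the authors write out this kind of proof only for the first Clebsch flow (proof of Theorem \ref{Th: dC basis 2}), where the invariance $J_0\circ\Phi_f=J_0$ is recast, via symmetry of the bilinear expressions and reversibility, as an equality between the ``backward'' and ``forward'' evaluations, the differences $\wip_i-p_i$, $p_i-\up_i$ and the $\epsilon$-quadratic combinations are replaced through the Kahan equations of motion, and everything collapses to a polynomial identity in the backward, current and forward variables; for the Kirchhoff case no proof appears in the paper and the reader is implicitly asked to repeat that computation. You also correctly identify $C_1,C_3$ as the polarizations of $c_1,c_3$ with $\epsilon^2\mapsto-\epsilon^2$ (the $m_3^2$ terms surviving polarization because $\wm_3=m_3$), which is precisely the content of Observation \ref{conjecture 1} here. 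Your first route, through Theorem \ref{th volume}, is a legitimate alternative which the authors themselves invoke only for the invariant-measure corollary; note that your condition (i) (non-cancellation of factors of $\Delta$) is not actually needed for part (b) of that theorem --- all that matters is that $C_1\Delta$ and $C_3\Delta$ are polynomials in $(m,p)$ depending on $\epsilon$ only through $\epsilon^2$, since the proof of Theorem \ref{th volume} uses only the symmetry of $\widehat P,\widehat Q$ and the evenness of the numerators, with the common denominator cancelling in the ratio. In either route the actual algebraic verification is left unexecuted (it is heavier than for the first Clebsch flow, since $C_3$ mixes $m_i\wm_i$ and $p_i\wip_i$ terms as in the general Clebsch case), but the strategy is correct and consistent with the paper.
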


\begin{corollary}\label{Th: Kirchhoff conserved density} {\bf (Density of an invariant measure)}
The map $\Phi_f(x;\epsilon)$ has an invariant measure
$$
\frac{dm_1\wedge dm_2\wedge dm_3 \wedge dp_1\wedge dp_2\wedge dp_3}{\phi(m,p;\epsilon)},
$$
where for $\phi(m,p;\epsilon)$ one can take the numerator of either of the functions $C_1$, $C_3$.
\end{corollary}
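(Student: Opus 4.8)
The plan is to deduce the corollary from Theorem~\ref{th volume}, in exactly the way the analogous statement in the first Clebsch case (Corollary~\ref{Th: Clebsch1 conserved density}) follows from it. The first step is to recognize $C_1$ and $C_3$ of \eqref{eq: dKirchhoff C1}--\eqref{eq: dKirchhoff R} as \emph{symmetric bilinear} expressions in $(m,p)$ and $(\wm,\wip)$. This is immediate term by term, the only point to notice being the quadratic $m_3^2$ occurring in $C_1$: here one uses the third Kahan equation in \eqref{eq:dKirchhoff}, which gives $\wm_3=m_3$, so that $m_3^2=m_3\wm_3$ is itself bilinear and symmetric. In other words $C_1$ and $C_3$ are precisely the polarizations $\widehat Q$, $\widehat P$ of the quadratic forms $c_1$, $c_3$ of \eqref{eq: dKirchhoff c1}--\eqref{eq: dKirchhoff c3} with $\epsilon^2$ replaced by $-\epsilon^2$; they are the objects manufactured by the recipe of Observation~\ref{conjecture 1}.

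The second step is to check the hypothesis of Theorem~\ref{th volume}: after the substitution $(\wm,\wip)=\Phi_f(m,p;\epsilon)$ and reduction to the common denominator $\Delta(m,p;\epsilon)$, each of $C_1$, $C_3$ takes the form \eqref{bilinear exp P}, resp.\ \eqref{bilinear exp Q}, with numerator a polynomial in $m,p$ depending on $\epsilon^2$ rather than on $\epsilon$. Granting this, Theorem~\ref{th volume}a) applied with $\widehat P=C_1$ gives the invariant measure with density equal to the numerator of $C_1$, and applied with $\widehat P=C_3$ gives the one with density the numerator of $C_3$; part b) with $\widehat P=C_3$, $\widehat Q=C_1$ reproduces the integral $J_0=C_3/C_1$ of Theorem~\ref{Th: dKirchhoff basis 2}, a useful consistency check.

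The main obstacle is the evenness in $\epsilon$ of the two numerators, the only step that is not purely formal. I would establish it either by direct expansion --- writing $p_3\wip_3$, $m_i\wm_i$, $p_i\wip_i$ out in terms of the explicit components of $\Phi_f$, verifying that a single power of $\Delta$ suffices and that the numerator is unchanged under $\epsilon\mapsto-\epsilon$, a computation of the same nature as the identity checks in the proofs of Theorems~\ref{Th: dC basis 2} and \ref{Th: Clebsch1 W2 basis} --- or, more structurally, by exploiting the reversibility \eqref{eq: reversible}: since $C_1$, $C_3$ are bilinear and even in $\epsilon$, symmetry of the form together with $\Phi_f(x,-\epsilon)=\Phi_f^{-1}(x,\epsilon)$ identifies the effect of $\epsilon\mapsto-\epsilon$ on $C_1(x,\Phi_f(x,\epsilon))$ with precomposition by $\Phi_f^{-1}$, after which \eqref{Jac det} converts the ensuing ratio of $\Delta$'s into a Jacobian determinant; feeding in that $J_0=C_3/C_1$ is already known to be an integral, together with the irreducibility of the two numerators (as in the Remark following Theorem~\ref{Th: dC basis 2}), then forces each numerator separately to be $\epsilon\mapsto-\epsilon$ invariant up to a sign, which is $+1$ by evaluation at $\epsilon=0$.
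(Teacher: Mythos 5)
Your proposal follows the paper's own route: the corollary is an application of Theorem~\ref{th volume}\,a) to the symmetric bilinear expressions $C_1$, $C_3$ of \eqref{eq: dKirchhoff C1}--\eqref{eq: dKirchhoff R} (with $m_3^2=m_3\wm_3$ justified by $\wm_3=m_3$), and the paper --- which supplies detailed verifications only for the first Clebsch flow --- likewise leaves the key hypothesis, namely that $C_i(m,p,\Phi_f(m,p;\epsilon))\,\Delta(m,p;\epsilon)$ is a polynomial depending on $\epsilon$ only through $\epsilon^2$, to a direct computation of the kind you describe first. One caution about your fallback ``structural'' argument for the evenness: it risks circularity, since in the scheme of Theorem~\ref{th volume} the integrality of $J_0=C_3/C_1$ is itself a \emph{consequence} of that same evenness (part b)), so you may only invoke it if Theorem~\ref{Th: dKirchhoff basis 2} is proved independently by a direct identity as in the Clebsch case; moreover the coprimality/irreducibility of the two numerators that your argument needs is only asserted in a Remark for the Clebsch flow and nowhere established for the Kirchhoff system, so the direct expansion remains the route to rely on.
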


\begin{theorem}  \label{Th: dKirchhoff W2 basis} {\bf (Second order Wronskians HK basis)}
The functions $$W^{(2)}_i(m,p)=\widetilde{\wm}_ip_i-m_i\widetilde{\wip}_i, \quad i=1,2,3,$$ form a HK basis for the map $\Phi_f$, with a one-dimensional null space spanned by
$[C_1:C_1:C_3]=[1:1:J_0]$, with the function $J_0$ given in \eqref{dKirchhoff 2nd int}. 
\end{theorem}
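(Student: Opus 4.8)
\noindent{\bf Proof of Theorem~\ref{Th: dKirchhoff W2 basis} (sketch).}
The plan is to deduce the statement from its general Clebsch counterpart, Theorem~\ref{Th: Clebsch2 W2 basis}, by specialisation. For $b_1\neq b_3$ the Kirchhoff parameters \eqref{eq: Kirchhoff cond} satisfy the Clebsch condition with $\beta=a_1(a_3-a_1)/(b_3-b_1)$ and $\omega_1=\omega_2=b_1/a_1$, $\omega_3=b_3/a_1$, so Theorem~\ref{Th: Clebsch2 W2 basis} applies to the map \eqref{eq:dKirchhoff} and gives $\sum_{i=1}^3 C^{\mathrm{Cl}}_i(\wwm_ip_i-m_i\wwp_i)=0$ with $C^{\mathrm{Cl}}_i$ as in \eqref{Clebsch2 C1}--\eqref{Clebsch2 C3}. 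I would then check, using $a_1=a_2$, $b_1=b_2$, the relation $\wm_3=m_3$ (which makes $G_3=p_3\wip_3+\tfrac{\beta a_3}{a_1^2}m_3^2$) and the identity $\beta(b_3-b_1)=a_1(a_3-a_1)$, that $(C^{\mathrm{Cl}}_1,C^{\mathrm{Cl}}_2,C^{\mathrm{Cl}}_3)=a_3^{-1}(C_1,C_1,C_3)$ with $C_1,C_3$ of \eqref{eq: dKirchhoff C1}--\eqref{eq: dKirchhoff R} (the parallel computation gives $(c^{\mathrm{Cl}}_1,c^{\mathrm{Cl}}_2,c^{\mathrm{Cl}}_3)=a_3^{-1}(c_1,c_1,c_3)$ for the first order basis). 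This establishes the asserted relation whenever $b_1\neq b_3$; since, after clearing the common denominator of the iterates, it is a polynomial identity in the twelve phase variables and in $a_i,b_i$, it then holds for all Kirchhoff parameters by Zariski density, in particular at $b_1=b_3$, where $\beta$ degenerates but $C_1,C_3$ remain well defined.

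A self-contained alternative, running parallel to the proof of Theorem~\ref{Th: Clebsch1 W2 basis}, would proceed in three steps: (i) establish the four bilinear identities playing the role of \eqref{eq: 1}--\eqref{eq: 4} for \eqref{eq:dKirchhoff} with null-vectors $(c_1,c_1,c_3)$ and $(C_1,C_1,C_3)$, each obtained by substituting \eqref{eq:dKirchhoff} twice, once for the $p$-increments and once for the $m$-increments; (ii) from the analogues of \eqref{eq: 1} and \eqref{eq: 4}, using that $c_1/c_3$ and $C_1/C_3$ are integrals (Theorems~\ref{Th: dKirchhoff basis 1}, \ref{Th: dKirchhoff basis 2}) together with the normalising denominators $c_0,C_0$ furnished by the corresponding $\Psi_0$- and $\Psi_1$-type HK bases, deduce a Kirchhoff analogue of Corollary~\ref{cor K}, i.e.\ an integral $K(m,p)=\sum_{i=1}^3(C_i/C_0)(m_ip_i/c_0)$; (iii) transform $\sum_{i=1}^3 c_i(\wwm_i-\wm_i)p_i+\sum_{i=1}^3 c_i(\wm_i-m_i)\wwp_i$ by repeated use of \eqref{eq:dKirchhoff}, always leaving one variable with a double tilde, one with a single tilde and two without tilde, into the form $\sum_{i=1}^3 C_i(\wwm_ip_i-m_i\wwp_i)=\sum_{i=1}^3\widetilde C_im_ip_i-\sum_{i=1}^3 c_i\wm_i\wwp_i$, and observe that the right-hand side equals $\widetilde C_0 c_0 K(m,p)-c_0\widetilde C_0 K(\wm,\wip)=0$. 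The one novelty relative to the first Clebsch flow is that here $\sum_{i=1}^3\dot m_ip_i=a_1^{-1}(a_3-a_1)m_3\dot p_3\neq0$, so the intermediate identities carry extra terms proportional to $m_3$; these are compensated by the weight $2a_3/a_1-1$ that \eqref{eq: KC W} writes into $c_3,C_3$, and drop out of the final answer because $\wm_3=m_3$.

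The step I expect to be the real obstacle is, in the first route, the coefficient-matching verification and the Zariski-density patching of the degenerate parameter values (among them $a_1=a_2=a_3$, $b_1=b_2\neq b_3$, which is the first Clebsch flow of Section~\ref{sect Clebsch 1} up to rescaling, and $b_1=b_2=b_3$, an Euler top with trivial momentum part); in the second route, the long but elementary polynomial manipulation in the twelve variables $m_k,p_k,\wm_k,\wip_k,\wwm_k,\wwp_k$, complicated by the bookkeeping of the $m_3$-correction terms. In either route, once the correct coefficients are in hand the conclusion is immediate.
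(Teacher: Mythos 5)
The paper never writes out a proof of Theorem~\ref{Th: dKirchhoff W2 basis}: the only second-order Wronskian identity proved in detail is the one for the first Clebsch flow (Theorem~\ref{Th: Clebsch1 W2 basis}), and the Kirchhoff and general Clebsch versions are asserted, with the authors stating that the Wronskian-basis results in general rest on computer algebra. Your first route is therefore a genuinely different and rather economical argument: specialise Theorem~\ref{Th: Clebsch2 W2 basis} to the Kirchhoff parameters. I checked your coefficient matching and it is correct: with $a_1=a_2$, $b_1=b_2$ one gets $A_1=A_2=1/a_3$, $A_3=(2a_3-a_1)/(a_1a_3)$, and using $\wm_3=m_3$ together with $\beta(b_3-b_1)=a_1(a_3-a_1)$ the expressions \eqref{Clebsch2 C1}--\eqref{Clebsch2 C3} reduce exactly to $a_3^{-1}(C_1,C_1,C_3)$ with $C_1,C_3$ from \eqref{eq: dKirchhoff C1}--\eqref{eq: dKirchhoff R}; the Zariski-density patching of the degenerate parameters $b_1=b_3$ and $a_1=a_3$ is standard, since after clearing the denominators $\Delta$ of the two iterates the asserted relation is polynomial in the phase variables and in $a_i,b_i$. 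What this route buys is that a single verification for the general Clebsch flow covers Kirchhoff for free; what it costs is that Theorem~\ref{Th: Clebsch2 W2 basis} is itself left unproved in the paper, so you relocate rather than remove the computational burden. Your second route is the faithful transplant of the paper's first-Clebsch-flow proof (the four bilinear identities, the auxiliary integral of Corollary~\ref{cor K}, and the ``one double tilde, one single tilde, two bare'' reduction), and your diagnosis of the one new feature --- the weight $2a_3/a_1-1$ compensating $\sum_i\dot m_ip_i=a_1^{-1}(a_3-a_1)m_3\dot p_3$, with $\wm_3=m_3$ absorbing the correction terms --- is accurate; but the central polynomial manipulation is not executed, so as written that route is a plan rather than a proof. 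Net assessment: route 1 is complete and correct granted the general Clebsch theorem; route 2 matches the paper's own method but remains a sketch.
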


\begin{theorem} \label{Kirchhoff higher Wronskians} {\bf (Third order Wronskians HK basis)}
Functions $W^{(3)}_i(m,p)=\widetilde{\widetilde {\widetilde{m}}}_ip_i-m_i\widetilde{\widetilde {\widetilde{p}}}_i$, $i=1,2,3$, form a HK basis for the map $\Phi_f$, with a one-dimensional null space.
On orbits of the map $\Phi_f$ there holds
$$
(\widetilde{\widetilde {\widetilde{m}}}_1p_1-m_1\widetilde{\widetilde {\widetilde{p}}}_1) +
(\widetilde{\widetilde {\widetilde{m}}}_2p_2-m_2\widetilde{\widetilde {\widetilde{p}}}_2) + J_1(\widetilde{\widetilde {\widetilde{m}}}_3p_3-m_3\widetilde{\widetilde {\widetilde{p}}}_3) =0,
$$
where the function $J_1(m,p;\epsilon)$ is an integral of motion. The four integrals of motion 
$\{I_0,J_0,J_1,m_3\}$ are functionally independent .
\end{theorem}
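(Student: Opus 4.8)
\noindent\textbf{Proof proposal for Theorem~\ref{Kirchhoff higher Wronskians}.}
The plan is to follow the three-step pattern of Theorems~\ref{Th: dKirchhoff W1 basis} and \ref{Th: dKirchhoff W2 basis}: (i) show that the triple $(W^{(3)}_1,W^{(3)}_2,W^{(3)}_3)$ is a HK basis with a one-dimensional null space and find its spanning vector; (ii) conclude that the ratio of its components is an integral; (iii) verify functional independence of $\{I_0,J_0,J_1,m_3\}$. For (i) one should first exploit the structure of the Kirchhoff case: since $a_1=a_2$, $b_1=b_2$, the flow and its Kahan map are $SO(2)$-equivariant in the $(1,2)$-plane, so the null vector should again have the symmetric form $[\mathcal{C}_1:\mathcal{C}_1:\mathcal{C}_3]$; moreover $m_3=\wm_3=\widetilde{\widetilde{\widetilde{m}}}_3$ is fixed, so that $W^{(3)}_3=-m_3(\widetilde{\widetilde{\widetilde{p}}}_3-p_3)$. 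Thus the asserted relation reads
\[
W^{(3)}_1+W^{(3)}_2+J_1\,W^{(3)}_3=0,\qquad J_1=\mathcal{C}_3/\mathcal{C}_1,
\]
and, once it is known that $K(m,p)$ is exactly one-dimensional and spanned by a vector whose components are honest functions on phase space, $J_1=\mathcal{C}_3/\mathcal{C}_1$ is automatically an integral of $\Phi_f$ by the principle recalled in Section~2. So (ii) is free, and the real work is (i) and (iii).

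For (i) I would first attempt a direct computation in the spirit of the proof of Theorem~\ref{Th: Clebsch1 W2 basis}: starting from $\sum_i\mathcal{C}_i W^{(3)}_i$, repeatedly substitute the Kahan equations \eqref{eq:dKirchhoff} so as to ``peel off'' tildes one at a time, keeping at each stage one triple-tilded variable, one singly-tilded variable, and the remaining ones untilded, and aim to reduce the expression to a combination of the Kirchhoff analogues of the identities \eqref{eq: 1}--\eqref{eq: 4}, of a conserved quantity of the type appearing in Corollary~\ref{cor K}, and of the already-established integral $J_0$. To guess $\mathcal{C}_1,\mathcal{C}_3$ before doing this, I would extrapolate the passage $c_i\to C_i$ already visible between \eqref{eq: dKirchhoff c1}--\eqref{eq: dKirchhoff c3} and \eqref{eq: dKirchhoff C1}--\eqref{eq: dKirchhoff R} (polarization of the quadratic terms together with a sign flip of $\epsilon^2$), which suggests e.g.\ $\mathcal{C}_1=1+\epsilon^2a_3(a_1-a_3)m_3^2+\epsilon^2a_1(b_1-b_3)\,p_3\wwp_3$ and a correspondingly polarized $\mathcal{C}_3$; these candidates can then be tested. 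If the hand computation proves intractable, the fallback (as for the Clebsch higher-Wronskian statement) is symbolic: form the $3\times N$ matrix of values of $(W^{(3)}_1,W^{(3)}_2,W^{(3)}_3)$ along a few iterates of $\Phi_f$, clear the common denominator $\Delta$ and its iterates, and verify via Gr\"obner bases / resultants that this matrix has rank exactly $2$ at a generic point, reading off the kernel.

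For (iii) I would compute the $4\times 6$ Jacobian of $(m,p)\mapsto(I_0,J_0,J_1,m_3)$ and exhibit a single point, with generic parameters $a_i,b_i,\epsilon$, at which some $4\times 4$ minor is nonzero; since $m_3$ is a coordinate and $I_0,J_0$ are already in hand, the only substantive point is that $J_1$ is not a function of $I_0,J_0,m_3$, and one such minor settles it.

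The step I expect to be the main obstacle is (i). Unlike the cases $\ell=1,2$, there is at present no structural explanation for the persistence of a constant-coefficient Wronskian relation at order $\ell=3$: the peeling procedure generates a rapidly growing number of monomials, and no closed chain of identities analogous to \eqref{eq: 1}--\eqref{eq: 4} is visible that would organize the cancellations. Hence I expect, as the authors note for the parallel Clebsch statement, that the only presently feasible proof of the Wronskian relation (and of the functional-independence claim) is the computer-algebra verification just described, with the extrapolated formulas for $\mathcal{C}_1,\mathcal{C}_3$ reducing the first part to a single polynomial-identity check.
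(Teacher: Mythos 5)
Your proposal matches the paper's approach: the authors give no hand proof of this theorem --- they state explicitly (in the introduction and in the remark following the Clebsch analogue) that the higher-order Wronskian HK bases are established only by symbolic computation, which is exactly the computer-algebra verification you describe, combined with the standard fact that a one-dimensional null space yields integrals as ratios of its components. Your structural remarks (the $SO(2)$-symmetric form $[\mathcal{C}_1:\mathcal{C}_1:\mathcal{C}_3]$ of the null vector and the polarization heuristic for guessing its entries) go beyond what the paper records, but are consistent with it and do not change the substance of the argument.
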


\section{Lagrange top}
\label{sect Lagrange}

The Hamilton function of the Lagrange top is $H=\frac{1}{2}H_1$, where
\beq \label{eq: lagrangehamilton}
H_1=m_1^2+m_2^2+\alpha m_3^2+2\gamma p_3.
\eeq
Unlike the Clebsch and the Kirchhoff cases, this function is not homogeneous.
Equations of motion of Lagrange top read
\begin{equation}  \label{eq: lagrange}
\left\{\begin{array}{l}
\dot{m}_1  =  (\alpha-1)m_2m_3 + \gamma p_2, \vspace{.1truecm} \\
\dot{m}_2  = (1-\alpha)m_1m_3 - \gamma p_1, \vspace{.1truecm} \\
\dot{m}_3  = 0, \vspace{.1truecm} \\
\dot{p}_1 = \alpha p_2m_3-p_3m_2, \vspace{.1truecm} \\
\dot{p}_2 = p_3m_1-\alpha p_1m_3, \vspace{.1truecm} \\
\dot{p}_3 = p_1m_2-p_2m_1.
\end{array}\right.
\end{equation}
So, like in the Kirchhoff case, $m_3$ is an obvious fourth integral, due to the rotational symmetry of the system. 

Wronskian relation satisfied on solutions of \eqref{eq: lagrange}:
\begin{equation}\label{eq: LT W}
(\dot{m}_1p_1-m_1\dot{p}_1)+(\dot{m}_2p_2-m_2\dot{p}_2)+(2\alpha-1)(\dot{m}_3p_3-m_3\dot{p}_3)=0.
\end{equation}

Applying the Kahan-Hirota-Kimura scheme to the vector field $f$ from \eqref{eq: lagrange}, we arrive at a birational map $\Phi_f:\bbR^6 \rightarrow \bbR^6$, $(m,p)\mapsto (\wm,\wip)$, defined by the following linear system:
\beq\label{eq:dLagrange}
 \left\{ \begin{array}{l}
\widetilde{m}_1-m_1 = \epsilon(\alpha-1)(\widetilde{m}_2m_3+m_2\widetilde{m}_3)+\epsilon\gamma(p_2+\widetilde{p}_2),  \vspace{.1truecm} \\
\widetilde{m}_2-m_2 = \epsilon(1-\alpha)(\widetilde{m}_3m_1+m_3\widetilde{m}_1)- \epsilon\gamma(p_1+\widetilde{p}_1),  \vspace{.1truecm}\\
\widetilde{m}_3-m_3 =0, \vspace{.1truecm} \\
\widetilde{p}_1-p_1 = \epsilon \alpha (p_2\widetilde{m}_3+\widetilde{p}_2m_3) - \epsilon (p_3\widetilde{m}_2+\widetilde{p}_3m_2), \vspace{.1truecm}\\
\widetilde{p}_2-p_2= \epsilon (p_3\widetilde{m}_1+\widetilde{p}_3m_1) - \epsilon \alpha (p_1\widetilde{m}_3+\widetilde{p}_1m_3),  \vspace{.1truecm}\\
\widetilde{p}_3-p_3 = \epsilon (p_1\widetilde{m}_2+\widetilde{p}_1m_2-p_2\widetilde{m}_1-\widetilde{p}_2m_1).
\end{array}\right.
\end{equation}

\begin{theorem}\label{Th: dLagrange basis 1}  {\bf (Quadratic-fractional integral, \cite{PPS2})}
\begin{equation} \label{dLagrange simple int}
I_0(m,p;\epsilon)=\frac{r(m,p;\epsilon)}{s(m,p;\epsilon)},
\end{equation}
where
\begin{eqnarray}
r & = & (2\alpha-1)+\epsilon^2(\alpha-1)(m_1^2+m_2^2)+\frac{\epsilon^2\gamma}{m_3}(m_1p_1+m_2p_2),
\label{eq: dLagrange r}\\
s & = & 1+\epsilon^2\alpha(1-\alpha)m_3^2-\epsilon^2\gamma p_3,\label{eq: dLagrange s}
\end{eqnarray}
is an integral of motion of the map $\Phi_f$. 
\end{theorem}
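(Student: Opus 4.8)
\noindent
The plan is to proceed as in the proofs of Theorems~\ref{Th: dC basis 1} and~\ref{Th: dC basis 2}: cross-multiply, rewrite everything through the increments $\wm_i-m_i$, $\wip_i-p_i$, and then substitute the Kahan equations~\eqref{eq:dLagrange} to reach a polynomial identity in $m_i,p_i,\wm_i,\wip_i$. First I would exploit the one structural simplification special to this system: the third line of~\eqref{eq:dLagrange} gives $\wm_3=m_3$, so the factor $\epsilon^2\gamma/m_3$ inside $r$ is $\Phi_f$-invariant, and $\wip_3$ is the only tilded variable that enters $\widetilde{s}=s\circ\Phi_f$. Writing $r=(2\alpha-1)+\epsilon^2 r_2$ and $s=1+\epsilon^2 s_2$ with $r_2=(\alpha-1)(m_1^2+m_2^2)+\frac{\gamma}{m_3}(m_1p_1+m_2p_2)$ and $s_2=\alpha(1-\alpha)m_3^2-\gamma p_3$, the assertion $I_0\circ\Phi_f=I_0$, i.e.\ $r\,\widetilde{s}=\widetilde{r}\, s$, is equivalent (after dividing by $\epsilon^2$ and clearing the harmless denominator $m_3$) to
\[
\widetilde{u}-u=\epsilon^2\bigl(r_2(\widetilde{s}_2-s_2)-s_2(\widetilde{r}_2-r_2)\bigr),\qquad u:=r_2-(2\alpha-1)s_2 .
\]

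\noindent
This identity has a clear continuous shadow, which both explains it and guides the computation. A short computation on~\eqref{eq: lagrange} gives $\frac{d}{dt}(m_1^2+m_2^2)=2\gamma(m_1p_2-m_2p_1)$, $\frac{d}{dt}(m_1p_1+m_2p_2)=m_3(m_1p_2-m_2p_1)$ and $\dot p_3=p_1m_2-p_2m_1$, whence $\dot r_2=(2\alpha-1)\gamma(m_1p_2-m_2p_1)=(2\alpha-1)\dot s_2$; thus $u$ is an integral of the continuous Lagrange top, and the right-hand side above is precisely the $O(\epsilon^2)$ discrete correction to $\widetilde{u}-u$. To establish the displayed identity I would substitute~\eqref{eq:dLagrange}: on the left, $\widetilde{s}_2-s_2=-\gamma(\wip_3-p_3)$ is replaced via the last line of~\eqref{eq:dLagrange} by $-\epsilon\gamma(p_1\wm_2+m_2\wip_1-p_2\wm_1-m_1\wip_2)$, while in $\widetilde{r}_2-r_2$ one polarizes the quadratic monomials, $\wm_i^2-m_i^2=(\wm_i+m_i)(\wm_i-m_i)$ and $\wm_i\wip_i-m_ip_i=\frac12(\wm_i+m_i)(\wip_i-p_i)+\frac12(\wm_i-m_i)(\wip_i+p_i)$ for $i=1,2$, and then substitutes the relevant lines of~\eqref{eq:dLagrange}; the same substitutions are carried out on the right. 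Both sides then become polynomials in $m_i,p_i,\wm_i,\wip_i$ (with $\wm_3=m_3$), and one checks the identity term by term.

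\noindent
The main obstacle is simply the size of the bookkeeping. Relative to the homogeneous Clebsch flow, the Lagrange vector field carries the inhomogeneous terms $\gamma p_i$, so each increment $\wm_i-m_i$, $\wip_i-p_i$ splits into an ``$m$-quadratic'' and a ``$\gamma p$-linear'' contribution, roughly doubling the number of monomials to track; and because $r$ and $s$ have nonzero constant terms, one must keep the $\epsilon^2$ and $\epsilon^4$ orders separate and verify that the genuine $O(\epsilon^4)$ remainder of $\widetilde{u}-u$ cancels exactly against $\epsilon^2\bigl(r_2(\widetilde{s}_2-s_2)-s_2(\widetilde{r}_2-r_2)\bigr)$. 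That cancellation is the crux of the proof; it is forced by the precise combination of coefficients in~\eqref{eq: dLagrange r}--\eqref{eq: dLagrange s}, equivalently by the continuous conservation $\dot u=0$ recorded above. As in~\cite{PPS2}, the resulting multivariable polynomial identity can be verified directly --- by hand with some patience, or by a one-line computer-algebra check.
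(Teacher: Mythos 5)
The paper does not actually prove Theorem~\ref{Th: dLagrange basis 1} (it is quoted from \cite{PPS2}); the only model available here is the proof of the analogous Theorem~\ref{Th: dC basis 1} for the first Clebsch flow, and your plan follows exactly that template. Your reduction of $r\widetilde{s}=\widetilde{r}s$ to $\widetilde{u}-u=\epsilon^2\bigl(r_2(\widetilde{s}_2-s_2)-s_2(\widetilde{r}_2-r_2)\bigr)$ is algebraically correct, and your continuous-limit computation ($\dot r_2=(2\alpha-1)\dot s_2$, hence $\dot u=0$) is right and is a genuinely useful sanity check on the coefficients. So the strategy is sound and is the intended one.

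The gap is in the final step, which you assert rather than carry out, and your description of it is too optimistic: unlike the Clebsch case, \emph{one} round of substitution of \eqref{eq:dLagrange} does not produce an identity valid for free variables $m_i,p_i,\wm_i,\wip_i$. Concretely, writing $W=(p_2+\wip_2)(\wm_1+m_1)-(p_1+\wip_1)(\wm_2+m_2)$ and $V=p_1\wm_2+m_2\wip_1-p_2\wm_1-m_1\wip_2$, the substitutions you describe give $\widetilde F-F=\epsilon\gamma W$ (with $F=m_1^2+m_2^2$), $\wip_3-p_3=\epsilon V$, and $\widetilde G-G=\tfrac{\epsilon m_3}{2}W+\tfrac{\epsilon}{2}(\wip_3-p_3)(m_1\wm_2-m_2\wm_1)$ (with $G=m_1p_1+m_2p_2$), so that after dividing by $\epsilon\gamma$ the identity to be checked becomes
\begin{equation*}
\bigl(\alpha-\tfrac12\bigr)Ws+Vr+\frac{\epsilon}{2m_3}V(m_1\wm_2-m_2\wm_1)\,s=0 .
\end{equation*}
Its order-$\epsilon^0$ part is $(\alpha-\tfrac12)(W+2V)$, and $W+2V=(\wm_1-m_1)(\wip_2-p_2)-(\wm_2-m_2)(\wip_1-p_1)$ is \emph{not} identically zero as a polynomial in the free variables; it is a product of increments and must itself be fed back through \eqref{eq:dLagrange} (and the terms so produced must then be matched against the $\epsilon^2 s_2 W$, $\epsilon^2 r_2V$ and $\epsilon V(m_1\wm_2-m_2\wm_1)s$ pieces, and so on). So the verification requires an iterated substitution scheme, and it is precisely the termination of that scheme in a free polynomial identity — not mere bookkeeping of monomials — that constitutes the proof. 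You correctly identify this cancellation as "the crux", but you neither exhibit the substitution chain nor establish that it closes; as written, the argument is a plausible plan plus an appeal to computer algebra, i.e.\ the decisive step is missing.
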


\begin{theorem} \label{Th: dLagrange W1 basis} {\bf (Discrete Wronskians HK basis, \cite{PPS2})}
Functions $W^{(1)}_i(m,p)=\wm_ip_i-m_i\wip_i$, $i=1,2,3$, form a HK basis for the map $\Phi_f$ with a one-dimensional null space spanned by
$[1:1:I_0]$, where $I_0$ is the integral of $\Phi_f$ given by \eqref{dLagrange simple int}.
\end{theorem}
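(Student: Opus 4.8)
\textbf{Proof proposal for Theorem~\ref{Th: dLagrange W1 basis}.}
The plan is to reduce everything to a single ``master identity'' and then verify that identity in the spirit of the proofs of Theorems~\ref{Th: dC basis 1} and~\ref{Th: Clebsch1 W1 basis}. First note that the third equation of \eqref{eq:dLagrange} gives $\wm_3=m_3$, so that $W^{(1)}_3=\wm_3p_3-m_3\wip_3=m_3(p_3-\wip_3)$; hence, although the coefficient $r$ in \eqref{eq: dLagrange r} carries the nonpolynomial factor $1/m_3$, the product $r\,W^{(1)}_3=(m_3r)(p_3-\wip_3)$ is polynomial in $m,p,\wm,\wip$. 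With $r,s$ as in \eqref{eq: dLagrange r}--\eqref{eq: dLagrange s}, I claim it suffices to prove the relation
\begin{equation}\label{eq: LT master}
s(m,p;\epsilon)\big(\wm_1p_1-m_1\wip_1+\wm_2p_2-m_2\wip_2\big)+r(m,p;\epsilon)\,(\wm_3p_3-m_3\wip_3)=0,
\end{equation}
understood as an identity valid whenever $(\wm,\wip)=\Phi_f(m,p;\epsilon)$. Indeed, applying \eqref{eq: LT master} at $\Phi_f^{\ell}(m,p)$ and using that $I_0=r/s$ is an integral of $\Phi_f$ by Theorem~\ref{Th: dLagrange basis 1} — so the projective vector $[\,s:s:r\,]$ is constant along each orbit — one obtains $\sum_{i=1}^3 c_i\,W^{(1)}_i(\Phi_f^{\ell}(m,p))=0$ for all $\ell\in\bbZ$ with the fixed vector $c=(s(m,p;\epsilon),s(m,p;\epsilon),r(m,p;\epsilon))$. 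Thus $[\,1:1:I_0\,]=[\,s:s:r\,]$ lies in the null space $K(m,p)$ of $(W^{(1)}_1,W^{(1)}_2,W^{(1)}_3)$ for every $(m,p)$, and one-dimensionality of $K(m,p)$ follows by exhibiting one (e.g.\ numerically chosen) orbit on which the Wronskian vectors at two consecutive points are linearly independent in $\bbR^3$, forcing $\dim K(m,p)\le 1$.

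To prove \eqref{eq: LT master} I would follow the bookkeeping of the proof of Theorem~\ref{Th: Clebsch1 W1 basis} almost verbatim: write each $W^{(1)}_i$ as $(\wm_i-m_i)p_i-m_i(\wip_i-p_i)$, substitute the first three equations of \eqref{eq:dLagrange} for $\wm_i-m_i$ and the last three for $\wip_i-p_i$, and collect terms into the analogues of the auxiliary relations \eqref{eq: 1}--\eqref{eq: 4}, with $c=(s,s,r)$ on the left and its polarization $(S,S,R)$ — obtained from $r,s$ by replacing $\epsilon^2$ with $-\epsilon^2$ — on the right. The mechanism that makes this succeed is the same ``order-$\epsilon^2$ miracle'' as in the Clebsch case: the sums $\sum_i(\wm_i-m_i)p_i$ and $\sum_i m_i(\wip_i-p_i)$, a priori of order $\epsilon$, have vanishing $O(\epsilon)$ part precisely by the continuous Wronskian relation \eqref{eq: LT W} together with $\wm_3=m_3$, and a second substitution of the equations of motion turns them into quadratic expressions of order $\epsilon^2$ that reproduce exactly the pieces $\epsilon^2(\alpha-1)(m_1^2+m_2^2)$, $\epsilon^2\gamma(m_1p_1+m_2p_2)/m_3$, $\epsilon^2\alpha(1-\alpha)m_3^2$ and $\epsilon^2\gamma p_3$ in $r$ and $s$. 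After clearing the factor $1/m_3$, \eqref{eq: LT master} becomes a polynomial identity in the twelve variables $m_k,p_k,\wm_k,\wip_k$ — no longer requiring the defining relations — which one checks term by term, or confirms by a short symbolic computation as the paper does for the higher-order Wronskians.

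The main obstacle, compared with the homogeneous first Clebsch flow, is the inhomogeneity of the Lagrange vector field: the linear terms $\pm\gamma p_i$ in the equations for $\dot m_{1,2}$, and the corresponding $\gamma(p_i+\wip_i)$ terms in \eqref{eq:dLagrange}, generate cross-terms absent from the Clebsch bookkeeping, and the natural quadratic coefficient is $m_3 r$ rather than $r$, tying the relation to the symmetry integral $m_3$. Keeping track of these contributions — in practice, checking that the $\gamma$-linear and the $\gamma$-free parts of \eqref{eq: LT master} cancel separately — is the only delicate point; once the substitution scheme is set up the verification is mechanical. The remaining claims (linear independence of the three $W^{(1)}_i$ over $\bbR$, and $\dim K(m,p)=1$) are routine, the latter by the single rank check described above.
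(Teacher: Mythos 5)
Your high-level architecture is sound and is essentially the template the paper uses for the Clebsch analogue (Theorem~\ref{Th: Clebsch1 W1 basis}): reduce the claim to the single-step identity $s\,(W^{(1)}_1+W^{(1)}_2)+r\,W^{(1)}_3=0$, propagate it along the orbit using the fact that $I_0=r/s$ is an integral, and settle one-dimensionality of the null space by a generic rank check. (The paper itself offers no proof of the Lagrange statement, which is quoted from \cite{PPS2}, so the Clebsch proof is the only available model, and you follow it.) The reduction step and the rank argument are fine.

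The genuine gap is in the mechanism you propose for verifying the master identity. Your ``order-$\epsilon^2$ miracle'' asserts that the two halves $\sum_i A_i(\wm_i-m_i)p_i$ and $\sum_i A_i m_i(\wip_i-p_i)$, with weights $A=(1,1,2\alpha-1)$, individually have vanishing $O(\epsilon)$ part. That is what happens in the Clebsch case, where $\sum_i\dot m_ip_i=0$ and $\sum_i m_i\dot p_i=0$ hold separately, and it is exactly what makes each of the four auxiliary identities \eqref{eq: 1}--\eqref{eq: 4} hold on its own. For the Lagrange top it is false: from \eqref{eq: lagrange} one computes
\[
\dot m_1p_1+\dot m_2p_2+(2\alpha-1)\dot m_3p_3=(\alpha-1)m_3(m_2p_1-m_1p_2)=m_1\dot p_1+m_2\dot p_2+(2\alpha-1)m_3\dot p_3,
\]
so neither weighted sum vanishes; only their difference does, and that difference is precisely the Wronskian relation \eqref{eq: LT W}. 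Consequently the direct analogues of \eqref{eq: 1}--\eqref{eq: 4} with $c=(s,s,r)$ on the left and $(S,S,R)$ on the right cannot hold --- their two sides already disagree at order $\epsilon$ --- and the bookkeeping you describe will not close. The repair is to keep the two halves coupled from the outset, i.e.\ to manipulate $\sum_i c_i\big((\wm_i-m_i)p_i-m_i(\wip_i-p_i)\big)$ as a whole (equivalently, to allow a common correction term, a discrete avatar of $(\alpha-1)m_3(m_2p_1-m_1p_2)$, on the right-hand sides of the would-be analogues of \eqref{eq: 1} and \eqref{eq: 2}, which cancels upon subtraction) before performing the second substitution of the equations of motion. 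With that modification the term-by-term verification should go through; as written, the central computation of your proof would fail at its first step.
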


Novel results, supporting Observations \ref{conjecture 1} and \ref{conjecture 2}, are as follows.

\begin{theorem}\label{Th: dLagrange basis 2} {\bf (Bilinear-fractional integral)}
The function
\begin{equation} \label{dLagrange 2nd int}
J_0(m,p;\epsilon)=\frac{R(m,p;\epsilon)}{S(m,p;\epsilon)},
\end{equation}
where
\begin{eqnarray}
R & = & (2\alpha-1)-\epsilon^2(\alpha-1)(m_1\wm_1+m_2\wm_2)-\frac{\epsilon^2\gamma}{2m_3}(\wm_1p_1+m_1\wip_1+\wm_2p_2+m_2\wip_2), \nonumber\\
\label{eq: dLagrange R}\\
S & = & 1-\epsilon^2\alpha(1-\alpha)m_3^2+\tfrac{1}{2}\epsilon^2\gamma (p_3+\wip_3),  
\label{eq: dLagrange S}
\end{eqnarray}
is an integral of motion of the map $\Phi_f$. 
\end{theorem}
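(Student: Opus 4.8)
The plan is to reproduce, for the Lagrange top, the argument used in Section~\ref{sect Clebsch 1} to prove Theorem~\ref{Th: dC basis 2}. First observe that $R$ and $S$ in \eqref{eq: dLagrange R}--\eqref{eq: dLagrange S} are \emph{symmetric bilinear} in the relevant pair of consecutive iterates: since $\wm_3=m_3$ one may read $m_3^2$ as $m_3\wm_3$, so $S$ is symmetric in $(p_3,\wip_3)$, while $R$ is assembled from the polarizations $m_i\wm_i$ of $m_i^2$ and $\tfrac12(\wm_ip_i+m_i\wip_i)$ of $m_ip_i$; moreover both depend on $\ep$ only through $\ep^2$. Since $\Phi_f$ is a bijection, the claim $J_0\circ\Phi_f=J_0$ is equivalent, after the back-substitution $x\mapsto\untilde{x}$ and use of reversibility \eqref{eq: reversible}, to the identity (listing arguments in the slot order $m,p,\wm,\wip$)
\[
\frac{R(\um,\up,m,p;\ep^2)}{S(\um,\up,p;\ep^2)}=\frac{R(m,p,\wm,\wip;\ep^2)}{S(m,p,\wip;\ep^2)},
\]
which is to be verified on orbits of $\Phi_f$, i.e.\ as a polynomial identity in $m,p,\wm,\wip,\um,\up$ constrained by two copies of the equations of motion \eqref{eq:dLagrange} (forward with $\ep$, backward with $-\ep$).

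To carry this out I would first multiply through by $m_3=\um_3$ (which is an integral of $\Phi_f$), so that one works with the genuine polynomial $m_3R$, then cross-multiply and collect everything on one side. As in Section~\ref{sect Clebsch 1}, the engine of the computation is to split every bilinear cross-term into products involving a \emph{difference} of two consecutive iterates, via elementary identities of the type
\[
2m_i(\wm_i-\um_i)=(\wm_i+m_i)(m_i-\um_i)+(\wm_i-m_i)(m_i+\um_i),
\]
together with the analogous ones for the mixed $m_ip_i$-terms and for $p_3$, and then to replace each difference $\wm_i-m_i$, $m_i-\um_i$, $\wip_i-p_i$, $p_i-\up_i$ by the corresponding line of \eqref{eq:dLagrange} (with $\ep$ or with $-\ep$). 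After these substitutions the $\ga$-linear and $(\al-1)$-quadratic contributions should reorganize with massive cancellation, leaving a plain algebraic identity in the eighteen variables that is checked directly.

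An alternative, and probably cleaner, route is to verify the hypotheses of Theorem~\ref{th volume} directly: compute $\De(m,p;\ep)=\det\bigl(I-\ep f'(m,p)\bigr)$ for the vector field \eqref{eq: lagrange}, and check, using $\wx=x+2\ep(I-\ep f'(x))^{-1}f(x)$ from \eqref{eq: Phi gen}, that $S\cdot\De$ and $m_3R\cdot\De$ are polynomials in $(m,p)$ alone (depending on $\ep^2$). By construction these coincide with the polarizations, with $\ep^2\mapsto-\ep^2$, of the denominator $s$ and of $m_3$ times the numerator $r$ of the quadratic-fractional integral $I_0$ of Theorem~\ref{Th: dLagrange basis 1} --- this is exactly the mechanism of Observation~\ref{conjecture 1}. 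Theorem~\ref{th volume}(b) then yields at once that $J_0=R/S$ is an integral of motion, and part (a) simultaneously produces the invariant measure.

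The main obstacle, whichever route one takes, is organizational rather than conceptual: the Lagrange top lacks the cyclic $1\to2\to3$ symmetry that streamlines the Clebsch computation --- only $m_1,m_2,p_1,p_2,p_3$ evolve while $m_3$ is frozen, the inhomogeneous terms $\ga p_2$ and $-\ga p_1$ couple the $m$- and $p$-equations asymmetrically, and the explicit factor $1/m_3$ in $R$ forces one to carry an extra $m_3$ throughout. Matching up the $O(\ep^2)$ terms after substitution is therefore the delicate point, but the verification remains a finite, mechanical computation that (as the paper remarks for the analogous Clebsch statements) can be done by hand.
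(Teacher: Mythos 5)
Your plan is essentially the paper's own approach: the detailed proof is given in the paper only for the first Clebsch flow (Theorem \ref{Th: dC basis 2}), by exactly the splitting-into-differences and equations-of-motion substitution you describe, with the Lagrange case left to the same mechanics, and your alternative route through Theorem \ref{th volume} is precisely the framework the paper sets up in Observation \ref{conjecture 1} (note that $R$ and $S$ are indeed the sign-flipped polarizations of $r$ and $s$). The only thing not supplied is the execution of the (mechanical) algebraic verification itself, which is likewise omitted in the paper for the Lagrange top, so the proposal is consistent with, and not weaker than, what the paper provides.
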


\begin{corollary}\label{Th: dLagrange conserved density} {\bf (Density of an invariant measure)}
The map $\Phi_f(x;\epsilon)$ has an invariant measure
$$
\frac{dm_1\wedge dm_2\wedge dm_3 \wedge dp_1\wedge dp_2\wedge dp_3}{\phi(m,p;\epsilon)},
$$
where $\phi(m,p;\epsilon)$ can be taken as the numerator of either of the functions $R$, $S$.
\end{corollary}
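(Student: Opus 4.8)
The plan is to recognise this Corollary as the instance of Observation~\ref{conjecture 1} for the Lagrange top and to deduce it from Theorem~\ref{th volume}(a), using the bilinear-fractional integral already established in Theorem~\ref{Th: dLagrange basis 2}. First I would identify $R$ and $S$ with the symmetric bilinear expressions $\widehat P$ and $\widehat Q$ of Theorem~\ref{th volume}. Comparing \eqref{eq: dLagrange R}--\eqref{eq: dLagrange S} with \eqref{eq: dLagrange r}--\eqref{eq: dLagrange s}, one sees that $R$ and $S$ are exactly the polarizations of the numerator $r$ and the denominator $s$ of the quadratic-fractional integral $I_0$ of Theorem~\ref{Th: dLagrange basis 1}, with $\epsilon^2$ replaced by $-\epsilon^2$. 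Here the relation $\wm_3=m_3$ coming from \eqref{eq:dLagrange} is used to make sense of the polarization of the term $m_3^2$ and of the coefficient $1/m_3$: with $\wm_3=m_3$ these are unambiguous, and as a result $R$ and $S$ are genuinely invariant under the interchange $(m,p)\leftrightarrow(\wm,\wip)$ and depend on $\epsilon$ only through $\epsilon^2$.

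Next I would verify hypothesis \eqref{bilinear exp P}. On substituting the explicit birational map \eqref{eq:dLagrange} --- each component $\wm_i,\wip_i$ being a rational function of $(m,p)$ with common denominator $\Delta(m,p;\epsilon)=\det(I-\epsilon f'(m,p))$ and polynomial numerator --- into the expressions $R$ and $S$, which are linear in $(\wm,\wip)$, one obtains rational functions of $(m,p)$ whose denominators divide $\Delta(m,p;\epsilon)$. Two facts then remain to be checked by a direct computation which (as for the Clebsch and Kirchhoff systems) I would carry out with Maple: (i) that no spurious cancellation occurs, so that after reduction the common denominator is exactly $\Delta(m,p;\epsilon)$ --- in particular the apparent pole along $m_3=0$ disappears, this being the discrete counterpart of the continuous identity $\frac{d}{dt}(m_1p_1+m_2p_2)=-m_3\dot p_3$ forced by $\dot m_3=0$; and (ii) that the numerators, say $\rho$ and $\sigma$, are polynomials in $m,p$ and in $\epsilon^2$, not merely in $\epsilon$. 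Granting (i) and (ii), $R$ satisfies \eqref{bilinear exp P} with $p(m,p;\epsilon^2)=\rho$ and $S$ satisfies \eqref{bilinear exp Q} with $q(m,p;\epsilon^2)=\sigma$, so Theorem~\ref{th volume}(a) immediately produces the invariant measures with densities $\rho$ and $\sigma$ --- which are precisely the numerators of $R$ and $S$. The two answers are mutually consistent, since by Theorem~\ref{Th: dLagrange basis 2} the quotient $\rho/\sigma=R/S=J_0$ is an integral of $\Phi_f$, and any two invariant densities differing by an integral are simultaneously admissible.

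The hard part is check (ii), the $\epsilon^2$-parity of the numerators. A priori, substituting $\wm=\Phi_f(m,p;\epsilon)$, whose numerator is a genuine polynomial in $\epsilon$ containing odd powers, into the bilinear form $R$ could well generate odd powers of $\epsilon$, and their disappearance is exactly the mysterious content of Observation~\ref{conjecture 1} in this case. A conceptual reason --- presumably involving the reversibility $\Phi_f^{-1}(x,\epsilon)=\Phi_f(x,-\epsilon)$ and the symmetry of $R$ --- would be very desirable, but at present I would establish it, as in the Clebsch and Kirchhoff cases and as already flagged in the Remark preceding Corollary~\ref{Th: Clebsch1 conserved density}, only by symbolic computation. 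Check (i), including the cancellation of the $1/m_3$, is of the same computational nature but far more transparent, being forced by the structure of \eqref{eq:dLagrange} and by $m_3$ being an integral of the map.
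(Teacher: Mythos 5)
Your proposal is correct and takes essentially the same route as the paper: the corollary is obtained by recognizing $R$ and $S$ as the polarizations of $r$ and $s$ with $\epsilon^2\to-\epsilon^2$, verifying (computationally) that they satisfy the hypotheses of Theorem~\ref{th volume}, and then applying part (a). Your explicit handling of the $\wm_3=m_3$ constraint and of the $\epsilon^2$-parity of the numerators matches what the paper only asserts (cf.\ the Remark following Theorem~\ref{Th: dC basis 2} in the Clebsch case).
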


\begin{theorem}  \label{Th: dLagrange W2 basis} {\bf (Second order Wronskians HK basis)}
The functions $$W^{(2)}_i(m,p)=\widetilde{\wm}_ip_i-m_i\widetilde{\wip}_i, \quad i=1,2,3,$$ form a HK basis for the map $\Phi_f$, with a one-dimensional null space spanned by
$[1:1:J_0]$, with the function $J_0$ given in \eqref{dLagrange 2nd int}. 
\end{theorem}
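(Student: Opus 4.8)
The plan is to imitate, almost verbatim, the proof of Theorem~\ref{Th: Clebsch1 W2 basis}, with the Lagrange map \eqref{eq:dLagrange} in place of the first Clebsch map. Write $c_1=c_2=s$, $c_3=r$ for the coefficients of the first-order Wronskian HK basis of Theorem~\ref{Th: dLagrange W1 basis} (with $r,s$ as in \eqref{eq: dLagrange r}, \eqref{eq: dLagrange s}), and $C_1=C_2=S$, $C_3=R$ for the candidate coefficients \eqref{eq: dLagrange R}, \eqref{eq: dLagrange S}, so that $[1:1:J_0]=[S:S:R]$. The goal is to prove that on orbits of $\Phi_f$
\[
S\,(\wwm_1p_1-m_1\wwp_1)+S\,(\wwm_2p_2-m_2\wwp_2)+R\,(\wwm_3p_3-m_3\wwp_3)=0,
\]
and that the resulting null space is exactly one-dimensional. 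Since the third equation of \eqref{eq:dLagrange} gives $\wm_3=m_3$, hence $\wwm_3=m_3$, the $i=3$ term already collapses to $m_3(p_3-\wwp_3)$, which shortens the combinatorics relative to the Clebsch case.

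First I would record the discrete counterparts of the identities behind the continuous Wronskian relation \eqref{eq: LT W} --- the Lagrange analogues of \eqref{eq: 1}--\eqref{eq: 4}: closed expressions for $\sum_i(\wm_i-m_i)p_i$, $\sum_i m_i(\wip_i-p_i)$, $\sum_i(\wm_i-m_i)\wip_i$ and $\sum_i\wm_i(\wip_i-p_i)$, each obtained by first substituting the $m$-equations and then the $p$-equations of \eqref{eq:dLagrange}, each of order $\ep^2$, and each organizing into a combination of the $c_i$, the $C_i$ and the bilinear monomials $m_ip_i,\ \wm_ip_i,\ m_i\wip_i,\ \wm_i\wip_i$; together these yield the $W^{(1)}$ relation of Theorem~\ref{Th: dLagrange W1 basis}. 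Then, comparing the two ``discrete $\sum\dot m_ip_i$'' identities --- the analogues of \eqref{eq: 1} and \eqref{eq: 4} --- and using that $c_i/c_0$ and $C_i/C_0$ are integrals, I would extract a Lagrange analogue of Corollary~\ref{cor K}: a function $K(m,p)$, bilinear-fractional in the $m_ip_i$ over $c_0$, which is an integral of $\Phi_f$.

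With these ingredients in place, the main computation proceeds exactly as in the proof of Theorem~\ref{Th: Clebsch1 W2 basis}: start from $\sum_i(\wwm_i-\wm_i)p_i+\sum_i(\wm_i-m_i)\wwp_i$, expand via \eqref{eq:dLagrange} with the ``one double tilde, one single tilde, two plain'' bookkeeping, add and subtract $\sum_i(\wm_i-m_i)p_i$ to reduce the single-step pieces, and collect. The expected outcome is an identity of the shape
\[
\sum_{i=1}^3 C_i(\wwm_ip_i-m_i\wwp_i)=\sum_{i=1}^3\widetilde C_i\,m_ip_i-\sum_{i=1}^3 c_i\,\wm_i\wwp_i
\]
(possibly with additional $\ga$-dependent remainder terms, which I expect to cancel). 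Using the analogue of \eqref{eq: 3}, the definition of $K$, and the invariances $\widetilde C_i/\widetilde C_0=C_i/C_0$, $\widetilde c_i/\widetilde c_0=c_i/c_0$, the right-hand side rewrites --- exactly as in the Clebsch case --- as $\widetilde C_0\,c_0\,\big(K(m,p)-K(\wm,\wip)\big)$, which vanishes because $K$ is an integral; this proves the relation. One-dimensionality of the null space is then confirmed by checking that the $3\times 3$ matrix with rows $\big(W^{(2)}_i,\ W^{(2)}_i\circ\Phi_f,\ W^{(2)}_i\circ\Phi_f^2\big)$ has rank $2$ at a generic point; equivalently, it follows once one knows, as in Theorem~\ref{Th: dLagrange basis 2}, that $R/S=J_0$ is a non-constant integral, so that the coefficient vector is pinned down up to scale.

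The step I expect to be the genuine obstacle is the bookkeeping of the inhomogeneous $\ga$-terms. In the homogeneous Clebsch case every monomial appearing in the final collection is cubic and the cancellations are essentially ``by symmetry''; here the insertions $\ep\ga(p_2+\wip_2)$ and $-\ep\ga(p_1+\wip_1)$ in the $m$-equations generate lower-degree terms which, together with the $1/m_3$ factors in $r$ and $R$, must be shown by hand to collapse --- using the constancy of $m_3$ and the cross-cancellations produced when the $\dot p$-equations are substituted. Should this turn out to be unwieldy, a fallback consistent with the paper's treatment of the higher-order Wronskian bases is to confirm the identity by symbolic computation.
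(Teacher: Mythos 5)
The paper never proves this Lagrange statement by hand: the detailed proof of a second-order Wronskian HK basis is given only for the first Clebsch flow (Theorem \ref{Th: Clebsch1 W2 basis}), and for the Lagrange top (as for the general Clebsch and Kirchhoff cases) the claim rests on the analogous structure and on symbolic verification. So transplanting the Clebsch argument is indeed the natural route, and your overall architecture --- Lagrange analogues of \eqref{eq: 1}--\eqref{eq: 4}, an analogue of Corollary \ref{cor K}, then the ``one double tilde, one single tilde, two plain'' computation --- is the right thing to try. However, there is a concrete structural gap at your very first step, and it is not the $\gamma$-bookkeeping you flag as the main obstacle.

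In the Clebsch case the identities \eqref{eq: 1}--\eqref{eq: 4} are discrete counterparts of the two \emph{separately valid} continuous identities $\sum_i \dot m_i p_i=0$ and $\sum_i m_i\dot p_i=0$, and the whole computation hinges on each of $\sum_i(\wm_i-m_i)p_i$, $\sum_i m_i(\wip_i-p_i)$, etc.\ being of order $\ep^2$. For the Lagrange top these continuous identities fail: from \eqref{eq: lagrange} one gets $\sum_i\dot m_ip_i=(\alpha-1)m_3(m_2p_1-m_1p_2)$ and $\sum_i m_i\dot p_i=(\alpha-1)m_3(m_1p_2-m_2p_1)$, and even with the weights $(1,1,2\alpha-1)$ both halves equal $(\alpha-1)m_3(m_2p_1-m_1p_2)$; only their difference \eqref{eq: LT W} vanishes. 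Correspondingly $\sum_i(\wm_i-m_i)p_i$ and $\sum_i m_i(\wip_i-p_i)$ for the map \eqref{eq:dLagrange} are $O(\ep)$, not $O(\ep^2)$, so the four separate identities you posit cannot hold in the form stated: whatever replaces them must couple the $\wm_ip_i$- and $m_i\wip_i$-parts (and the third component, with its weight $r$ containing $m_3^{-1}$) from the outset, with a nonzero common remainder that cancels only in the combination. Designing these coupled identities is exactly the missing content of a hand proof; as written, your argument stops where the Clebsch and Lagrange structures genuinely diverge, and the announced fallback (Maple) turns the proposal into a verification plan rather than a proof --- which, to be fair, is also the level of justification the paper itself offers for this case. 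Your remark on one-dimensionality of the null space (generic rank argument, or non-constancy of $J_0$) is fine.
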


There holds a theorem which reads literally as Theorem \ref{Kirchhoff higher Wronskians} on the third order Wronskians HK basis.

\section{Concluding remarks}

We would like to remark that the existence of quadratic-fractional integrals of the Kahan discretizations is a rather common phenomenon which even is not related to integrability. In this connection, we refer to the recent paper \cite{CMOQ4}, where the following result is established.
\begin{theorem}\label{th CQ}
Let two components a quadratic vector field be of the form 
\begin{equation} \label{flow gen gen}
\left\{
\begin{array}{l}
\dot x_1 =  \ell(x)\dfrac{\partial H}{\partial x_2}=\ell(x)(bx_1+cx_2), \vspace{.2truecm} \\
\dot x_2  = -\ell(x)\dfrac{\partial H}{\partial x_1}=-\ell(x)(ax_1+bx_2), 
\end{array}
\right. 
\end{equation}
where $\ell(x)$ is an affine function on $\mathbb R^n$, and 
\begin{equation}\label{H hom}
H(x_1,x_2)= \frac{1}{2}(ax_1^2+2bx_1x_2+cx_2^2)
\end{equation}
is a quadratic form of $x_1$, $x_2$. The Kahan discretization of this vector field, with the first two equations of motion
\begin{equation}\label{map gen gen}
\left\{
\begin{array}{l}
\tilde x_1- x_1 = \epsilon \ell(x)(b\wx_1+c\wx_2) + \epsilon \ell(\wx) (bx_1+cx_2) , \vspace{.2truecm} \\
\tilde x_2 - x_2 = -\epsilon \ell(x) (a\wx_1+b\wx_2) -\epsilon \ell(\wx) (ax_1+bx_2),
\end{array}
\right. 
\end{equation}
admits an quadratic-fractional integral of motion 
\begin{equation}
F(x,\ep)= \frac{ax_1^2+2bx_1x_2+cx_2^2}{1 +\ep^2 (ac-b^2) \ell^2(x)}.
\end{equation}
\end{theorem}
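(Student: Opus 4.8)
The plan is to verify directly that $F\circ\Phi_f=F$. Since $F$ depends on $x_1,x_2$ only, after clearing denominators the assertion $F(\wx,\ep)=F(x,\ep)$ (with $\wx=\Phi_f(x,\ep)$) is equivalent to the polynomial identity
\begin{equation}\label{eq: CQ target}
2H(\wx)-2H(x)=\ep^{2}(ac-b^{2})\bigl(\ell^{2}(\wx)\,2H(x)-\ell^{2}(x)\,2H(\wx)\bigr),
\end{equation}
where $2H(x)=ax_1^2+2bx_1x_2+cx_2^2$. Only the first two equations of \eqref{map gen gen} will enter, and in them $\ell(x)$, $\ell(\wx)$ occur merely as scalar coefficients, so I would set $L=\ell(x)$, $\widetilde L=\ell(\wx)$ and treat them as indeterminates (affineness of $\ell$ is used only to make \eqref{flow gen gen} quadratic, hence \eqref{map gen gen} its genuine Kahan discretization). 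Let $P=\partial_1H(x)=ax_1+bx_2$, $Q=\partial_2H(x)=bx_1+cx_2$, let $\widetilde P,\widetilde Q$ be the same expressions in $\wx$, and let $\Pi=2H(x,\wx)=ax_1\wx_1+b(x_1\wx_2+x_2\wx_1)+cx_2\wx_2$ be the polarization of $2H$; the two relevant equations of motion then read $\wx_1-x_1=\ep(L\widetilde Q+\widetilde LQ)$ and $\wx_2-x_2=-\ep(L\widetilde P+\widetilde LP)$.

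The first step is to evaluate the left-hand side of \eqref{eq: CQ target}. Since $H$ is a quadratic form, $2H(\wx)-2H(x)=\bigl(\nabla H(\wx)+\nabla H(x)\bigr)\cdot(\wx-x)=(\widetilde P+P)(\wx_1-x_1)+(\widetilde Q+Q)(\wx_2-x_2)$. Substituting the two equations of motion, expanding, and cancelling the monomials $L\widetilde P\widetilde Q$ and $\widetilde LPQ$ (each occurring with both signs), the result collapses to $\ep(\widetilde L-L)(\widetilde PQ-P\widetilde Q)$; and a one-line expansion gives $\widetilde PQ-P\widetilde Q=(ac-b^{2})\,W$ with $W:=\wx_1x_2-x_1\wx_2$. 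Hence
\begin{equation}\label{eq: CQ stepA}
2H(\wx)-2H(x)=\ep(ac-b^{2})(\widetilde L-L)\,W.
\end{equation}
The second step is to write the skew quantity $W$ in two ways. Forming $x_2(\wx_1-x_1)-x_1(\wx_2-x_2)$ from the equations of motion and using the Euler identities $x_1P+x_2Q=2H(x)$, $x_1\widetilde P+x_2\widetilde Q=\Pi$ gives $W=\ep\bigl(L\Pi+\widetilde L\,2H(x)\bigr)$; forming instead $\wx_2(\wx_1-x_1)-\wx_1(\wx_2-x_2)$ and using $\wx_1\widetilde P+\wx_2\widetilde Q=2H(\wx)$, $\wx_1P+\wx_2Q=\Pi$ gives $W=\ep\bigl(L\,2H(\wx)+\widetilde L\Pi\bigr)$.

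Finally, I would eliminate $\Pi$: multiplying the first expression for $W$ by $\widetilde L$, the second by $L$, and subtracting yields $(\widetilde L-L)\,W=\ep\bigl(\widetilde L^{2}\,2H(x)-L^{2}\,2H(\wx)\bigr)$, and inserting this into \eqref{eq: CQ stepA} produces exactly \eqref{eq: CQ target}. I do not anticipate a genuine obstacle: the whole argument reduces to a finite polynomial manipulation in $x_1,x_2,\wx_1,\wx_2,L,\widetilde L$, and the one point requiring care is the tracking of cancellations in the derivation of \eqref{eq: CQ stepA}. It is cleanest to establish \eqref{eq: CQ stepA} and both formulas for $W$ as identities valid for arbitrary scalars $L,\widetilde L$ subject to the two Kahan relations, and only then combine them; from this vantage point \eqref{eq: CQ stepA} and the two $W$-formulas are visibly the discrete counterparts of $\dot H=0$ and $x_1\dot x_2-x_2\dot x_1=-2\ell(x)H(x)$ for the flow \eqref{flow gen gen}, which is the structural reason the verification must close.
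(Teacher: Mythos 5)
Your proof is correct --- I verified each step: the cross-multiplied form of $F(\wx,\ep)=F(x,\ep)$; the collapse of $(\widetilde P+P)(\wx_1-x_1)+(\widetilde Q+Q)(\wx_2-x_2)$ to $\ep(\widetilde L-L)(\widetilde P Q-P\widetilde Q)$ with $\widetilde P Q-P\widetilde Q=(ac-b^2)(\wx_1x_2-x_1\wx_2)$; the two evaluations of $W=\wx_1x_2-x_1\wx_2$ obtained by pairing the Kahan relations with $(x_2,-x_1)$ and with $(\wx_2,-\wx_1)$ and using the Euler identities; and the elimination of the polarization $\Pi$, which assembles exactly into the target identity. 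Note, however, that the paper does not prove Theorem \ref{th CQ} at all: it quotes the result from \cite{CMOQ4}, and proves only the companion statement $\widehat F(x,\ep)=F(x,\ep)$. Your argument follows the same general template as that companion proof (cross-multiply, substitute the two equations of \eqref{map gen gen}, and reduce everything to the skew quantity $x_1\wx_2-x_2\wx_1$ multiplied by $(ac-b^2)$ and the $\ell$'s), but it is organized differently: the paper transforms the two sides of its cross-multiplied identity separately and matches them, whereas you first isolate the increment formula $2H(\wx)-2H(x)=\ep(ac-b^2)\bigl(\ell(\wx)-\ell(x)\bigr)W$ and then eliminate the polarization from the two expressions for $W$. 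This organization makes the discrete analogues of $\dot H=0$ and $x_1\dot x_2-x_2\dot x_1=-2\ell(x)H(x)$ explicit, makes clear that nothing beyond the two Kahan relations is used (with $\ell(x)$, $\ell(\wx)$ entering only as scalars, so affineness of $\ell$ is indeed irrelevant to the verification), and supplies a self-contained proof of the cited result together with intermediate identities for $W$ that would serve equally well in the paper's proof of the $\widehat F$ theorem.
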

Actually, their result holds true for any (not necessarily homogeneous) quadratic polynomial $H(x_1,x_2)$, but this generalization easily follows by shift of variables. This result does not depend on equations of motion for $x_k$ with $3\le k\le n$, and therefore it is unrelated to integrability. It turns out that the procedure described in Observation \ref{conjecture 1} (polarization of the quadratic polynomials in the numerator and in the denominator, accompanied by the change $\epsilon^2\to -\epsilon^2$) works for the whole class of vector fields described in Theorem \ref{th CQ}, but, amazingly, it does not lead to new integrals. We have the following result.
\begin{theorem} 
On orbits of the map \eqref{map gen gen}, we have: $\widehat F(x,\ep)= F(x,\ep)$, where
\begin{equation}
\widehat F(x,\ep)= \frac{ax_1\wx_1+b(x_1\wx_2+\wx_1x_2)+cx_2\wx_2}{1 - \epsilon^2 (ac-b^2) \ell(x)\ell(\wx)} .
\end{equation}
\end{theorem}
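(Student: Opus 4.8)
The plan is to reduce the identity $\widehat F=F$ to two elementary algebraic relations that follow from the first two equations of motion in \eqref{map gen gen} alone, treating $\ell(x)$ and $\ell(\wx)$ as scalars (the equations for the coordinates $x_k$ with $k\ge 3$ play no role, which is why, just like Theorem~\ref{th CQ}, the statement is unrelated to integrability). Write $H(x)=\tfrac12(ax_1^2+2bx_1x_2+cx_2^2)$ as in \eqref{H hom} and let $\widehat H(x,\wx)=\tfrac12\bigl(ax_1\wx_1+b(x_1\wx_2+\wx_1x_2)+cx_2\wx_2\bigr)$ be its polarization, so that $F=2H(x)\big/\bigl(1+\ep^2(ac-b^2)\ell^2(x)\bigr)$ and $\widehat F=2\widehat H(x,\wx)\big/\bigl(1-\ep^2(ac-b^2)\ell(x)\ell(\wx)\bigr)$. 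After clearing denominators, the claim becomes
\[
\widehat H(x,\wx)\bigl(1+\ep^2(ac-b^2)\ell^2(x)\bigr)=H(x)\bigl(1-\ep^2(ac-b^2)\ell(x)\ell(\wx)\bigr),
\]
and I would obtain this from the two identities described next.

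The first is a discrete ``Wronskian'' identity. Starting from
\[
2\bigl(\widehat H(x,\wx)-H(x)\bigr)=x_1\bigl(a(\wx_1-x_1)+b(\wx_2-x_2)\bigr)+x_2\bigl(b(\wx_1-x_1)+c(\wx_2-x_2)\bigr),
\]
I would substitute the first two equations of motion and observe that the linear combinations $a(\wx_1-x_1)+b(\wx_2-x_2)$ and $b(\wx_1-x_1)+c(\wx_2-x_2)$ collapse — every term proportional to $ab$ or $bc$ cancels — yielding $a(\wx_1-x_1)+b(\wx_2-x_2)=\ep(ac-b^2)(\ell(x)\wx_2+\ell(\wx)x_2)$ and $b(\wx_1-x_1)+c(\wx_2-x_2)=-\ep(ac-b^2)(\ell(x)\wx_1+\ell(\wx)x_1)$. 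Feeding this back in, the $\ell(\wx)$-terms recombine into $x_1x_2-x_2x_1=0$ and what remains is
\[
\widehat H(x,\wx)-H(x)=\tfrac12\,\ep\,(ac-b^2)\,\ell(x)\,(x_1\wx_2-x_2\wx_1).
\]
The second identity expresses the discrete Wronskian itself: writing $x_1\wx_2-x_2\wx_1=x_1(\wx_2-x_2)-x_2(\wx_1-x_1)$ and again substituting the first two equations of motion, the quadratic form $H$ and its polarization $\widehat H$ reappear, and one gets
\[
x_1\wx_2-x_2\wx_1=-2\ep\bigl(\ell(x)\widehat H(x,\wx)+\ell(\wx)H(x)\bigr).
\]

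Substituting the second identity into the first gives $\widehat H(x,\wx)-H(x)=-\ep^2(ac-b^2)\ell(x)\bigl(\ell(x)\widehat H(x,\wx)+\ell(\wx)H(x)\bigr)$, which is exactly the cross-multiplied form displayed above; dividing by the nonvanishing denominators yields $\widehat F(x,\ep)=F(x,\ep)$. I do not anticipate a genuine obstacle: the argument is a short direct computation once one isolates the right intermediate object, the discrete Wronskian $w=x_1\wx_2-x_2\wx_1$, which plays for the map \eqref{map gen gen} the role that $x_1\dot x_2-x_2\dot x_1=-2\ell(x)H(x)$ plays for the flow \eqref{flow gen gen}. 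The only point requiring care is the bookkeeping in the first identity — verifying that the two combinations of the right-hand sides of \eqref{map gen gen} really reduce to multiples of $(ac-b^2)$ with no residual $ab$- or $bc$-terms, and that $\ep$ enters only through $\ep^2$; the latter is exactly the feature that makes $\widehat H$ and $H$ functions of $\ep^2$ and places this computation within the scope of Observation~\ref{conjecture 1}.
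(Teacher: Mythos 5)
Your proposal is correct and is essentially the paper's own argument: after cross-multiplying, both rely on the first two equations of motion to reduce everything to the discrete Wronskian $x_1\wx_2-x_2\wx_1$, your first identity being the paper's transformation of the left-hand side and your second identity being its transformation of the right-hand side read in reverse. The only difference is organizational (you chain two lemmas, the paper reduces both sides to the common expression $\epsilon(ac-b^2)\ell(x)(x_1\wx_2-x_2\wx_1)$), so no substantive comparison is needed.
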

\begin{proof} Relation
$$
\frac{ax_1\wx_1+b(x_1\wx_2+x_2\wx_1)+cx_2\wx_2}{1-\epsilon^2 (ac-b^2) \ell(\tilde x)\ell(x)}=\frac{ax_1^2+2bx_1x_2+cx_2^2}{1+\epsilon^2 (ac-b^2) \ell^2(x)}
$$
is equivalent to
\begin{eqnarray*}
&&ax_1(\wx_1-x_1)+bx_1(\wx_2-x_2)+bx_2(\wx_1-x_1)+cx_2(\wx_2-x_2)\\
&&=-\ep^2(ac-b^2)\ell(x)\Big(ax_1\big(\wx_1\ell(x)+x_1\ell(\wx)\big)+bx_1\big(\wx_2\ell(x)+x_2\ell(\wx)\big)\\
&&\quad+bx_2\big(\wx_1\ell(x)+x_1\ell(\wx)\big)+cx_2\big(\wx_2\ell(x)+x_2\ell(\wx)\big)\Big).
\end{eqnarray*}
We transform the left-hand side, using equations of motion \eqref{map gen gen}:
\begin{eqnarray*}
&&(ax_1+bx_2)(\wx_1-x_1)+(bx_1+cx_2)(\wx_2-x_2)\\
&&=\epsilon(ax_1+bx_2)\Big((b\wx_1+c\wx_2)\ell(x) + (bx_1+cx_2) \ell(\wx)\Big)\\
&& \quad -\epsilon(bx_1+cx_2)\Big((a\wx_1+b\wx_2)\ell(x) + (ax_1+bx_2) \ell(\wx)\Big)\\
&& =\epsilon(ac-b^2)(x_1\wx_2-x_2\wx_1)\ell(x).
\end{eqnarray*}
A similar transformation of the right-hand side leads to:
\begin{eqnarray*}
&& -\ep^2(ac-b^2)\ell(x)\Big(x_1\big((a\wx_1+b\wx_2)\ell(x)+(ax_1+bx_2)\ell(\wx)\big)\\
&& \quad+x_2\big((b\wx_1+c\wx_2)\ell(x)+(bx_1+cx_2)\ell(\wx)\big)\Big)\\
&& = \epsilon (ac-b^2)\ell(x)\Big(x_1(\wx_2-x_2)-x_2(\wx_1-x_1)\Big)\\
&& = \epsilon (ac-b^2)\ell(x)(x_1\wx_2-x_2\wx_1).
\end{eqnarray*}
This proves the theorem.
\end{proof}

The latter result makes the applicability of Observation \ref{conjecture 1} all the more intriguing. In the majority of cases when the Kahan discretization possesses a quadratic-fractional integral of motion $F(x,\epsilon)$, the polarization of the latter (i.e., the function $\widehat F(x,\ep)$ obtained by the polarization of  the numerator and of the denominator of $F(x,\ep)$, accompanied by the change $\ep^2\to -\ep^2$) turns out to be an integral, as well. Further examples of this observation are delivered by integrable systems with the Lax representation $\dot{L}=[L^2,A]$ in the following situations: 
\begin{itemize}
\item$L$ a $3\times 3$ or a $4\times 4$ skew-symmetric matrix and $A$ a constant diagonal matrix (Euler top on the algebras $so(3)$ and $so(4)$);
\item $L$ a symmetric $3\times 3$ matrix and $A$ is a constant skew-symmetric $3\times 3$ matrix  \cite{BI, BBIMR};
\item $L$ a $3\times 3$ matrix with vanishing diagonal and $A$ a constant diagonal matrix (3-wave system, see \cite{PPS2}).
\end{itemize}
The only counterexample we are aware of, is given by the system $\dot{L}=[L^2,A]$ with a general $3\times 3$ matrix $L$ and a constant diagonal matrix $A$ \cite{AI}. For this system, polarization applied to quadratic-fractional integrals of the Kahan-Hirota-Kimura discretization (there are two independent such integrals) does not lead to integrals of motion.

Clarifying all the mysterious observations related to the Kahan-Hirota-Kimura discretization remains an intriguing and a rewarding task.

\section*{Acknowledgements}
This research is supported by the DFG Collaborative Research Center TRR 109 ``Discretization in Geometry and Dynamics''.

\end{document}